\documentclass[11pt,letterpaper]{article}

\usepackage[utf8]{inputenc}
\usepackage[T1]{fontenc}
\usepackage{amsmath,amssymb,amsthm,amsfonts,mathrsfs,bm}

\usepackage[margin=1in]{geometry}

\usepackage{graphicx}
\usepackage{booktabs}
\usepackage{float}
\usepackage{array}
\usepackage{xcolor}
\usepackage{tikz}
\usepackage{enumerate}
\usepackage[linesnumbered,ruled,vlined]{algorithm2e}
\usepackage[hidelinks]{hyperref}

\newtheorem{theorem}{Theorem}[section]
\newtheorem{lemma}[theorem]{Lemma}

\newtheorem{corollary}[theorem]{Corollary}

\newcommand{\alg}{\mathcal{A}}
\newcommand{\pattern}{\mathcal{P}}
\newcommand{\cycf}{\mathcal{F}_{\mathrm{cyc}}}

\SetAlgorithmName{Algorithm}{Algorithm}{List of Algorithms}
\title{Communication-Oblivious Permutation and Sorting}
\author{
  Qi Dong\textsuperscript{1}\quad
  Yilei Wang\textsuperscript{2}\quad
  Ke Yi\textsuperscript{1}\thanks{Corresponding author.}\\[0.6em]
  \small\textsuperscript{1}The Hong Kong University of Science and Technology, Hong Kong SAR\\
  \small\textsuperscript{2}ByteDance, Hangzhou, China\\[0.3em]
  \small\href{mailto:qdongaf@connect.ust.hk}{\nolinkurl{qdongaf@connect.ust.hk}}\quad
  \href{mailto:wangyilei.wyl@bytedance.com}{\nolinkurl{wangyilei.wyl@bytedance.com}}\quad
  \href{mailto:yike@cse.ust.hk}{\nolinkurl{yike@cse.ust.hk}}
}
\date{}
\hypersetup{
  pdftitle={Communication-Oblivious Permutation and Sorting},
  pdfauthor={Qi Dong, Yilei Wang, Ke Yi}
}

\begin{document}

\maketitle

\begin{abstract}
With the help of secure hardware such as trusted execution environments (TEEs), private data can be uploaded and computed in untrusted cloud clusters, so that an adversary controlling the cloud can only see the data in encrypted form.  However, the communication pattern between the processors may still reveal sensitive information about the data, so it is important to develop parallel algorithms in the BSP model whose communication pattern is independent of the input data, a property known as \textit{communication obliviousness}. In this paper, we study such algorithms for three fundamental problems: permutation, sorting, and 0-1 sorting. For the first two problems, we prove a tight (up to lower-order terms) communication complexity of $(2-2/p)n\pm o(n)$.  Here, $p$ is the number of processors and $n$ is the input size, assuming $2\le p$ and $p=o(n^{1/3})$.  Our lower bound holds regardless of the number of rounds and local memory size, while for balanced permutation and sorting the upper bound is achieved in a constant number of rounds with load at most $n/p+o(n/p)$.  For 0-1 sorting, we also derive a tight communication complexity of $C_{\mathrm{cut}}(p) \cdot n \pm o(n)$, where $C_{\mathrm{cut}}(p)$ is the optimal value of a linear program parameterized by $p$. Although there appears to be no closed-form for $C_{\mathrm{cut}}(p)$, we show that it is  strictly smaller than $2-2/p$ for all $p\ge 3$, thus separating 0-1 sorting from general sorting in terms of their communication complexity.

\end{abstract}

\section{Introduction}

Large-scale analytics increasingly relies on cloud clusters because the underlying datasets and computations exceed the capacity of a single machine.  A standard model for computation using a cluster is the Bulk Synchronous Processing (BSP) model \cite{10.1145/79173.79181}, also known as the Massively Parallel Computation (MPC) model \cite{10.1145/2594538.2594558,mpc4mpc}.  In this model, which will be defined more precisely in Section \ref{sec:preliminaries}, there are $p$ processors, each holding $n/p$ data initially and (ideally) also in the end. Computation proceeds in synchronous \textit{rounds}, where each round consists of a local computation stage and a communication stage. Good BSP algorithms should have a small number of rounds, low total communication cost, and low memory requirement for each processor.  For example, the classical sample sort algorithm runs in 3 rounds, has a communication cost of $n + o(n)$, and requires $n/p + o(n/p)$ local memory per processor, assuming the parameter range $2\le p \le \tilde{o}(\sqrt{n})$\footnote{The  $\tilde{ }$  notation hides logarithmic factors.} \cite{regularSampleSort}.  Note that both its communication cost and local memory requirement are clearly optimal up to lower-order terms.

Datasets uploaded to a cloud often contain sensitive medical, financial, behavioral, or organizational information. While the data can be encrypted during communication and local computation,
the adversary can still infer sensitive information from the communication pattern, which is known as side-channel attacks \cite{7163052,10.1145/2810103.2813695}.  Most cloud-based secure distributed computation systems suffer from such attacks as they are based on Trusted Execution Environments (TEEs), which prevent the untrusted cloud from seeing the data in plaintext but still reveal communication patterns.  Thus, it is important to design algorithms whose communication pattern is independent of the input data, a property known as \textit{communication obliviousness}.  The sample sort algorithm, however, is not communication-oblivious (henceforth simply called oblivious): The communication pattern reveals which processors originally had the smallest $n/p$ items, the second smallest $n/p$ items, etc.  A trivial oblivious algorithm sends all data to one processor, sorts them locally, and distributes them back in sorted order.  This algorithm incurs $(2-2/p)n$ communication, (almost) twice as much as non-oblivious sort, and more critically, requires $n$ local memory, forfeiting all parallelism.  Better parallel oblivious sorting algorithms have been designed \cite{columnsort,Ngai} using $n/p+o(n/p)$ local memory, but the best communication cost they can achieve is $3n$.

We use \emph{load}, the maximum number of words received by a processor in a round, to measure per-processor communication. To keep this communication proportional to the local data size, we seek algorithms with $O(n/p)$ load and a small constant factor. For balanced permutation, sorting, and 0–1 sorting, our algorithms achieve load at most $n/p+o(n/p)$ while attaining the optimal leading communication cost.

\subsection{Our Results}

We study oblivious BSP algorithms for three fundamental problems: permutation, sorting, and 0-1 sorting.  Both permutation and 0-1 sorting can be considered as special cases of sorting, where the keys of the items are drawn from $\{1,\dots,p\}$ and $\{0,1\}$, respectively.  For general sorting, the keys are taken from a totally ordered domain and we are only allowed to use comparisons to deduce their ordering.

\paragraph{Permutation and sorting.}
We first prove a \((2-2/p)n\) lower bound on the communication cost for oblivious permutation, regardless of the number of rounds and local memory size.  The same lower bound thus also holds for sorting, meaning that the trivial algorithm above that sends all data to one processor is already optimal in terms of the communication complexity.  Then we design better parallel oblivious permutation and sorting algorithms with a communication cost of $(2-2/p)n+o(n)$, assuming $2\le p$ and $p=o(n^{1/3})$.  For balanced permutation and sorting, these algorithms have load at most $n/p+o(n/p)$.  The communication cost is thus optimal up to lower order terms (even for a constant $p$).  The permutation algorithm is deterministic and perfectly communication-oblivious, while the sorting algorithm is randomized with negligible failure and non-oblivious probability.  The lower bound holds for deterministic algorithms, but can be extended to randomized algorithms (with negligible failure and non-oblivious probability) by just adding an $-o(n)$ term.

\paragraph{0-1 sorting.}
The situation of oblivious 0-1 sorting is more intricate.  We prove a tight communication complexity of \(C_{\mathrm{cut}}(p)\cdot n \pm o(n)\), where $C_{\mathrm{cut}}(p)$ is the optimum of a linear program (LP) parameterized by $p$. The lower bound again holds for both deterministic and randomized algorithms, regardless of the number of rounds and local memory size.  Our algorithm achieving the upper bound, under the regime $2\le p$ and $p=o(n^{1/3})$, has load $o(n/p)$ and uses $p\cdot\omega(1)$ rounds.  More generally, there is a trade-off between the number of rounds and the communication cost; please see Theorem \ref{thm:01-original-coefficient} for the precise statement.

Finally, although we do not have a closed-form for $C_{\mathrm{cut}}(p)$, we are able to show that
\[
\frac1p\left\lfloor\frac{4p-2}{3}\right\rfloor
\le C_{\mathrm{cut}}(p)
\le
\min\left\{
\frac32,\,
2-\frac2p-\frac16
\right\},
\]
for all \(p\ge3\). Consequently, for every \(p\ge5\), the leading communication coefficient of oblivious 0-1 sorting is strictly between those of non-oblivious sorting and oblivious general sorting.

\subsection{Related Work}

\paragraph{Communication-oblivious BSP algorithms.}
BSP/MPC algorithms have been extensively studied in both the algorithms and the systems literature, as the model resembles many big-data computing platforms such as MapReduce and Spark \cite{10.5555/1873601.1873677,10.1145/1327452.1327492,10.5555/2228298.2228301}.  There are generic techniques \cite{mpc4mpc,tao2024parallel} that can convert BSP algorithms to oblivious ones with constant-factor overhead under suitable parameter assumptions. Applying either compiler to the one-round non-oblivious balanced permutation algorithm gives roughly $4n$ communication, which does not attain the optimal $(2-2/p)n+o(n)$ communication cost obtained here.

\paragraph{RAM-oblivious algorithms.}
In addition to the communication pattern, side-channel information also includes the memory access pattern.  To differentiate, we refer to the latter as RAM-oblivious algorithms.  While we only focus on communication obliviousness in this paper, one can combine our algorithms with RAM-oblivious algorithms for the local computation part to achieve full obliviousness.

RAM-oblivious algorithms enjoy a longer history.  ORAM is a generic technique that can make any algorithm RAM-oblivious by increasing its running time by an $O(\log n)$ factor \cite{10.1145/233551.233553,optorama2020}.  Alternatively, custom RAM-oblivious algorithms have been designed for many problems, including permutation, sorting, and 0-1 sorting (a.k.a. compaction), with the former two in $\Theta(n\log n)$ time \cite{benes1964permutation,10.1145/800061.808726} and the latter in $\Theta(n)$ time \cite{asharov2020tightcompaction,dittmer2020tightcompaction}.  However, the hidden constants are still very large and far from optimal: The best oblivious sorting algorithm is randomized shell sort \cite{goodrich2011randomized}, which performs $24\cdot n\log n$ comparisons. This also motivates our assumption $p=o(n^c)$: when $p$ is close to $n$ and local memory is limited, communication-oblivious algorithms resemble RAM-oblivious algorithms, with communication cost corresponding to memory-access cost.

For example, consider the realistic model with $n=p$ and $O(1)$ total local memory per processor. Each processor's state fits in a constant-size RAM block, and each word communicated can be simulated with $O(1)$ RAM accesses. Conversely, a linear-space RAM algorithm with constant private workspace can be simulated by distributing its memory across the processors. These simulations preserve fixed oblivious patterns with constant overhead in data movement. For indivisible records, the resulting optimal deterministic oblivious communication bounds are $\Theta(n\log n)$ for balanced permutation and general sorting, matching their RAM-access bounds~\cite{benes1964permutation,10.1145/800061.808726}, and $\Theta(n)$ for unstable 0-1 sorting~\cite{asharov2020tightcompaction,dittmer2020tightcompaction}.

\paragraph{Distributed oblivious analytics.}
Systems such as Opaque, SODA, and Jodes motivate commu\-nication-efficient primitives for oblivious distributed analytics~\cite{opaque,soda,jodes2020}. Our algorithms provide building blocks for this setting when combined with RAM-oblivious local computation. Ngai et al.\ develop oblivious sorting and shuffling across multiple enclaves~\cite{Ngai}. Our work establishes tight leading-order upper and lower bounds on total communication for permutation, sorting, and 0-1 sorting.

\section{Preliminaries}
\label{sec:preliminaries}

\subsection{Model of Computation}

Throughout this work, $n$ denotes the number of input items and $p\ge2$ the number of processors. We assume that $p\mid n$. %
For a nonnegative integer $m$, write $[m]:=\{1,\ldots,m\}$, with $[0]:=\emptyset$.
We consider the bulk-synchronous parallel (BSP) model~\cite{10.1145/79173.79181}, in which the processors collaboratively execute an algorithm $\alg$ on input $X=(X_1,\ldots,X_p)$, where processor $i$ initially holds $X_i$. We assume $|X_i|=n/p$ for every $i\in[p]$. For all the problems considered in this paper, the output $\alg(X)$ is a permutation of $X$.

The computation may take multiple rounds. Let $X^r=(X_1^r,\ldots,X_p^r)$ denote the data distribution at the beginning of round $r$; in particular,  $X_i^1:=X_i$ for every $i\in[p]$.
In round $r$, processor $i$
sends $Y_{i\to j}^r \subseteq X_i^r$ to processor $j$ for $j\ne i$, and retains $Y_{i\to i}^r \subseteq X_i^r$ locally. The items held by processor $j$ at the beginning of round $r+1$ are therefore given by $X_j^{r+1}=\bigcup_{i\in[p]}Y_{i\to j}^r$, for every $j\in[p]$.
The final output is $\alg(X):=(X_1^{R+1},\ldots,X_p^{R+1})$.

We consider two versions of the model. In the \textit{ideal} model, the algorithm is given complete knowledge of the data distribution; equivalently, an oracle provides the processors with routing instructions at no communication cost.
In the \textit{realistic} model, processors make routing decisions locally and may need to receive auxiliary information from other processors.
We measure communication in words, assuming that each input item fits in one $O(\log n)$-bit word. All our lower bounds hold in the ideal model, whereas the upper bounds hold in the realistic model.

The \emph{communication pattern} of $\alg$ on input $X$ is $\pattern_{\alg}(X):=(L^1,\dots,L^R)$, where $L^r=(L_{ij}^r)_{i,j\in[p]}$ and $L_{ij}^r$ is the total length, in words, of the message from processor $i$ to processor $j$ in round $r$, including the items in $Y_{i\to j}^r$ and auxiliary words, possibly padded with dummy tuples for achieving obliviousness; for every $i\in[p]$, we set $L_{ii}^r=0$.
Algorithms may be randomized, in which case $\pattern_{\alg}(X)$ is a random variable induced by the algorithm's internal randomness.

Algorithms in the BSP model are evaluated by three measures. The first is the number of rounds $R$, ideally a constant.
For a fixed pattern $\pattern=(L^1,\ldots,L^R)$, define its communication cost by $C(\pattern):=\sum_{r,i,j}L_{ij}^r$.
The second measure is the communication cost $C(\alg):=\max_X\mathbb E[C(\pattern_{\alg}(X))]$, where the expectation is over the algorithm's internal randomness.
The third is the load, defined by $\operatorname{load}(\alg):=\max_{X,\,r\in[R],\,j\in[p]}\sum_{i\ne j}L_{ij}^r$, where the maximum also ranges over internal random choices.

\paragraph{Communication obliviousness.}
For any problem in which $\alg(X)$ is a permutation of $X$, there is a trivial algorithm with communication cost at most $n$ in the ideal model: simply send each item to its target processor. Its communication pattern, however, depends on the input. We are interested in algorithms whose communication patterns are independent of, or \textit{oblivious} to, the input. Formally, let $\Phi(X)$ be the \emph{leakage function}, which specifies the information that an adversary is allowed to observe. An algorithm~$\alg$ is \emph{$\delta$-statistically communication-oblivious} (or simply $\delta$-oblivious) with respect to~$\Phi$ if, for every pair of inputs $X,X'$ such that $\Phi(X)=\Phi(X')$, we have $\Delta\bigl(\pattern_{\alg}(X),\,\pattern_{\alg}(X')\bigr)
\;\le\;
\delta$, where $\Delta(\mathsf{D}_1,\mathsf{D}_2):=
\frac{1}{2}\sum_{s}\bigl|\Pr[\mathsf{D}_1=s]-\Pr[\mathsf{D}_2=s]\bigr|$ is the statistical distance between the distributions of the random variables $\mathsf{D}_1$ and~$\mathsf{D}_2$. We also call $\delta$ the non-oblivious probability. A function \(\delta=\delta(n)\) is \emph{negligible} if \(\delta=n^{-\omega(1)}\), i.e., if for every fixed \(c>0\), it satisfies \(\delta\le n^{-c}\) for all sufficiently large \(n\). An algorithm with $\delta=0$ is \emph{perfectly oblivious}; note that a deterministic oblivious algorithm must be perfectly oblivious.

\subsection{Problems}

This paper investigates the following three problems. In each problem, $t_x$ denotes the key of item $x$.

\paragraph{Permutation.}
In this problem, each item $x$ is given a key $t_x\in[p]$, and the algorithm is required to place $x$ on processor $t_x$. The leakage function is the destination-count vector \(\Phi(X)=D=(d_1,\ldots,d_p)\), where \(d_i:=\sum_{j\in[p]}|\{x\in X_j:t_x=i\}|\). We allow the final item counts to be revealed, while hiding the assignment of individual items beyond these counts.
If only $n$ were revealed, any perfectly oblivious algorithm would require every processor to receive at least $n-n/p$ words in total, since any processor could be the destination of all $n$ items. This forces at least $(p-1)n$ total communication, attained in one round by routing items directly to their destinations and padding every message between distinct processors to $n/p$ words.
A special case, which we call \textit{balanced permutation}, has $d_i=n/p$ for every $i\in[p]$. Since $D$ is then determined by $n$ and $p$, its leakage function reduces to $\Phi(X)=n$.

\paragraph{Sorting.}
Each item $x$ is given a unique key $t_x$ from a totally ordered domain, and the algorithm must place the items in nondecreasing key order: for every $1\le i<j\le p$, all keys on processor $i$ are no greater than those on processor $j$. We further require each processor to hold exactly $n/p$ items upon termination. The leakage function is $\Phi(X):=n$. It should be clear that, in the ideal model, sorting is equivalent to balanced permutation, since the oracle knows the destination of each item.

\paragraph{0-1 sorting.}
The 0-1 sorting problem is the special case of sorting where every item $x$ has key $t_x\in\{0,1\}$. Note that ties can be broken arbitrarily, i.e., the sorting is allowed to be unstable.  As will be seen later, our permutation lower bound also holds for stable 0-1 sorting, so stableness makes 0-1 sorting as hard as general sorting. Its leakage function is likewise $\Phi(X):=n$.

\subsection{General Reductions}
\label{sec:general-reductions}

We state two general reductions that simplify the classes of algorithms we must consider. They adapt standard arguments from the oblivious-algorithms literature~\cite{foundationDO} to our setting.

The first shows that there is little difference between Monte Carlo and Las Vegas algorithms in our setting.

\begin{lemma}[Failure removal]\label{lem:failure_removal}
Suppose there is an $R$-round, $\delta$-oblivious algorithm $\alg$ that may fail with probability at most $\delta'$ and outputs a failure flag upon failure. Let $\alg_{\mathrm{fb}}$ be an always-correct non-oblivious algorithm using at most $R_{\mathrm{fb}}$ rounds. Then there is an always-correct, $(\delta+2\delta')$-oblivious algorithm using at most $R+R_{\mathrm{fb}}$ rounds and having communication cost at most $C(\alg)+\delta' C(\alg_{\mathrm{fb}})$. This holds in both the realistic and ideal models.
\end{lemma}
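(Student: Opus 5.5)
The natural construction is to run $\alg$ first and, only if it fails, run the fallback on the original input. Call the combined algorithm $\alg'$. To make this possible, every processor keeps a copy of its original input $X_i$ throughout the execution of $\alg$. This costs local memory but no communication.

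After the $R$ rounds of $\alg$, the failure flag must be known to everyone. In the ideal model it is known to the oracle. In the realistic model I would assume, as the statement's "outputs a failure flag" suggests, that the flag is part of the output visible to all processors. If it is held by only one processor, it can be broadcast in round $R$ by piggybacking one word on every message. This changes the pattern only by an additive constant per message, independent of the input, and it costs $O(p)$ words, which is a lower-order term. Any extra words of this kind should be folded into $C(\alg)$ or treated as negligible.

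If the flag signals failure, all processors discard $\alg$'s output and run $\alg_{\mathrm{fb}}$ on the stored $X$. Otherwise they do nothing. The round count is then at most $R+R_{\mathrm{fb}}$. The algorithm is always correct, because either $\alg$ succeeded or $\alg_{\mathrm{fb}}$ produced a correct output.

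For the communication cost, $\mathbb E[C(\pattern_{\alg'}(X))]=\mathbb E[C(\pattern_\alg(X))]+\Pr[\text{fail}]\cdot C(\pattern_{\alg_{\mathrm{fb}}}(X))$. This is at most $C(\alg)+\delta' C(\alg_{\mathrm{fb}})$. The fallback's internal randomness, if any, is independent of the failure event, so the expectation factors.

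Obliviousness is the step needing care. Write $\pattern_{\alg'}(X)=(\pattern_\alg(X),Q(X))$, where $Q$ is the pattern of the fallback phase. On success, $Q$ is the all-empty (zero) pattern $\bot$ over $R_{\mathrm{fb}}$ rounds, or equivalently the transcript simply ends. Fix $X,X'$ with $\Phi(X)=\Phi(X')$. Couple the two executions as follows. First take an optimal coupling of $\pattern_\alg(X)$ and $\pattern_\alg(X')$, under which they differ with probability at most $\delta$. Then extend it so that the fallback parts are determined by the failure events.

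The patterns of $\alg'$ can differ only if (i) the $\alg$-patterns differ, or (ii) $\alg$ fails on $X$, or (iii) $\alg$ fails on $X'$. A union bound gives a total of at most $\delta+\delta'+\delta'$.

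The subtle point is that the failure event may be correlated with the pattern. Hence the coupling should be formalized via the inequality $\Delta(\mathsf D_1,\mathsf D_2)\le \Delta(\mathsf D_1,\mathsf E_1)+\Delta(\mathsf E_1,\mathsf E_2)+\Delta(\mathsf E_2,\mathsf D_2)$, where $\mathsf E_i$ is the pattern of $\alg$ followed by $\bot$. The outer two terms are each at most $\delta'$, since $\mathsf D_i$ and $\mathsf E_i$ agree except on failure. The middle term equals $\Delta(\pattern_\alg(X),\pattern_\alg(X'))\le\delta$, because appending a constant is a fixed map and does not increase statistical distance. This triangle-inequality argument is the main step, and it cleanly yields the bound $\delta+2\delta'$ without any assumption on the fallback's pattern.

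The same argument works verbatim in the ideal model, where the oracle simply decides whether to invoke the fallback.
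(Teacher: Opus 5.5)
Your proposal is correct and matches the paper's proof: run $\alg$, invoke $\alg_{\mathrm{fb}}$ only on an announced failure, and bound the statistical distance by $\Delta(\pattern_\alg(X),\pattern_\alg(X'))$ plus the two failure probabilities. Your triangle inequality through $\mathsf E_i$ (the $\alg$-pattern followed by the empty suffix) is just a rigorous version of the paper's ``couple the non-failure parts and union-bound the two failure events'' step.
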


\begin{proof}
We simply run $\alg$ first and invoke $\alg_{\mathrm{fb}}$ when $\alg$ announces failure; call the resulting algorithm $\alg'$. The round and communication bounds are immediate; we only need to show that $\alg'$ is $(\delta+2\delta')$-oblivious. Fix two inputs $X,X'$ with $\Phi(X)=\Phi(X')$. Let $\pattern_{\alg}(X)$ and $\pattern_{\alg}(X')$ be the communication-pattern distributions, and let $\pattern_{\alg'}(X)$ and $\pattern_{\alg'}(X')$ be the communication-pattern distributions of the wrapped algorithm. On the event that $\alg$ does not fail, the wrapped communication pattern is a deterministic postprocessing of the original communication pattern. On the failure event, whose probability is at most $\delta'$ for each input, the fallback communication pattern may be arbitrary. Coupling the non-failure parts and union-bounding the two failure events gives
\[
\Delta(\pattern_{\alg'}(X),\pattern_{\alg'}(X'))
\le
\Delta(\pattern_{\alg}(X),\pattern_{\alg}(X'))+\Pr[\alg\text{ fails on }X]+\Pr[\alg\text{ fails on }X']
\le \delta+2\delta'.\qedhere
\]
\end{proof}

For all three problems, processors can detect failure with $O(p)$ communication cost, and there are always-correct fallback algorithms using $O(1)$ rounds and $O(n)$ communication in the parameter regimes considered below. Sorting and 0-1 sorting have load at most $n/p+o(n/p)$ via sample sort~\cite{regularSampleSort}, while direct routing for permutation delivers at most $\max_j d_j$ words to any processor on an input with destination counts $D$. Thus, when $\delta'$ is negligible, failure removal adds only $o(n)$ expected communication and $O(1)$ rounds, with load bounded by the maximum of the original and fallback loads.

The second reduction shows that derandomization is possible in the ideal model when the inputs are restricted to a small family.
\begin{lemma}[Derandomization in the ideal model]\label{lem:finite_family_derand}
Consider the ideal model. Suppose there is a Las Vegas, $\delta$-oblivious algorithm $\alg$ for a finite family $\mathcal{F}$ of inputs with the same leakage value.
If $(|\mathcal{F}|-1)\delta<1$, then there exists a deterministic algorithm $\alg^\star$ with communication cost at most $\frac{C(\alg)}{1-(|\mathcal{F}|-1)\delta}$.
\end{lemma}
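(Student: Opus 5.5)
The plan is a probabilistic-method argument: fix a single random string $\rho$ whose induced deterministic algorithm $\alg_\rho$ uses the \emph{same} communication pattern on every input in $\mathcal{F}$, and is cheap on all of them. In the ideal model, fixing $\rho$ turns $\alg$ into a deterministic algorithm $\alg_\rho$. Because $\alg$ is Las Vegas, every $\alg_\rho$ is correct on every input. The only issue is obliviousness and cost.

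Step 1 (a common pattern). Fix a reference input $X_0\in\mathcal{F}$ and run all executions on $\mathcal{F}$ with the same randomness $\rho$. For each $X\in\mathcal{F}\setminus\{X_0\}$ we have $\Delta(\pattern_\alg(X),\pattern_\alg(X_0))\le\delta$. This bounds the distance between the distributions, but it does not directly say that $\pattern_{\alg_\rho}(X)=\pattern_{\alg_\rho}(X_0)$ with high probability under the same $\rho$. I would handle this with the ideal model's freedom to reinterpret randomness. By the coupling characterization of statistical distance, there is a coupling of $\pattern_\alg(X)$ and $\pattern_\alg(X_0)$ under which they agree with probability at least $1-\delta$. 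Concretely, I would design the deterministic algorithm by first drawing a pattern $s$ from $\pattern_\alg(X_0)$, which is its marginal. Call $s$ \emph{good} if for every $X\in\mathcal{F}$ there is a random string $\rho_X$ such that $\alg$ on $X$ with randomness $\rho_X$ produces pattern $s$. The oracle then instructs the processors on input $X$ to simulate $\alg$ with $\rho_X$. The algorithm $\alg^\star_s$ defined this way is deterministic given $s$, correct, and has pattern $s$ on every input of $\mathcal{F}$. It is therefore perfectly oblivious on $\mathcal{F}$ with cost $C(s)$.

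Step 2 (probability that $s$ is good). Let $\mu_X(s)=\Pr[\pattern_\alg(X)=s]$. Then $s$ is bad only if $\mu_X(s)=0$ for some $X\neq X_0$. Hence
\[
\mu_{X_0}(\text{bad})\le\sum_{X\ne X_0}\mu_{X_0}(\{s:\mu_X(s)=0\})\le\sum_{X\ne X_0}\Delta(\pattern_\alg(X_0),\pattern_\alg(X))\le(|\mathcal F|-1)\delta,
\]
so good patterns have $\mu_{X_0}$-mass at least $1-(|\mathcal{F}|-1)\delta>0$.

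Step 3 (averaging the cost). We have
\[
\sum_{s\text{ good}}\mu_{X_0}(s)\,C(s)\le\mathbb E[C(\pattern_\alg(X_0))]\le C(\alg).
\]
Conditioning on goodness, some good $s^\star$ therefore satisfies $C(s^\star)\le C(\alg)/(1-(|\mathcal{F}|-1)\delta)$. Take $\alg^\star=\alg^\star_{s^\star}$.

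The main obstacle is Step 1. I need to be careful that "running $\alg$ with a chosen $\rho_X$ per input" is a legitimate deterministic algorithm in the ideal model. It is, because the oracle knows the input and can hard-wire $\rho_X$; if needed, this can instead be phrased as the oracle directly dictating the per-round routing that $\alg$ would perform under $\rho_X$. One should also check that the pattern $s$, including padding and auxiliary words, is fully reproduced by that simulation. This holds since it is literally an execution of $\alg$.
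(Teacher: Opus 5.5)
Your proposal is correct and follows essentially the same argument as the paper. You sample a pattern from $\pattern_\alg(X_0)$, use $\delta$-obliviousness and a union bound to show it lies in the support of every $\pattern_\alg(X)$ with probability at least $1-(|\mathcal F|-1)\delta$, condition on that event to bound the cost, and let the ideal-model oracle select a correct execution that follows the common pattern (the coupling remark at the start of your Step 1 is never used and can be dropped).
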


\begin{proof}
Pick any $X_0\in\mathcal{F}$ and sample a communication pattern $\widetilde{\pattern}$ from $\pattern_{\alg}(X_0)$. By definition, $\mathbb{E}[C(\widetilde{\pattern})]\le C(\alg)$.
For each $X\in\mathcal{F}\setminus\{X_0\}$, the definition of $\delta$-obliviousness implies \(\Pr[\widetilde{\pattern}\notin\operatorname{supp}(\pattern_{\alg}(X))]\le \delta\). By a union bound, the event \(\mathcal{E}:=\{\widetilde{\pattern}\in \bigcap_{X\in\mathcal{F}} \operatorname{supp}(\pattern_{\alg}(X))\}\) has probability at least $1-(|\mathcal{F}|-1)\delta$. Hence \(\mathbb{E}[C(\widetilde{\pattern})\mid \mathcal{E}]\le\mathbb{E}[C(\widetilde{\pattern})]/\Pr[\mathcal{E}]\le C(\alg)/(1-(|\mathcal{F}|-1)\delta)\).
Therefore there exists a communication pattern $\pattern^\star\in \bigcap_{X\in\mathcal{F}} \operatorname{supp}(\pattern_{\alg}(X))$ whose cost is at most $\frac{C(\alg)}{1-(|\mathcal{F}|-1)\delta}$. Since $\alg$ is always correct, for every $X\in\mathcal{F}$ there is an execution that produces a valid output while following $\pattern^\star$. The ideal-model algorithm $\alg^\star$ can deterministically select this execution for input $X$.
\end{proof}

By the two reductions above, to prove a lower bound it suffices to fix a finite family $\mathcal F$ of inputs with the same leakage and analyze a deterministic perfectly communication-oblivious algorithm in the ideal model. The reduction incurs only an $o(n)$ loss when $(|\mathcal F|-1)(\delta+2\delta')=o(1)$ and $\delta'C(\alg_{\mathrm{fb}})=o(n)$; in particular, these conditions hold when $|\mathcal F|$ is polynomial in $n$ and both the failure and non-obliviousness probabilities are negligible.

For such a deterministic algorithm, perfect communication obliviousness gives a common pattern $\pattern=(L^1,\ldots,L^R)$ for every input in $\mathcal F$. We use a static fractional relaxation: give each item mass $p/n$, allow this mass to be split, and discard the timing of communication. Aggregate the pattern into the directed capacity graph $w$, where $w_{ij}:=(p/n)\sum_{r=1}^R L_{ij}^r$, and define its normalized cost by $C(w):=p^{-1}\sum_{i,j}w_{ij}=C(\pattern)/n$.

For every $X\in\mathcal F$, relaxing a correct execution yields a fractional routing from the normalized initial placement to a valid output, with flow at most $w_{ij}$ on every arc $i\to j$. We say that $w$ supports $X$ when such a routing exists. Thus the fixed graph $w$ supports every input in $\mathcal F$, although the routing and, when the output is not unique, the terminal assignment may depend on $X$. Consequently, the minimum of $C(w)$ over all capacity graphs supporting $\mathcal F$ is a lower bound on the normalized communication cost $C(\pattern)/n$. This minimum-capacity problem is a directed variant of robust network design with dynamic routing~\cite{goyal2011dynamic}. Since divisibility and the removal of temporal constraints only relax the original model, multiplying this minimum by $n$ gives a communication-cost lower bound for the original indivisible BSP problem. Accordingly, all lower bounds below are established in this relaxed capacity model.

\section{Permutation}

\subsection{Lower bound}
For the lower bound, it suffices to consider balanced permutation. For $i,j\in[p]$, let $X_{ij}:=\{x\in X_i:t_x=j\}$. After normalization, a balanced input $X$ is represented by the demand matrix $\bigl((p/n)|X_{ij}|\bigr)_{i,j}$. Its $i$-th row sums to $(p/n)|X_i|=1$, while its $j$-th column sums to $(p/n)d_j=1$; hence it is doubly stochastic. The direct-versus-detour argument of Keslassy et al.~\cite{keslassy2005optimal} yields the following bound; Appendix~\ref{app:perm-normalized-lower-bound} gives the adaptation.

\begin{theorem}[Permutation lower bound]\label{lb_cyclic_shift}
Any normalized capacity matrix \(w\) supporting every divisible balanced-permutation demand satisfies $p^{-1}\sum_{i,j}w_{ij}\ge2-2/p$. Consequently, even in the ideal model, every deterministic perfectly communication-oblivious algorithm for balanced permutation on $n$ indivisible items has communication cost at least $(2-2/p)n$.
\end{theorem}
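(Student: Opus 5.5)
The plan is a direct-versus-detour counting argument applied to the $p-1$ nontrivial cyclic shifts, which are special balanced permutations. Fix a capacity matrix $w$ supporting every divisible balanced-permutation demand, and let $S:=\sum_{i\ne j}w_{ij}$ (diagonal entries are $0$ because $L_{ii}^r=0$). For $k\in[p-1]$, let $\sigma_k$ be the shift demand in which processor $i$ must deliver its entire unit of mass to processor $i+k \pmod p$. Its demand matrix is a permutation matrix, hence doubly stochastic, so $w$ supports it. Also set $W_k:=\sum_i w_{i,i+k}$. The sets of arcs $\{(i,i+k)\}_i$ for $k=1,\ldots,p-1$ partition all off-diagonal arcs, so $\sum_{k=1}^{p-1}W_k=S$.

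Next, I would bound the total flow in any routing of $\sigma_k$. Fix a fractional routing supported by $w$. Treat the mass originating at $i$ as commodity $i$; it must end at $i+k\neq i$. Decompose the flow of each commodity into source–destination paths, discarding any cycles, which only lowers arc usage. A path of length $1$ for commodity $i$ must be the single arc $(i,i+k)$, and every other path has length at least $2$. Let $f_i$ be the amount of commodity $i$ sent directly. Then $f_i\le w_{i,i+k}$, so the total direct amount is at most $W_k$. The total arc usage is therefore at least $\sum_i\bigl(f_i+2(1-f_i)\bigr)=2p-\sum_i f_i\ge 2p-W_k$. Since total arc usage cannot exceed the total capacity, $S\ge 2p-W_k$ for every $k$.

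Finally, summing over $k\in[p-1]$ gives $(p-1)S\ge 2p(p-1)-S$, so $pS\ge 2p(p-1)$. Hence $S\ge 2(p-1)$ and $p^{-1}\sum_{i,j}w_{ij}\ge 2-2/p$. For the consequence, I would use the relaxation from Section~\ref{sec:general-reductions}. A deterministic perfectly oblivious algorithm has a single pattern $\pattern$ for all balanced inputs, including (for $p\mid n$) the inputs realizing each $\sigma_k$. Its aggregated graph $w$ supports these demands, so $C(\pattern)=n\,C(w)\ge(2-2/p)n$. Strictly speaking, the argument needs $w$ to support only the $p-1$ shift inputs, which is a finite family, so the reductions also yield the randomized version.

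The step needing the most care is the path-decomposition claim. Mass is divisible and relayed across rounds, and several commodities share arcs, so I must make sure that direct delivery of commodity $i$ is charged only to the arc $(i,i+k)$. A cleaner alternative is a potential argument: each unit of mass that moves from $i$ to $i+k$ either crosses that arc alone or crosses at least two arcs. Per-commodity flow conservation plus standard flow decomposition handles this, since the capacity constraint bounds the sum of all commodities on each arc.
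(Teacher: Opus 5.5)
Your proposal is correct and follows essentially the same argument as the paper, which is the Keslassy et al.\ direct-versus-detour count over the $p-1$ cyclic shifts. Both charge each non-directly routed unit at least two arcs, use that each off-diagonal arc is the direct arc of exactly one shift (your $\sum_k W_k=S$), and combine to get $pS\ge 2p(p-1)$, with the indivisible bound then following from the static relaxation of Section~\ref{sec:general-reductions}.
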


Combining the normalized deterministic bound with failure removal and finite-family derandomization gives the following extension to randomized algorithms.

\begin{corollary}
\label{cor:permlower}
    A randomized permutation algorithm whose failure probability and non-oblivious probability are negligible has communication cost at least $(2-2/p)n-o(n)$.
\end{corollary}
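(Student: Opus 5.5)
We reduce to Theorem~\ref{lb_cyclic_shift} through the two general reductions of Section~\ref{sec:general-reductions}. The key is to use a small family of inputs, so that Lemma~\ref{lem:finite_family_derand} applies with only a negligible loss.

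\textbf{Choosing the family.} We restrict to balanced permutation and take $\mathcal F=\cycf$, the family of $p$ cyclic-shift inputs. In the $k$-th input, every item on processor $i$ has key $i+k \bmod p$, for $k=0,\dots,p-1$. All members share the leakage value $n$, and $|\mathcal F|=p\le n$. We then need the normalized bound $C(w)\ge 2-2/p$ for capacity graphs supporting just $\cycf$, rather than all doubly stochastic demands. This is the standard direct-versus-detour argument of Keslassy et al., and the appendix adaptation of Theorem~\ref{lb_cyclic_shift} should go through for this subfamily. In outline, fix a shift $k\ne0$. The unit of mass leaving processor $i$ for $i+k$ either uses the direct arc $i\to i+k$ or travels at least two hops. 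Let $a_k=\sum_i w_{i,i+k}$, the capacity available on shift-$k$ arcs. Shift $k$ then needs total flow at least $p+(p-a_k)=2p-a_k$. The flows of different shifts share capacity over time in the relaxation. So I would instead sum the routing constraints from all $p-1$ nonzero shifts against the arc classes, exploiting that every arc belongs to exactly one shift class. This should give $\sum_{i,j}w_{ij}\ge 2p-2$. If the subfamily argument is unavailable, I would take $\mathcal F$ to be the Birkhoff extreme points needed, or simply cite the appendix version on cyclic shifts.

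\textbf{Removing failure and randomness.} Take a randomized algorithm $\alg$ with negligible failure probability $\delta'$ and negligible non-oblivious probability $\delta$. Apply Lemma~\ref{lem:failure_removal} with the direct-routing fallback, which costs at most $n$ words, plus $O(p)$ words for failure detection. This yields an always-correct $(\delta+2\delta')$-oblivious algorithm $\alg'$ with $C(\alg')\le C(\alg)+\delta'(n+O(p))=C(\alg)+o(n)$. Next apply Lemma~\ref{lem:finite_family_derand} to $\alg'$ on $\cycf$. We have $(|\mathcal F|-1)(\delta+2\delta')\le p\,n^{-\omega(1)}=o(1)$, so there is a deterministic ideal-model algorithm $\alg^\star$ with $C(\alg^\star)\le C(\alg')/(1-o(1))$. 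This algorithm is perfectly oblivious on $\cycf$. Its aggregated capacity graph supports every input of $\cycf$, so the normalized bound gives $C(\alg^\star)\ge(2-2/p)n$.

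\textbf{Combining and the main obstacle.} Putting these together gives $C(\alg)\ge(1-o(1))(2-2/p)n-o(n)=(2-2/p)n-o(n)$. The main obstacle is the first step: we must confirm that the lower bound of Theorem~\ref{lb_cyclic_shift} needs only a polynomial-size family. The full set of doubly stochastic integer demands is exponentially large, which would break the union bound in Lemma~\ref{lem:finite_family_derand}. The other steps are routine bookkeeping with the two reductions.
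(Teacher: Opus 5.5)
Your proposal is correct and takes the same approach as the paper: the appendix proof of Theorem~\ref{lb_cyclic_shift} uses only the cyclic-shift family $\cycf$. There it consists of the $p-1$ nonzero shifts, so your extra identity shift is harmless, and this resolves the obstacle you flagged. Failure removal with the cost-$n$ direct-routing fallback, followed by finite-family derandomization on $\cycf$, then gives $(2-2/p)n-o(n)$ exactly as you outline.
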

\begin{proof}
    The always-correct non-oblivious permutation algorithm has communication cost $n$, and the family $\cycf$ defined in Appendix~\ref{app:perm-normalized-lower-bound} has size $p-1$.  Plugging these facts into Lemma \ref{lem:failure_removal} and Lemma \ref{lem:finite_family_derand} yields the conclusion.
\end{proof}

\subsection{Upper bound}
\label{sec:permUB}
The following two-round construction is a deterministic implementation of Valiant load balancing~\cite{valiant1981universal}.

\paragraph{Spray and count dissemination.}
Processor $i$ locally sorts its items by destination, using a fixed tie-breaking rule, and assigns them cyclically to the $p$ intermediate processors. For every $u\ne i$, processor $i$ sends the items assigned to $u$ in a payload padded to $\lceil n/p^2\rceil$ positions and appends its local destination-count vector $(|X_{i1}|,\ldots,|X_{ip}|)$. Thus every processor reconstructs $D$ and receives $(p-1)(\lceil n/p^2\rceil+p)$ words in this round.

\paragraph{Forward.}
Because the items in $X_{ij}$ are consecutive in this local order, they contribute at most $\lfloor |X_{ij}|/p\rfloor+1$ items to any intermediate processor. Hence every intermediate processor holds at most $\sum_i(\lfloor |X_{ij}|/p\rfloor+1)\le\lfloor d_j/p\rfloor+p$ items destined for processor $j$. Each intermediate processor $u$ sends its items for every $j\ne u$ in a payload padded to this bound. Every item therefore reaches its prescribed destination, and processor $j$ terminates with exactly $d_j$ items.

The first-round message lengths are $\lceil n/p^2\rceil+p$, while the message from $u$ to $j$ in the second round has length $\lfloor d_j/p\rfloor+p$. These lengths depend only on $n,p$, and the authorized leakage $D$. Together with the fixed local sorting and cyclic assignment, this makes the algorithm deterministic and perfectly communication-oblivious. Its communication cost satisfies $C(\pattern)=p(p-1)(\lceil n/p^2\rceil+p)+(p-1)\sum_j(\lfloor d_j/p\rfloor+p)\le(2-2/p)n+O(p^3)$. For an input with destination counts $D$, write $d_{\max}:=\max_jd_j$. Each processor receives at most $(p-1)(\lfloor d_{\max}/p\rfloor+p)$ words in the forward round. Together with the spray-round bound, this gives at most $(1-1/p)d_{\max}+O(p^2)$ received words per processor per round.

\begin{theorem}[Permutation upper bound]\label{thm:deterministic_oblivious}
There is a deterministic perfectly communication-oblivious algorithm for the permutation problem in the realistic model. It uses two rounds and has communication cost at most $(2-\frac{2}{p})n+O(p^3)$. On every input with destination-count vector $D$, each processor receives at most $(1-1/p)\max_j d_j+O(p^2)$ words in any round.
\end{theorem}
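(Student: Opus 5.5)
The plan is to verify the two-round spray-and-forward construction described just above the statement against each claim in turn: correctness, obliviousness, total cost, and per-round load.

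\textbf{Correctness.} After local sorting by destination, each $X_{ij}$ is a contiguous block. Assigning items cyclically to the $p$ intermediates means any contiguous block of length $m$ gives at most $\lceil m/p\rceil\le\lfloor m/p\rfloor+1$ items to each intermediate. The items processor $i$ sends to $u$ number at most $\lceil n/p^2\rceil$, because processor $i$ holds exactly $n/p$ items. So the first-round padding is always sufficient. Summing the block bound over the $p$ senders $i$, intermediate $u$ holds at most $\sum_i(\lfloor|X_{ij}|/p\rfloor+1)\le\lfloor d_j/p\rfloor+p$ items destined for $j$. Here I use $\sum_i\lfloor a_i/p\rfloor\le\lfloor\sum_i a_i/p\rfloor$. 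Hence the second-round padding to $\lfloor d_j/p\rfloor+p$ also suffices. Items that intermediate $u$ holds for $j=u$ are kept locally, so every item reaches $t_x$.

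\textbf{Realistic model and obliviousness.} Every processor learns $D$ after round one by summing the appended count vectors. It therefore knows the padded length $\lfloor d_j/p\rfloor+p$ and can distinguish dummies, for instance by tagging each slot with a validity bit within the word. Every message length is a function of $(n,p,D)=\Phi(X)$ alone, so the pattern is identical for inputs with the same leakage. Since the algorithm is deterministic, it is perfectly oblivious.

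\textbf{Cost and load.} The first round costs $p(p-1)(\lceil n/p^2\rceil+p)\le(1-1/p)n+O(p^3)$. The second round costs $(p-1)\sum_j(\lfloor d_j/p\rfloor+p)\le(p-1)(n/p+p^2)=(1-1/p)n+O(p^3)$. The total is therefore $(2-2/p)n+O(p^3)$.

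For the load, in round one each processor receives $(p-1)(\lceil n/p^2\rceil+p)\le(1-1/p)n/p+O(p^2)$ words. Since $d_{\max}\ge n/p$, this is at most $(1-1/p)d_{\max}+O(p^2)$. In round two, processor $j$ receives $(p-1)(\lfloor d_j/p\rfloor+p)\le(1-1/p)d_{\max}+O(p^2)$ words.

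The only delicate step is the per-block rounding bound and its summation, which is what justifies the second-round padding. Everything else is bookkeeping.
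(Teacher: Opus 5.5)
Your proposal is correct and follows the paper's own argument: the same spray-and-forward construction, the contiguous-block bound $\lfloor |X_{ij}|/p\rfloor+1$ summed to $\lfloor d_j/p\rfloor+p$ to justify the forward-round padding, and the same cost and load accounting. You also spell out two steps the paper leaves implicit, namely $\sum_i\lfloor a_i/p\rfloor\le\lfloor\sum_i a_i/p\rfloor$ and $d_{\max}\ge n/p$, which lets the spray-round load be absorbed into $(1-1/p)d_{\max}+O(p^2)$.
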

When $p=o(n^{1/3})$, the communication overhead is $o(n)$ and the additive per-round receive overhead is $O(p^2)=o(n/p)$. In particular, balanced permutation has load at most $n/p+o(n/p)$.

\section{Sorting}

As argued earlier, sorting and balanced permutation are equivalent in the ideal model, so the $(2-2/p)n$ lower bound in Theorem~\ref{lb_cyclic_shift}, as well as the randomized lower bound in Corollary \ref{cor:permlower}, also holds for sorting. %
We therefore focus on the upper bound. We reduce sorting to a nearly balanced permutation problem and solve it using the permutation algorithm from Section~\ref{sec:permUB}. The basic idea follows classical sample sort~\cite{regularSampleSort}, with additional care to ensure obliviousness and match the lower bound up to lower-order terms.

\subsection{Algorithm}
Fix parameters $s,\tau,\epsilon$, whose values will be determined later. Let \(\alg_{\mathrm{sort}}\) denote the following three-phase algorithm.

\paragraph{Sampling.}
Processor $1$ first draws a sample (without replacement) of size $ps$ obliviously, i.e., the communication pattern is fixed regardless of which items are sampled. To do so, each processor uniformly samples $\tau$ ($>s$) key copies from its local data, retains all original items, and sends the copies to processor~1. Processor~1 then generates a random vector $(s_1,\ldots,s_p)$ from the \emph{multivariate hypergeometric} distribution with parameters $n$, $(n/p,\ldots,n/p)$, and $ps$. Here $s_i$ is the number of items taken from processor~$i$ in a uniform random sample of size $ps$. If $s_i>\tau$ for any $i$, processor~1 announces failure; otherwise, it uniformly draws $s_i$ of the $\tau$ key copies originating from processor~$i$.

Next, processor $1$ sorts the sample and lets \(q_k\) be the key of sample rank \(sk\) for \(k\in[p-1]\). It then broadcasts \((q_1,\ldots,q_{p-1})\).

\paragraph{Permutation.}
Set \(q_0:=-\infty\) and \(q_p:=\infty\). For every \(i,j\in[p]\), let \(X_{ij}:=\{x\in X_i:q_{j-1}<t_x\le q_j\}\). The processors then invoke the permutation procedure from Section~\ref{sec:permUB} on this partition, with \(j\) as the prescribed destination of every item in \(X_{ij}\). After the spray-and-count dissemination round, every processor knows \(D=(d_1,\ldots,d_p)\) and checks whether \(d_j>(1+\epsilon)n/p\) for any \(j\in[p]\). If so, they abort; otherwise, they complete the forward round.

\paragraph{Rebalancing.}
After the permutation step, processor~$i$ holds the items in the $i$-th pivot interval, but these items need not yet be locally ordered. Each processor first sorts its items locally. The distribution may be slightly out of balance, with processor~$i$ holding $d_i\le (1+\epsilon)n/p$ items. In the last round, each processor sends the items whose ranks lie outside $((i-1)n/p,in/p]$ to their correct destinations to form the balanced sorting output.

\subsection{Analysis}
\begin{theorem}[Sorting upper bound]\label{thm:sorting_upper_bound}
For any $\epsilon,\delta\in(0,1)$, define
\[
s := \left\lceil
\frac{8(1+\epsilon/2)}{\epsilon^2}\ln\frac{4p}{\epsilon\delta}
\right\rceil,
\qquad
\tau := \left\lceil
s + \sqrt{3s\ln\frac{2p}{\delta}}
\right\rceil,
\]
and suppose that $\tau\le n/p$. Then the sorting algorithm above uses five rounds, is $\delta$-oblivious, correctly solves the sorting problem except with probability at most $\delta$, has load at most $\max\{p\tau,(1+\epsilon)n/p\}+O(p^2)$, and incurs expected communication cost at most
\[
\Bigl(2-\frac{2}{p}\Bigr)n
+O\Bigl(\frac{p}{\epsilon^2}\log\frac{p}{\epsilon\delta}
+p^3+n\sqrt{p/s}\Bigr).
\]
\end{theorem}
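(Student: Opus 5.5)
The plan is to verify the five claimed properties phase by phase, isolating two bad events: $\mathcal{E}_1$, that some $s_i>\tau$ in sampling, and $\mathcal{E}_2$, that some $d_j>(1+\epsilon)n/p$ in the permutation phase. I will show $\Pr[\mathcal{E}_1]\le\delta/2$ and $\Pr[\mathcal{E}_2]\le\delta/2$. Outside these events the communication pattern will be a fixed function of $n,p$ (and of $D$, which needs separate handling). Together these give both the failure bound and $\delta$-obliviousness, since the statistical distance between the pattern distributions of two inputs is at most the probability that either leaves the common high-probability pattern set.

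\textbf{Rounds and load.} Sampling takes two rounds: copies are sent to processor~1, then pivots are broadcast. Permutation takes two rounds (Theorem~\ref{thm:deterministic_oblivious}) and rebalancing one, for five in total. Processor~1 receives $(p-1)\tau$ words in sampling. The broadcast costs $O(p^2)$. By Theorem~\ref{thm:deterministic_oblivious}, each permutation round has load at most $(1-1/p)(1+\epsilon)n/p+O(p^2)$ when $\mathcal{E}_2$ fails. For rebalancing, I will pad messages to a fixed length; the natural bound is $\epsilon n/p$ per processor, which is absorbed into $(1+\epsilon)n/p$.

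\textbf{Probability bounds.} $\mathcal{E}_1$: each $s_i$ is hypergeometric with mean $s$, so the Chernoff bound (valid for hypergeometric variables by Hoeffding's comparison) gives $\Pr[s_i>s+\sqrt{3s\ln(2p/\delta)}]\le\delta/(2p)$; a union bound over $i$ finishes. $\mathcal{E}_2$: conditioned on $\neg\mathcal{E}_1$, the sample is a uniform $ps$-subset of $X$. Partition the sorted input into $\lceil 2p/\epsilon\rceil$ consecutive blocks of $\epsilon n/(2p)$ items each. If some bucket has more than $(1+\epsilon)n/p$ items, then some window of about $(1+\epsilon/2)n/p$ consecutive items, made of whole blocks, contains at most $s$ sample items, whereas the expected count is $(1+\epsilon/2)s$. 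A Chernoff lower tail on that deviation, $\exp(-\epsilon^2 s/(8(1+\epsilon/2)))$, combined with a union bound over the $O(p/\epsilon)$ window starting points, gives $\delta/2$ given the choice of $s$. I expect this to be the main technical step, mainly in getting the constants to match the stated $s$.

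\textbf{Obliviousness and cost.} There is a subtlety: the second permutation round pads messages to $\lfloor d_j/p\rfloor+p$, and this depends on $D$, which is not part of the leakage $\Phi=n$. To fix this, I will instead pad to $\lfloor(1+\epsilon)n/p^2\rfloor+p$, which is valid when $\mathcal{E}_2$ fails. That choice costs $(p-1)\epsilon n/p$ extra, which is too much unless $\epsilon$ is small, so it seems the authors intend something else. The alternative I would try is to make the target cost term explicit: $D$ is determined by the sample, and $d_j$ concentrates around $n/p\pm O(n/\sqrt{ps})$. I would then pad to $\lfloor n/p^2+cn/(p\sqrt{ps})\rfloor+p$ and argue via the same Chernoff machinery that this succeeds with probability $1-\delta$. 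That padding yields the $n\sqrt{p/s}$ term. Likewise, rebalancing moves at most $|d_i-n/p|$ items per processor; padding each processor's messages to $O(n/\sqrt{ps})$ gives total $O(n\sqrt{p/s})$.

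Summing: sampling costs $p\tau=O(ps)=O(p\epsilon^{-2}\log(p/(\epsilon\delta)))$, the broadcast costs $O(p^2)$, permutation costs $(2-2/p)n+O(p^3)+O(n\sqrt{p/s})$ by Theorem~\ref{thm:deterministic_oblivious} with the modified padding, and rebalancing costs $O(n\sqrt{p/s})$. Abort paths contribute nothing beyond the bounds via Lemma~\ref{lem:failure_removal}, since the statement only claims $\delta$-failure. This yields the claimed expectation.
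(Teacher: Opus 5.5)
Your round count, the bound on $\mathcal{E}_1$, and the grid-of-windows argument for bucket overflow match the paper. The obliviousness argument, however, has a genuine gap, and it then distorts your cost accounting. You treat the $D$-dependence of the forward-round lengths $\lfloor d_j/p\rfloor+p$ as a leak and modify the algorithm. But $\delta$-obliviousness compares only the \emph{distributions} of patterns, so a random pattern is harmless as long as its law does not depend on the input. The missing observation is that $D$ is a function of the sample's global ranks alone. Consider a reference execution that uses an unrestricted uniform $ps$-sample and aborts on overflow. Its sampled ranks form a uniform $ps$-subset of $[n]$ whatever the distinct keys are, the pivot ranks $Q_k$ are its $(sk)$-th order statistics, and $d_j=Q_j-Q_{j-1}$. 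Hence $D$, the abort decision, the forward lengths and the rebalancing lengths are all functions of $(Q_1,\ldots,Q_{p-1})$, whose law is input-independent. Couple the implemented sampler to this reference so that the two agree off $\mathcal{E}_{\mathrm{samp}}$. This puts every input within $\delta/2$ of the reference pattern distribution, and any two inputs within $\delta$ by the triangle inequality. The same coupling should replace your claim that the sample is uniform conditioned on $\neg\mathcal{E}_1$, which is false; instead bound $\Pr[\mathcal{E}_2\cap\neg\mathcal{E}_1]$ by the overflow probability of the unrestricted sample. Also, the theorem concerns the algorithm as stated, so proving bounds for a modified algorithm would establish a different statement.

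Your replacement also fails on its own terms. Fixed padding $n/p^2+cn/(p\sqrt{ps})$ requires $d_j\le n/p+cn/\sqrt{ps}$, a relative deviation of $c\sqrt{p/s}$. A bucket spans $s$ consecutive sample gaps, so $d_j$ fluctuates on the scale $n/(p\sqrt{s})$; the scale $n/\sqrt{ps}$ belongs to $Q_k$ with $k\approx p$. For constant $c$ the overflow probability is therefore of order $e^{-\Theta(c^2p)}$, independent of $\delta$. No Chernoff bound makes it at most $\delta$ once $\ln(1/\delta)\gg p$, which is exactly the regime $\delta=e^{-n^{1/3}}$, $p=o(n^{1/3})$ used after the theorem. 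Any input-independent padding that overflows with probability at most $\delta$ must allow relative slack of order $\sqrt{\ln(1/\delta)/s}$ in every bucket. That means extra cost of order $n\sqrt{\ln(1/\delta)/s}$, comparable to the $\epsilon n$ you rejected and not $O(n\sqrt{p/s})$ in that regime; padded rebalancing has the same defect.

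In the paper the $n\sqrt{p/s}$ term has a different source. The permutation phase costs $(2-2/p)n+O(p^3)$ for every $D$ with $\sum_jd_j=n$. Rebalancing is left unpadded and moves at most $\sum_{k=1}^{p-1}|Q_k-kn/p|$ items. Its expectation is bounded through $\mathbb{E}|Q_k-kn/p|\le\sqrt{\operatorname{Var}(Q_k)}+|\mathbb{E}[Q_k]-kn/p|$, with $\sqrt{\operatorname{Var}(Q_k)}\le n\sqrt{k}/(p\sqrt{s})$ from the exact order-statistic variance. This sums to at most $n\sqrt{p/s}+n/(2s)$. Only an expectation bound is needed there, precisely because these message lengths are allowed to be random.
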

For $p^3=o(n)$, choose $\epsilon=\sqrt{p/n^{1/3}}$ and $\delta=\exp(-n^{1/3})$. Then $\epsilon=o(1)$, $s=\Theta(n^{2/3}/p)$, and $\tau=(1+o(1))s\le n/p$ for all sufficiently large $n$. Since $p\tau=\Theta(n^{2/3})=o(n/p)$ and $n\sqrt{p/s}=O(pn^{2/3})=o(n)$, the theorem gives expected communication $(2-2/p)n+o(n)$ and load at most $n/p+o(n/p)$, with negligible failure probability and obliviousness error.

The rest of this section proves Theorem~\ref{thm:sorting_upper_bound}.

The sampling procedure is the algorithm's only randomized step. To analyze it, imagine drawing an unrestricted uniform sample of $ps$ items, and let $\widehat{s}_i$ be its number of items from processor~$i$. The vector $(\widehat{s}_i)_{i\in[p]}$ has the same distribution as the algorithm's $(s_i)_{i\in[p]}$, and, conditional on it, the local samples are independent uniform $\widehat{s}_i$-subsets. If every $\widehat{s}_i\le\tau$, each local sample can be extended uniformly to a local $\tau$-subset; by symmetry, reversing these two choices gives exactly the implemented sampler. Thus the two samples can be coupled to agree outside $\mathcal{E}_{\mathrm{samp}}:=\{\exists i:\widehat{s}_i>\tau\}$. Let \(\mathcal{E}_{\mathrm{bucket}}\) be the event that some pivot bucket of the unrestricted sample contains more than \((1+\epsilon)n/p\) items.

\begin{lemma}[Failure probability]\label{lem:sorting_failure}
Under the parameter choice of Theorem~\ref{thm:sorting_upper_bound}, $\Pr(\mathcal{E}_{\mathrm{samp}})\le\delta/2$ and $\Pr(\mathcal{E}_{\mathrm{bucket}})\le\delta/2$. Consequently, the sorting algorithm fails with probability at most~$\delta$.
\end{lemma}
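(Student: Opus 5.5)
The plan is to bound the two events separately, each by $\delta/2$, using standard tail bounds. The final claim then follows by a union bound together with the coupling described just before the lemma. Outside $\mathcal{E}_{\mathrm{samp}}$ the implemented sample coincides with the unrestricted uniform sample. Hence the algorithm can only fail (announce failure in sampling, or abort in the permutation step) on $\mathcal{E}_{\mathrm{samp}}\cup\mathcal{E}_{\mathrm{bucket}}$, whose probability is at most $\delta$.

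\emph{Sampling event.} Each $\widehat{s}_i$ is hypergeometric with mean $s$: $ps$ draws without replacement from $n$ items, $n/p$ of which belong to processor $i$. Hoeffding's observation says that Chernoff upper-tail bounds for sums of independent Bernoullis also hold for sampling without replacement. So I will apply the multiplicative Chernoff bound $\Pr[\widehat{s}_i\ge(1+\eta)s]\le\exp(-\eta^2 s/3)$, valid for $\eta\le1$, with $\eta:=\sqrt{3\ln(2p/\delta)/s}$. This gives $\Pr[\widehat{s}_i>\tau]\le\delta/(2p)$. I will check that $\eta\le1$ follows from the size of $s$; in the regime $\eta>1$ one uses the cruder form $\exp(-\eta s/3)$, which is even smaller. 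A union bound over the $p$ processors then yields $\Pr(\mathcal{E}_{\mathrm{samp}})\le\delta/2$.

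\emph{Bucket event.} This is the main obstacle, since the pivot positions are random and a direct bound on each bucket is awkward. I will use the standard sample-sort argument. Sort all $n$ items by key and cut the sorted order into $m:=\lceil 2p/\epsilon\rceil$ consecutive blocks, each of size at most $\epsilon n/(2p)+1$. Suppose some bucket $(q_{j-1},q_j]$ contains more than $(1+\epsilon)n/p$ items. Then it contains a full contiguous run of at least $(1+\epsilon/2)n/p$ items that is a union of consecutive blocks. That run spans at most $s$ sample points, since consecutive pivots are $s$ sample ranks apart; the extreme buckets are handled the same way. So it suffices to show that, for every contiguous rank interval $I$ of length $\ell\ge(1+\epsilon/2)n/p$ formed by whole blocks, the number of sampled items in $I$ is at least $s$, except with small probability. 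There are at most $m^2\le(4p/\epsilon)^2/4$ such intervals. For each, the count is hypergeometric with mean $\mu=ps\ell/n\ge(1+\epsilon/2)s$. The Chernoff lower tail $\Pr[Z\le(1-\gamma)\mu]\le\exp(-\gamma^2\mu/2)$ with $1-\gamma=1/(1+\epsilon/2)$, so $\gamma\ge\epsilon/4$ for $\epsilon<1$, gives a bound of $\exp(-\epsilon^2(1+\epsilon/2)s/32)$. With $s\ge\frac{8(1+\epsilon/2)}{\epsilon^2}\ln\frac{4p}{\epsilon\delta}$ this needs some constant juggling to reach $(\epsilon\delta/4p)^{2}\cdot\tfrac{1}{2}$ per interval. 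If the constants do not close, I will instead fix only the $m$ starting blocks and let the interval length be exactly the minimal multiple of blocks. That requires only $m\le 4p/\epsilon$ events, each needing probability $\le\epsilon\delta/(8p)$. This matches the logarithm $\ln\frac{4p}{\epsilon\delta}$ in $s$ and is presumably the intended counting. Summing over intervals gives $\Pr(\mathcal{E}_{\mathrm{bucket}})\le\delta/2$.

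Finally, since both the algorithm's sampler and the unrestricted sampler are coupled to agree off $\mathcal{E}_{\mathrm{samp}}$, the abort condition $d_j>(1+\epsilon)n/p$ can occur only on $\mathcal{E}_{\mathrm{samp}}\cup\mathcal{E}_{\mathrm{bucket}}$. The failure probability is therefore at most $\delta$.
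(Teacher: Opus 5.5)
Your treatment of $\mathcal{E}_{\mathrm{samp}}$ and the final coupling/union-bound step agree with the paper. The argument for $\mathcal{E}_{\mathrm{bucket}}$, however, has a genuine gap. The containment step is false. With blocks of size about $\epsilon n/(2p)$, a bucket of length just above $(1+\epsilon)n/p$ can cut almost a full block at each end. The whole blocks inside it may then total only about $(1+\epsilon)n/p-2\cdot\epsilon n/(2p)=n/p$ ranks, not $(1+\epsilon/2)n/p$. Your fallback (grid-aligned start, length the least block multiple $\ge(1+\epsilon/2)n/p$) has the same defect: its right end is again grid-aligned, and its length can be nearly $(1+\epsilon)n/p$.

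The constants do not close either. With $\gamma\ge\epsilon/4$, your exponent $\epsilon^2(1+\epsilon/2)s/32$ is only guaranteed to be $\tfrac{1}{4}(1+\epsilon/2)^2\ln\frac{4p}{\epsilon\delta}<\ln\frac{4p}{\epsilon\delta}$, which is insufficient even for a single interval. Enumerating $\Theta(p^2/\epsilon^2)$ block intervals is too many for this $s$ in any case. Even the sharp Chernoff bound yields only $\epsilon\delta/(4p)$ per interval, not the $\epsilon\delta/(8p)$ your fallback budget over $4p/\epsilon$ intervals requires.

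The missing idea is to align only the left endpoint, as the paper does. Use test intervals of fixed length $\lceil(1+\epsilon/2)n/p\rceil$ starting at ranks $1+a\lceil\epsilon n/(2p)\rceil$ and contained in $[n]$; there are at most $2p/\epsilon$ of them. A bucket longer than $(1+\epsilon)n/p$ has its first grid point fewer than $\lceil\epsilon n/(2p)\rceil$ ranks from its left end. Its length exceeds the test length by at least $\lceil\epsilon n/(2p)\rceil-1$, so it contains a test interval. That interval therefore holds at most $s$ sampled items, so the bad event to bound is $Z\le s$, not $Z<s$. For a fixed test interval, $\mu\ge(1+\epsilon/2)s$. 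The lower tail $\Pr[Z\le s]\le\exp\bigl(-(\mu-s)^2/(2\mu)\bigr)$ has an exponent increasing in $\mu\ge s$, so it gives $\exp\bigl(-\epsilon^2 s/(8(1+\epsilon/2))\bigr)\le\epsilon\delta/(4p)$ by the choice of $s$. The union bound over at most $2p/\epsilon$ intervals then yields $\Pr(\mathcal{E}_{\mathrm{bucket}})\le\delta/2$.
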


\begin{proof}
Each $\widehat{s}_i$ is hypergeometric with mean $s$. Since $s\ge12\ln(2p/\delta)$, the additive Chernoff bound at deviation $\sqrt{3s\ln(2p/\delta)}\le s$, followed by a union bound, gives $\Pr(\mathcal{E}_{\mathrm{samp}})\le p\cdot\delta/(2p)=\delta/2$.

Now consider rank intervals of length $\lceil(1+\epsilon/2)n/p\rceil$ whose starting ranks are $1+a\lceil\epsilon n/(2p)\rceil$ and that lie in $[n]$, where $a$ ranges over nonnegative integers. There are at most $2p/\epsilon$ of them. Every interval longer than $(1+\epsilon)n/p$ contains one: its first grid point is fewer than $\lceil\epsilon n/(2p)\rceil$ ranks from its left endpoint, while the length slack is at least $\lceil\epsilon n/(2p)\rceil-1$. Each pivot bucket contains exactly $s$ sampled items, so $\mathcal{E}_{\mathrm{bucket}}$ implies that some test interval contains at most $s$ of them. For any fixed test interval, the sample count is hypergeometric with mean at least $s(1+\epsilon/2)$, so
\[
\Pr(\text{the interval contains at most $s$ sampled items})
\le \exp\!\left(-\frac{s\epsilon^2}{8(1+\epsilon/2)}\right).
\]
The definition of $s$ makes this at most $\epsilon\delta/(4p)$; another union bound gives $\Pr(\mathcal{E}_{\mathrm{bucket}})\le\delta/2$. On $\mathcal{E}_{\mathrm{samp}}^c$ the two samples coincide, so the actual algorithm can fail only in $\mathcal{E}_{\mathrm{samp}}\cup\mathcal{E}_{\mathrm{bucket}}$.
\end{proof}

\begin{lemma}[Obliviousness]\label{lem:sorting_oblivious}
Under the parameter choice of Theorem~\ref{thm:sorting_upper_bound}, the sorting algorithm is $\delta$-oblivious with respect to $\Phi(X):=n$.
\end{lemma}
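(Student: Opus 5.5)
We will show that, outside the bad event $\mathcal{E}:=\mathcal{E}_{\mathrm{samp}}\cup\mathcal{E}_{\mathrm{bucket}}$, the communication pattern of $\alg_{\mathrm{sort}}$ is a function of $n$, $p$, the parameters, and a random object whose law is the same for every input of size $n$. Once this is established, any two inputs $X,X'$ of size $n$ have patterns whose distributions can be coupled to agree except on the bad events. A suitable accounting, using Lemma~\ref{lem:sorting_failure}, then bounds their statistical distance by $\delta$.

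We go through the five rounds in order.
\begin{enumerate}
\item In the sampling round every processor sends exactly $\tau$ key copies to processor~1, so this pattern is fixed.
\item The failure announcement depends only on $(s_1,\ldots,s_p)$. This vector is drawn from the multivariate hypergeometric law with parameters depending only on $n,p,s$, and so is independent of the data.
\item The broadcast of the $p-1$ pivots has fixed length.
\item The spray-and-count round of the permutation procedure has message lengths $\lceil n/p^2\rceil+p$, which are again fixed.
\item The forward round has lengths $\lfloor d_j/p\rfloor+p$, determined by $D$. The rebalancing round has lengths determined by how many items of processor $i$ have ranks outside $((i-1)n/p,in/p]$. These counts are themselves determined by $D$ through prefix sums $\sum_{k\le i}d_k$, since after the permutation step processor~$i$ holds exactly the ranks $(\sum_{k<i}d_k,\sum_{k\le i}d_k]$.
\end{enumerate}
The abort decision is likewise a function of $D$. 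Hence the whole pattern is a deterministic function of $((s_i)_i, D)$, possibly truncated at an abort.

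The key step is to show that the distribution of $D$ does not depend on the input. Using the coupling described before Lemma~\ref{lem:sorting_failure}, the actual sample agrees with an unrestricted uniform $ps$-subset of the $n$ items outside $\mathcal{E}_{\mathrm{samp}}$. For the unrestricted sample, the set of sampled \emph{ranks} is a uniform $ps$-subset of $[n]$, because keys are unique. The pivots are the items whose sampled ranks are $s,2s,\ldots$, so $D$ is given by the gaps between these rank order statistics: $d_j$ is the rank of $q_j$ minus the rank of $q_{j-1}$. Thus $D$ is a function of a uniform random subset of $[n]$ alone. Jointly, $(\widehat s_i)_i$ and the rank set have a law independent of the input, which settles the dependence on data. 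One subtlety is that $\widehat s_i$ and the rank set are jointly distributed in a way that depends on which processor holds which ranks. To handle this, we use the fact that the pattern only involves $(s_i)$ through the failure check. Outside $\mathcal{E}_{\mathrm{samp}}$ the pattern is therefore a function of the rank set alone, together with the event $\mathcal{E}_{\mathrm{samp}}$ itself.

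The main obstacle is converting this into a clean bound of exactly $\delta$ rather than $2\delta$. We would argue as follows. For each input, on the complement of $\mathcal{E}_{\mathrm{samp}}$ (probability at least $1-\delta/2$) the pattern equals $g(S)$ for a fixed function $g$ of a uniform rank set $S$. Comparing two inputs, the total variation is at most the probability mass where the two outcomes differ from $g(S)$. Coupling both runs on the same $S$, they disagree only if $\mathcal{E}_{\mathrm{samp}}$ occurs for one of them. Inputs with unique keys all share the same rank distribution among processors up to relabeling, so the law of $\mathcal{E}_{\mathrm{samp}}$ is identical across inputs. Since $\Pr(\mathcal{E}_{\mathrm{samp}})\le\delta/2$ for each input, we obtain $\Delta\le\delta/2+\delta/2=\delta$. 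The aborts triggered by $\mathcal{E}_{\mathrm{bucket}}$ are already part of $g(S)$, so they cost nothing.
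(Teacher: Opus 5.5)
Your argument is correct and is essentially the paper's: outside $\mathcal{E}_{\mathrm{samp}}$ the pattern equals a fixed function $g$ of the uniform sampled rank set, with bucket-overflow aborts built into $g$ just as in the paper's reference pattern $\pattern_{\mathrm{ref}}$, and coupling both runs on a common $S$ is an unpacked form of the paper's triangle inequality through $\pattern_{\mathrm{ref}}$, giving $\delta/2+\delta/2$. One slip to delete: the sentence claiming that $(\widehat s_i)_i$ and the rank set have an input-independent joint law is false, as you note immediately afterward, but your final argument never uses it.
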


\begin{proof}
The sampling pattern is fixed: every noncollector sends exactly $\tau$ key copies, and processor~1 broadcasts either the $p-1$ pivots or a failure flag padded to the same length. Let $\pattern_{\mathrm{ref}}$ be the pattern of the execution that uses the unrestricted sample and aborts on bucket overflow. Its distribution is input-independent: sampled ranks are uniform, the resulting bucket sizes determine $D$, the permutation subroutine's pattern is determined by $D$, and the bucket sizes also determine the overflow and rebalancing messages. Under the coupling above, the actual and reference patterns agree on $\mathcal{E}_{\mathrm{samp}}^c$. Hence Lemma~\ref{lem:sorting_failure} gives $\Delta(\pattern_{\alg_{\mathrm{sort}}}(X),\pattern_{\mathrm{ref}})\le\delta/2$ for every input $X$; for $X,X'$ the triangle inequality gives
\[
\Delta\bigl(\pattern_{\alg_{\mathrm{sort}}}(X),\pattern_{\alg_{\mathrm{sort}}}(X')\bigr)
\le \delta/2+\delta/2=\delta.\qedhere
\]
\end{proof}

\begin{lemma}[Resource bounds]\label{lem:sorting_resources}
The sorting algorithm uses five rounds and satisfies the load and expected communication bounds in Theorem~\ref{thm:sorting_upper_bound}.
\end{lemma}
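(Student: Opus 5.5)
I will account round by round, then sum the contributions. \emph{Rounds.} Sampling uses two rounds: every processor sends $\tau$ key copies to processor~1, and processor~1 then broadcasts either the $p-1$ pivots or a padded failure flag. Permutation uses two rounds, the spray-and-count round and the forward round of Section~\ref{sec:permUB}. Rebalancing uses one round. That makes five in total.

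\emph{Load.} I bound the load of each round separately.
\begin{itemize}
\item In the first sampling round, processor~1 receives $(p-1)\tau\le p\tau$ words.
\item In the broadcast round, each processor receives $p-1=O(p)$ words.
\item By Theorem~\ref{thm:deterministic_oblivious}, each permutation round delivers at most $(1-1/p)\max_j d_j+O(p^2)$ words per processor. Because the algorithm aborts before the forward round unless $\max_j d_j\le(1+\epsilon)n/p$, this is at most $(1+\epsilon)n/p+O(p^2)$. The spray round is independent of $D$ and receives $(p-1)(\lceil n/p^2\rceil+p)\le n/p+O(p^2)$ words.
\item In the rebalancing round, processor $j$ receives only the items whose final rank lies in $((j-1)n/p,jn/p]$ but which are not already held there, so it receives at most $n/p$ words.
\end{itemize}
Taking the maximum gives $\max\{p\tau,(1+\epsilon)n/p\}+O(p^2)$.

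\emph{Communication.} Sampling costs $(p-1)\tau+(p-1)^2=O(p\tau+p^2)$. From the definitions, $\tau=O(s+\sqrt{s\log(p/\delta)})=O(s)$ and $s=O(\epsilon^{-2}\log(p/(\epsilon\delta)))$. This gives the $O(\frac{p}{\epsilon^2}\log\frac{p}{\epsilon\delta})$ term, with $p^2$ absorbed in $p^3$. The permutation step costs at most $(2-2/p)n+O(p^3)$ by Theorem~\ref{thm:deterministic_oblivious}, whatever $D$ is; if the algorithm aborts, it costs less.

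The main obstacle is the rebalancing cost, which must be $O(n\sqrt{p/s})$ in expectation. The number of items moved in rebalancing is at most $\sum_j |d_j-n/p|$: each misplaced item on processor $i$ is counted in the excess or the deficit of some boundary. More precisely, the moved volume equals $\sum_{k=1}^{p-1}|r(q_k)-kn/p|$, where $r(q_k)$ is the global rank of pivot $q_k$. Let $\mathrm{Dev}_k:=|r(q_k)-kn/p|$. So it suffices to show $\mathbb E[\mathrm{Dev}_k]=O(n/\sqrt{ps})$ for every $k$, which summed over $k$ gives $O(n\sqrt{p/s})$.

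I would bound this expectation using the coupling with the unrestricted uniform sample of size $ps$; the two samples agree outside $\mathcal E_{\mathrm{samp}}$. On $\mathcal E_{\mathrm{samp}}$, of probability at most $\delta/2$, the algorithm aborts before permuting, so that event adds nothing to the rebalancing cost. For the unrestricted sample, $r(q_k)$ is the $sk$-th order statistic of a uniform $ps$-subset of $[n]$. Fix $t>0$. The event $\mathrm{Dev}_k>t\,n/\sqrt{ps}$ forces the number of sampled items among ranks up to $kn/p\pm t\,n/\sqrt{ps}$ to deviate from its mean by about $t\sqrt{s/p}\cdot\sqrt{p}\cdot$ (its standard deviation scale $\sqrt{ps}$ adjusted). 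A hypergeometric Chernoff bound then gives a sub-Gaussian tail, $\Pr[\mathrm{Dev}_k>t\,n/\sqrt{ps}]\le 2e^{-ct^2}$. Integrating over $t$ yields $\mathbb E[\mathrm{Dev}_k]=O(n/\sqrt{ps})$, which is even a factor $\sqrt p$ sharper than needed.

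Adding the three phases gives the stated bound $(2-\frac2p)n+O\bigl(\frac{p}{\epsilon^2}\log\frac{p}{\epsilon\delta}+p^3+n\sqrt{p/s}\bigr)$.
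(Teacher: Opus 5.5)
Your proof is correct. The round count, the load accounting, and the sampling and permutation costs coincide with the paper's. The difference is in how you bound the expected rebalancing cost.

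\textbf{Paper's route.} The paper writes down the exact mean and variance of the order statistic $Q_k$ and uses $\mathbb E|Q_k-kn/p|\le\sqrt{\operatorname{Var}(Q_k)}+|\mathbb E[Q_k]-kn/p|$. This gives the explicit bound $n\sqrt{p/s}+n/(2s)$, but it relies on moment formulas for order statistics of a uniform random subset.

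\textbf{Your route.} You turn $\{Q_k>kn/p+tn/\sqrt{ps}\}$ into a lower-tail event for the hypergeometric number of sampled ranks in $[1,\lfloor kn/p+tn/\sqrt{ps}\rfloor]$. Its mean is at least $sk+t\sqrt{ps}-1$, using $ps\le n$. Hoeffding's inequality for sampling without replacement bounds the event by $\exp\bigl(-2(t\sqrt{ps}-1)^2/(ps)\bigr)$, and the other tail is at most $e^{-2t^2}$. Integrating gives $\mathbb E[\mathrm{Dev}_k]\le\sqrt{\pi/2}\,n/\sqrt{ps}+n/(ps)$.

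This reuses the concentration tool from Lemma~\ref{lem:sorting_failure} and needs no variance formula. Because it measures deviation from $kn/p$ directly, the bias term is absorbed automatically. The price is a less explicit constant and some rounding bookkeeping, both irrelevant inside the $O(\cdot)$.

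Please replace your garbled sentence about the size of the deviation with this computation. Also, $O(n/\sqrt{ps})$ per boundary is exactly what is needed, not a factor $\sqrt p$ better: summing over $p-1$ boundaries gives $O(n\sqrt{p/s})$.

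Two intermediate claims in your reduction are wrong, although your final computation does not rely on them.
\begin{itemize}
\item Rebalancing does not move at most $\sum_j|d_j-n/p|$ items. Take every pivot rank equal to $kn/p+\Delta$ with $0<\Delta\le\epsilon n/p$, so no abort occurs. Then $\sum_j|d_j-n/p|=2\Delta$, while $(p-1)\Delta$ items move, which is more when $p\ge4$.
\item $\sum_k|r(q_k)-kn/p|$ is only an upper bound on the moved volume, not an equality, since one item can cross several boundaries. The upper bound is all you need. It holds because exactly $|r(q_k)-kn/p|$ items cross boundary $k$, and every moved item crosses at least one boundary.
\end{itemize}
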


\begin{proof}
Sample collection and pivot broadcast use two rounds, permutation uses two, and rebalancing uses one.

Sampling communicates at most $p\tau+O(p^2)$ words and has load at most $p\tau$. By Theorem~\ref{thm:deterministic_oblivious}, permutation communicates at most $(2-2/p)n+O(p^3)$ words. The spray round has load at most $n/p+O(p^2)$, and the bucket-size check bounds the forward-round load by $(1+\epsilon)n/p+O(p^2)$. Rebalancing has load at most $n/p$, giving an overall load bound of $\max\{p\tau,(1+\epsilon)n/p\}+O(p^2)$. This accounts for all communication except rebalancing.

Let $Q_k$ be the global rank of the reference pivot $q_k$, namely the $(ks)$-th unrestricted-sample order statistic. Exactly $|Q_k-kn/p|$ items cross logical boundary~$k$, so $\sum_{k=1}^{p-1}|Q_k-kn/p|$ upper-bounds direct transmissions, since one transmission may cross several boundaries. The actual algorithm either uses these pivots or aborts with zero rebalancing cost, so this domination is unconditional.

Writing the unsampled ranks as gaps around the $ps$ sampled ranks gives $\mathbb E[Q_k]=ks(n+1)/(ps+1)$ and
\[
\operatorname{Var}(Q_k)
=\frac{sk\bigl(s(p-k)+1\bigr)(n+1)(n-ps)}{(ps+1)^2(ps+2)}.
\]
Thus $\sqrt{\operatorname{Var}(Q_k)}\le n\sqrt{k}/(p\sqrt{s})$ and $\mathbb E[Q_k]-kn/p=-k(n-ps)/(p(ps+1))$. Using $\mathbb E|Q_k-kn/p|\le\sqrt{\operatorname{Var}(Q_k)}+|\mathbb E[Q_k]-kn/p|$, $\sum_{k=1}^{p-1}\sqrt{k}<p\sqrt p$, and
$\sum_{k=1}^{p-1}k(n-ps)/(p(ps+1))=(p-1)(n-ps)/(2(ps+1))\le n/(2s)$, we obtain
\[
\sum_{k=1}^{p-1}\mathbb E\left|Q_k-\frac{kn}{p}\right|
\le n\sqrt{p/s}+\frac{n}{2s}.
\]
Uniformity of the sampled global ranks makes this expectation uniform over inputs.

Since $\tau=O(s)=O(\bigl(1/\epsilon^2\bigr)\log(p/(\epsilon\delta)))$ and $1/s\le\sqrt{p/s}$, adding the sampling, permutation, and expected rebalancing costs proves the communication bound.
\end{proof}

If the algorithm does not abort, Theorem~\ref{thm:deterministic_oblivious} guarantees that permutation gathers the ordered pivot intervals, local sorting orders them, and rebalancing places every rank on its prescribed processor. Together with Lemmas~\ref{lem:sorting_failure}, \ref{lem:sorting_oblivious}, and~\ref{lem:sorting_resources}, this proves Theorem~\ref{thm:sorting_upper_bound}.

\section{0-1 Sorting}
\label{sec:01-sorting}

The cyclic-shift hard family underlying Theorem~\ref{lb_cyclic_shift} can be encoded as stable 0-1-sorting instances. Hence stable 0-1 sorting inherits the $(2-2/p)n$ lower bound for general sorting. We therefore focus on unstable 0-1 sorting, where equal-key items may exchange valid destinations; this flexibility is what permits a smaller communication coefficient.

Since equal-key items are interchangeable, we describe an input by the amount of zero mass on each processor. For an input \(X\), let \(z:=z_X\in[0,1]^p\) be its normalized zero-count vector, where \(z_i:=(p/n)\lvert\{x\in X_i:t_x=0\}\rvert\). Processor \(i\) initially holds zero mass \(z_i\) and one mass \(1-z_i\), and the total zero mass is \(\lVert z\rVert_1\).

In the sorted output, zero mass fills the processors from left to right. Its distribution is therefore the \emph{sorted zero profile} \(\rho(z):=(\underbrace{1,\ldots,1}_{\ell},\beta,0,\ldots,0)\), where \(\ell:=\lfloor\lVert z\rVert_1\rfloor\) and \(\beta:=\lVert z\rVert_1-\ell\in[0,1)\); the \(\beta\) entry is omitted when \(\ell=p\).

We call an input \emph{pure} if every processor initially holds either all zeros or all ones, and write \(\mathcal F_{\mathrm{pure}}:=\{0,1\}^p\) for these input vectors. We use pure inputs to establish the lower bound and later extend their routing schemes to arbitrary inputs.

\subsection{The Cut-LP and the lower bound}
\label{sec:01-lp-characterization}

Consider any set of processors. Its zero surplus must leave the set, while a zero deficit requires an equal amount of one mass to leave, since its total mass is unchanged by sorting. The common capacity graph must accommodate the largest such demand over all pure inputs. These necessary cut constraints yield the following LP lower bound.

Let \(w=(w_{ij})\) be the common normalized capacity matrix, with cost \(C(w):=p^{-1}\sum_{i,j}w_{ij}\). For \(U,V\subseteq[p]\), write \(w(U,V):=\sum_{i\in U,j\in V}w_{ij}\), and let \(\overline U:=[p]\setminus U\).

For an input vector \(z\), define the signed zero surplus at processor \(i\) by \(b_i(z):=z_i-\rho_i(z)\), and its total over \(U\) by \(b_z(U):=\sum_{i\in U}b_i(z)\). The largest cut demand over all pure inputs is \(\operatorname{dem}(U):=\max_{z\in\mathcal F_{\mathrm{pure}}}\lvert b_z(U)\rvert\). Since total zero mass is preserved, \(b_z(\overline U)=-b_z(U)\); hence \(\operatorname{dem}(U)=\operatorname{dem}(\overline U)\), while \(\operatorname{dem}(\emptyset)=\operatorname{dem}([p])=0\). Consider the following Cut-LP:
\begin{equation*}
\begin{array}{ll}
\text{minimize} & \displaystyle C(w) \\[2mm]
\text{subject to}
& \displaystyle w(U,\overline U)\ge \operatorname{dem}(U)
\qquad(\emptyset\ne U\subsetneq[p]),\\[2mm]
& w_{ij}\ge0\qquad(i,j\in[p]).
\end{array}
\tag{\ensuremath{\mathrm{Cut}(p)}}\label{lp:01-cut}
\end{equation*}
Let \(C_{\mathrm{cut}}(p)\) denote the optimal value of~\eqref{lp:01-cut}.

\begin{theorem}[0-1 sorting lower bound]
\label{lem:01-cut-necessity}
Any normalized capacity matrix \(w\) supporting every pure 0-1 sorting input in the divisible setting satisfies $C(w)\ge C_{\mathrm{cut}}(p)$. Consequently, even in the ideal model, deterministic perfectly communication-oblivious 0-1 sorting on \(n\) indivisible items requires communication at least \(nC_{\mathrm{cut}}(p)\).
\end{theorem}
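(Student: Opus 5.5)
The plan is to show that any capacity matrix \(w\) supporting every pure input is feasible for \eqref{lp:01-cut}. The bound \(C(w)\ge C_{\mathrm{cut}}(p)\) then follows by minimality, and the indivisible statement follows from the relaxation discussion in Section~\ref{sec:general-reductions}. Concretely, fix a nonempty proper subset \(U\subsetneq[p]\) and let \(z\in\mathcal F_{\mathrm{pure}}\) attain \(\operatorname{dem}(U)=\lvert b_z(U)\rvert\). Since \(w\) supports \(z\), there is a fractional routing of zero mass and one mass from the initial placement to some valid sorted output, with total flow on each arc \(i\to j\) at most \(w_{ij}\).

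The first step is to pin down the output's zero distribution. In the unstable 0-1 sorting problem every processor ends with mass exactly \(1\), and the zeros must precede the ones. Hence the terminal zero distribution is forced to be \(\rho(z)\), whatever routing the oracle chooses. (With divisible mass, "sorted" still means the zero mass fills processors left to right, which again forces \(\rho(z)\).) It follows that the net zero mass leaving \(U\), namely zero flow on arcs from \(U\) to \(\overline U\) minus zero flow on arcs from \(\overline U\) to \(U\), equals \(b_z(U)\). This is flow conservation summed over \(U\). Likewise, total mass on \(U\) is \(\lvert U\rvert\) both initially and finally, so the net one mass leaving \(U\) equals \(-b_z(U)\).

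Now the cut argument:
\begin{itemize}
\item If \(b_z(U)\ge0\), the zero flow across \(U\to\overline U\) is at least \(b_z(U)\).
\item If \(b_z(U)<0\), the one flow across \(U\to\overline U\) is at least \(-b_z(U)\).
\end{itemize}
Either way, the total flow from \(U\) to \(\overline U\), which is at most \(w(U,\overline U)\), is at least \(\lvert b_z(U)\rvert=\operatorname{dem}(U)\). Since \(w\ge0\) holds trivially, \(w\) is feasible for \eqref{lp:01-cut} and \(C(w)\ge C_{\mathrm{cut}}(p)\).

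For the indivisible consequence, take a deterministic perfectly oblivious algorithm in the ideal model. It has a single pattern \(\pattern\) shared by all pure inputs (all of which have leakage \(n\)). Its aggregated matrix \(w\) supports every pure input in the relaxed sense, because a correct execution relaxes to such a routing. Therefore \(C(\pattern)=nC(w)\ge nC_{\mathrm{cut}}(p)\).

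The main obstacle is the careful treatment of the relaxation. The routing in the capacity model is static and discards timing, so the net-flow identity must be justified for the aggregated flow: each item's path is a walk from its initial processor to its final processor, and summing its arc traversals across the cut telescopes to its net crossing. The other point to check is that every valid output really does have zero profile \(\rho(z)\) even though ties may be broken arbitrarily.
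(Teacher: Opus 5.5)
Your proposal is correct and follows the paper's proof: for each pure input and nontrivial cut \(U\), a zero surplus \(b_z(U)\ge0\) must leave \(U\) as zero mass, and a deficit forces an equal amount of one mass out because total mass per processor is preserved. This gives \(w(U,\overline U)\ge\lvert b_z(U)\rvert\) and hence Cut-LP feasibility, and the indivisible statement follows from the relaxation in Section~\ref{sec:general-reductions}. Your remarks that the terminal zero profile is forced to be \(\rho(z)\) and that the net crossing comes from flow conservation summed over \(U\) just make explicit steps the paper leaves implicit.
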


\begin{proof}
For the first statement, let \(w\) be such a shared aggregate capacity matrix. Fix \(z\in\mathcal F_{\mathrm{pure}}\) and a nontrivial cut \(U\). If \(b_z(U)\ge0\), then zero mass of amount \(b_z(U)\) must leave \(U\). If \(b_z(U)<0\), then one mass of amount \(-b_z(U)\) must leave \(U\), because every processor has the same total mass initially and finally. Thus \(w(U,\overline U)\ge\lvert b_z(U)\rvert\) for every pure input; maximizing over \(z\) gives \(w(U,\overline U)\ge\operatorname{dem}(U)\). Hence \(w\) is feasible for~\eqref{lp:01-cut}, whose objective is $C(w)$.
\end{proof}

\begin{corollary}
\label{cor:01lower}
    Any randomized 0-1 sorting algorithm whose failure probability and non-oblivious probability are negligible must have communication cost at least $C_{\mathrm{cut}}(p)n-o(n)$.
\end{corollary}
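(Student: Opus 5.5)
The plan mirrors the proof of Corollary~\ref{cor:permlower}: combine the deterministic cut bound of Theorem~\ref{lem:01-cut-necessity} with Lemma~\ref{lem:failure_removal} and Lemma~\ref{lem:finite_family_derand}, applied to a fixed finite family of inputs. The natural family is the set of pure inputs, $\mathcal F_{\mathrm{pure}}=\{0,1\}^p$, realized as $n$-item instances. Processor $i$ holds $n/p$ zeros if $z_i=0$ and $n/p$ ones if $z_i=1$. This family has size $2^p$, and all members share the leakage value $\Phi=n$. The proof of Theorem~\ref{lem:01-cut-necessity} only used that $w$ supports every pure input. So any deterministic algorithm that is perfectly oblivious on this family has a common capacity graph feasible for \eqref{lp:01-cut}, and hence cost at least $nC_{\mathrm{cut}}(p)$.

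The steps are as follows. Start with a randomized algorithm $\alg$ with negligible failure probability $\delta'$ and negligible non-oblivious probability $\delta$.

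First, apply Lemma~\ref{lem:failure_removal} with the always-correct fallback, namely sample sort or direct routing in the ideal model, of cost $O(n)$. This yields a Las Vegas, $(\delta+2\delta')$-oblivious algorithm of cost at most $C(\alg)+\delta'\cdot O(n)=C(\alg)+o(n)$.

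Second, restrict this algorithm to $\mathcal F_{\mathrm{pure}}$ and apply Lemma~\ref{lem:finite_family_derand}. This gives a deterministic ideal-model algorithm of cost at most $(C(\alg)+o(n))/(1-(2^p-1)(\delta+2\delta'))$. That algorithm is perfectly oblivious on the family, since it follows one pattern $\pattern^\star$ for every member.

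Third, invoke the cut argument to obtain $nC_{\mathrm{cut}}(p)\le (C(\alg)+o(n))/(1-(2^p-1)(\delta+2\delta'))$, and rearrange. Because $C_{\mathrm{cut}}(p)\le 2$ (the trivial gather algorithm is feasible), a multiplicative $1-o(1)$ factor costs only $o(n)$.

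The main obstacle is the size of the family: $2^p$ is polynomial in $n$ only when $p=O(\log n)$. For larger $p$, I would avoid needing all of $\mathcal F_{\mathrm{pure}}$ at once. The LP value is attained through finitely many cut constraints, one pure maximizer $z^{(U)}$ per cut $U$, which is still $2^p$ constraints. Instead I would use LP duality. An optimal dual solution of \eqref{lp:01-cut} is a nonnegative combination of cut constraints, and its support can be taken of size at most $p^2$ by a basic-solution argument, since there are $p^2$ primal variables. Each constraint in the support needs only its maximizing pure input. Repeating the cut argument with just those constraints therefore recovers the full bound $C_{\mathrm{cut}}(p)$ from a family $\mathcal F'$ with $|\mathcal F'|\le p^2$. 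Then $(|\mathcal F'|-1)(\delta+2\delta')=o(1)$ for negligible $\delta,\delta'$, and the derandomization loss is $o(n)$ for every $p$ in range. This yields $C(\alg)\ge C_{\mathrm{cut}}(p)n-o(n)$.
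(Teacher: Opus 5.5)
Your proposal is correct and follows the paper's route. It combines failure removal with finite-family derandomization on a family of at most $p^2$ pure inputs, taken from a basic optimal dual solution of the Cut-LP (exactly Lemma~\ref{lem:01-sparse-hard-family}), and then applies the cut argument of Theorem~\ref{lem:01-cut-necessity}; your observation that $C_{\mathrm{cut}}(p)$ is bounded, so the multiplicative $1-o(1)$ derandomization loss costs only $o(n)$, makes explicit a step the paper leaves implicit.
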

\begin{proof}
By Lemma~\ref{lem:01-sparse-hard-family}, the proof of Theorem~\ref{lem:01-cut-necessity} yields the same deterministic lower bound \(nC_{\mathrm{cut}}(p)\) on a family of at most \(p^2\) inputs. The result then follows from Lemmas~\ref{lem:failure_removal} and~\ref{lem:finite_family_derand}.
\end{proof}

\subsection{The Route-LP and the upper bound}
\label{sec:01-route-characterization}

\subsubsection{The finite-round Route-LP}
\label{sec:upper_lp_formulation}

Fix \(R\ge1\). The LP \(\mathrm{Route}(p,R)\) describes a public \(R\)-round communication pattern that simultaneously supports every pure input. Its normalized public lengths are shared by all inputs; for each pure input, separate flow and state variables certify a correct routing under those lengths.

For \(r\in[R]\) and \(i\ne j\), let \(\mathcal L_{ij}^r\) be the normalized public message length from processor \(i\) to processor \(j\) in round \(r\). For each \(z\in\mathcal F_{\mathrm{pure}}\) and \(\sigma\in\{0,1\}\), let \(f_{ij}^{\sigma,z,r}\) be the amount of key class \(\sigma\) sent on that arc, and let \(m_i^{\sigma,z,r}\) be the amount held by processor \(i\) at the beginning of round \(r\), with \(r=R+1\) denoting the terminal state.
All length, flow, and state variables are nonnegative. Each constraint below is imposed for every choice of its free indices, where \(z\in\mathcal F_{\mathrm{pure}}\), \(\sigma\in\{0,1\}\), \(r\in[R]\), \(i\in[p]\), and \(j\in[p]\setminus\{i\}\). The objective and remaining constraints are
\begin{equation*}
\begin{aligned}
\text{minimize}\quad&
\textstyle\frac1p\sum_{r=1}^{R}\sum_{i\ne j}\mathcal L_{ij}^r
&&\\[1ex]
\text{subject to}\quad&
m_i^{\sigma,z,r+1}-m_i^{\sigma,z,r}
-\textstyle\sum_{j\ne i}f_{ji}^{\sigma,z,r}
+\textstyle\sum_{j\ne i}f_{ij}^{\sigma,z,r}=0
&&\text{(Flow)}\\
&
m_i^{0,z,1}=z_i,\qquad
m_i^{1,z,1}=1-z_i
&&\text{(Init)}\\
&
m_i^{0,z,R+1}=\rho_i(z),\qquad
m_i^{1,z,R+1}=1-\rho_i(z)
&&\text{(Term)}\\
&
\textstyle\sum_{j\ne i}f_{ij}^{\sigma,z,r}\le m_i^{\sigma,z,r}
&&\text{(Avail)}\\
&
f_{ij}^{0,z,r}+f_{ij}^{1,z,r}\le \mathcal L_{ij}^{r}
&&\text{(Len)}.
\end{aligned}
\tag{\ensuremath{\mathrm{Route}(p,R)}}\label{lp:01-cr}
\end{equation*}
Constraints (Flow), (Init), and (Term) impose conservation and the prescribed initial and terminal states. Constraint (Avail) enforces local availability, whereas (Len) requires the two key classes to share the public message length. Unused message space may be filled with padding. For a feasible solution \(\mathcal S\), write \(C(\mathcal S)\) for its objective value. The optimal value of~\eqref{lp:01-cr} is \(C_{\mathrm{route}}(p,R)\).
The LP has size \(O(2^pRp^2)\), because it contains one state-and-flow block for each pure input.

We measure receive load in units of $n/p$. For a feasible solution $\mathcal S$, define its normalized load by
\begin{equation*}
\overline{\operatorname{load}}(\mathcal S)
:=\max_{r\in[R],\,j\in[p]}\sum_{i\ne j}\mathcal L_{ij}^r.
\end{equation*}

\begin{lemma}[Extension from pure inputs]\label{lem:lp_superposition}
Every feasible solution of \(\mathrm{Route}(p,R)\) extends to a correct divisible routing scheme for every input vector \(z\in[0,1]^p\) under the same communication pattern. The extension preserves normalized load.
\end{lemma}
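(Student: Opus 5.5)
The plan is to realize each input vector \(z\in[0,1]^p\) as a convex combination of pure inputs, chosen so that the sorted zero profile \(\rho\) behaves linearly along the combination. We then superpose the corresponding certified routings from the feasible solution \(\mathcal S\) of \(\mathrm{Route}(p,R)\). Since \(\mathcal S\) fixes the public lengths \(\mathcal L_{ij}^r\) independently of the input, the pattern, and hence the normalized load \(\overline{\operatorname{load}}(\mathcal S)\), is automatically preserved. The only real work is the choice of decomposition.

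\textbf{Step 1 (decomposition).} Let \(\lVert z\rVert_1=\ell+\beta\) with \(\ell=\lfloor\lVert z\rVert_1\rfloor\). If \(\ell=p\), then \(z\) is the all-ones vector, which is pure, so there is nothing to do. Otherwise consider the polytope
\[
P_\ell:=\Bigl\{x\in[0,1]^p:\ \ell\le \textstyle\sum_i x_i\le \ell+1\Bigr\},
\]
which contains \(z\). Its constraint matrix consists of the identity rows together with one all-ones row. This matrix is totally unimodular, because it is an interval matrix, and the right-hand sides are integral. Hence \(P_\ell\) is integral, and its vertices are indicator vectors \(\mathbf 1_S\) with \(|S|\in\{\ell,\ell+1\}\). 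By Carathéodory, we can write \(z=\sum_k\lambda_k\mathbf 1_{S_k}\) with \(\lambda_k\ge0\), \(\sum_k\lambda_k=1\), and every \(|S_k|\in\{\ell,\ell+1\}\).

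\textbf{Step 2 (linearity of \(\rho\) on this decomposition).} For a set with \(|S_k|=\ell\), the profile is \(\rho(\mathbf 1_{S_k})=(1^\ell,0,0,\ldots)\). For a set with \(|S_k|=\ell+1\), it is \(\rho(\mathbf 1_{S_k})=(1^\ell,1,0,\ldots)\). Put \(\mu:=\sum_{k:|S_k|=\ell+1}\lambda_k\). Taking the combination gives
\[
\sum_k\lambda_k\rho(\mathbf 1_{S_k})=(1^\ell,\mu,0,\ldots,0).
\]
On the other hand, summing coordinates of \(z=\sum_k\lambda_k\mathbf 1_{S_k}\) yields \(\ell+\mu=\lVert z\rVert_1=\ell+\beta\), so \(\mu=\beta\). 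Therefore \(\sum_k\lambda_k\rho(\mathbf 1_{S_k})=\rho(z)\).

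\textbf{Step 3 (superposition).} Define the flows and states for \(z\) as the same convex combination of the certified blocks:
\[
f_{ij}^{\sigma,z,r}:=\sum_k\lambda_k f_{ij}^{\sigma,\mathbf 1_{S_k},r},\qquad m_i^{\sigma,z,r}:=\sum_k\lambda_k m_i^{\sigma,\mathbf 1_{S_k},r}.
\]
We check each constraint of \(\mathrm{Route}(p,R)\).
\begin{itemize}
\item (Flow) and (Avail) are linear and homogeneous in the flow and state variables, so they are preserved under convex combinations.
\item (Init) is preserved because \(\sum_k\lambda_k\mathbf 1_{S_k}=z\), and \(\sum_k\lambda_k(1-\mathbf 1_{S_k})=1-z\).
\item (Term) is preserved by Step 2.
\item (Len) is preserved because each summand satisfies \(f^{0}+f^{1}\le\mathcal L_{ij}^r\) with the same right-hand side, and \(\sum_k\lambda_k=1\).
\end{itemize}
Nonnegativity is clear. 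The combined flows therefore define a correct divisible routing of \(z\) that uses exactly the public lengths \(\mathcal L\), so the load is unchanged.

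The main obstacle is Step 1. A naive layer-cake decomposition into nested sets fails, because \(\rho\) is not linear across sets of very different sizes. The fix is to restrict the decomposition to sizes \(\ell\) and \(\ell+1\) and to invoke integrality of \(P_\ell\).
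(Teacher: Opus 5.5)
Your proof is correct and follows the paper's argument. You decompose $z$ into pure vectors of weights $\ell$ and $\ell+1$, observe that the weight on the $(\ell+1)$-layer must equal $\beta$ so that $\rho$ superposes correctly, and then take the same convex combination of the certified blocks; your total-unimodularity argument for the slab $P_\ell$ is simply a self-contained proof of the hypersimplex decomposition that the paper cites. The paper adds one detail: it fixes a canonical choice of coefficients (the lexicographically first basic feasible solution), which makes the extended scheme deterministic, as Corollary~\ref{cor:01-route-to-divisible-algorithm} later requires.
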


\begin{proof}
Fix \(z\in[0,1]^p\), write \(\lVert z\rVert_1=\ell+\beta\), and let \(\mathcal F_k:=\{z'\in\mathcal F_{\mathrm{pure}}:\lVert z'\rVert_1=k\}\), where \(\mathcal F_k:=\emptyset\) for \(k\notin\{0,\ldots,p\}\). The standard hypersimplex decomposition~\cite{ziegler_polytopes} gives coefficients \(\lambda_{z'}\ge0\) such that
\[
z=\sum_{z'\in\mathcal F_\ell}\lambda_{z'}z'
  +\sum_{z'\in\mathcal F_{\ell+1}}\lambda_{z'}z',
\qquad
\sum_{z'\in\mathcal F_\ell\cup\mathcal F_{\ell+1}}\lambda_{z'}=1,
\qquad
\sum_{z'\in\mathcal F_{\ell+1}}\lambda_{z'}=\beta.
\]
Choose the lexicographically first basic feasible coefficient vector; it has support at most \(p+1\). Superpose the corresponding pure-input routing schemes:
\[
f_{ij}^{\sigma,z,r}:=\sum_{z'\in\mathcal F_\ell\cup\mathcal F_{\ell+1}}\lambda_{z'} f_{ij}^{\sigma,z',r},
\qquad
m_i^{\sigma,z,r}:=\sum_{z'\in\mathcal F_\ell\cup\mathcal F_{\ell+1}}\lambda_{z'} m_i^{\sigma,z',r}.
\]
Linearity preserves (Flow), (Avail), and (Len), and gives the initial state \((z_i,1-z_i)\). Moreover, for every \(i\in[p]\), \(m_i^{0,z,R+1}=\sum_{z'\in\mathcal F_\ell\cup\mathcal F_{\ell+1}}\lambda_{z'}\rho_i(z')=\rho_i(z)\), because the coefficients from \(\mathcal F_{\ell+1}\) have total weight \(\beta\). The same calculation gives the terminal one-mass vector. The extension uses the same communication pattern and hence preserves normalized load. Thus the superposed scheme is correct and deterministic under the fixed choice of coefficients.
\end{proof}

Lemma~\ref{lem:lp_superposition} routes every divisible input under the shared Route-LP lengths, giving the following forward construction.

\begin{corollary}[Route-LP forward construction]
\label{cor:01-route-to-divisible-algorithm}
Every feasible solution \(\mathcal S\) of \(\mathrm{Route}(p,R)\) induces a deterministic perfectly communication-oblivious \(R\)-round algorithm for divisible 0-1 sorting in the ideal model. Its normalized communication is $C(\mathcal S)$, and its normalized load is $\overline{\operatorname{load}}(\mathcal S)$.
\end{corollary}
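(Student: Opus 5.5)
The plan is to take a feasible solution \(\mathcal S\) of \(\mathrm{Route}(p,R)\), fix its public normalized lengths \(\mathcal L_{ij}^r\), and define the algorithm as follows. In round \(r\), processor \(i\) sends to processor \(j\) a message of exactly \((n/p)\mathcal L_{ij}^r\) words. This message carries \((n/p)f_{ij}^{0,z,r}\) zero-key items and \((n/p)f_{ij}^{1,z,r}\) one-key items, and the remainder is padding. Here the flows \(f^{\sigma,z,r}\) are those produced by Lemma~\ref{lem:lp_superposition} for the actual input vector \(z=z_X\). In the ideal model the oracle supplies \(z\), so every processor can compute these superposed flows deterministically; the lexicographic choice of coefficients in Lemma~\ref{lem:lp_superposition} makes the rule a fixed function of \(z\).

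The steps, in order, are as follows.

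\begin{enumerate}
\item \textbf{Realizability.} A processor can send only items it actually holds. I would prove by induction on \(r\) that the masses held by processor \(i\) at the beginning of round \(r\) are exactly \((n/p)m_i^{\sigma,z,r}\) for each class \(\sigma\).
\begin{itemize}
\item The base case is (Init).
\item The inductive step is (Flow).
\item (Avail) then guarantees that the class-\(\sigma\) items scheduled to leave \(i\) are at most those present.
\item (Len) guarantees that the payload fits inside the public length, so the padding amount is nonnegative.
\end{itemize}
\item \textbf{Correctness.} At \(r=R+1\), (Term) together with the superposition identity \(m_i^{0,z,R+1}=\rho_i(z)\) shows that processor \(i\) ends with zero mass \(\rho_i(z)\) and one mass \(1-\rho_i(z)\). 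This is exactly a valid sorted 0-1 output.
\item \textbf{Obliviousness.} The pattern \(L_{ij}^r=(n/p)\mathcal L_{ij}^r\) does not depend on \(X\). Hence the algorithm is deterministic and perfectly oblivious with respect to \(\Phi(X)=n\).
\item \textbf{Cost and load.} Summing the message lengths gives \(C(\pattern)/n=\frac1p\sum_{r}\sum_{i\ne j}\mathcal L_{ij}^r=C(\mathcal S)\). The per-round receive sum is \((n/p)\sum_{i\ne j}\mathcal L_{ij}^r\), so the normalized load equals \(\overline{\operatorname{load}}(\mathcal S)\).
\end{enumerate}

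The only delicate point is bookkeeping rather than a real obstacle. We must ensure that the superposed flows from Lemma~\ref{lem:lp_superposition} still satisfy (Avail) and (Len) for the fixed public lengths. This holds because the convex combination of feasible pure-input blocks is taken with the same \(\mathcal L\), and because the weights \(\lambda_{z'}\) are nonnegative and sum to \(1\). Since the setting is divisible, fractional amounts pose no difficulty. Integrality is deliberately deferred to the later indivisible, realistic-model construction.
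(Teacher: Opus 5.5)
Your proposal is correct and follows the paper's route. The paper also obtains the corollary directly from Lemma~\ref{lem:lp_superposition}: the superposed routing for $z_X$ runs under the shared public lengths $\mathcal L_{ij}^r$, so the pattern is input-independent and its cost and load are read off from $\mathcal S$. Your induction on held masses via (Init), (Flow), (Avail), and (Len) just spells out bookkeeping that the paper leaves implicit.
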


\subsubsection{\texorpdfstring{Finite-round bounds for small \(p\)}{Finite-round bounds for small p}}
\suppressfloats[t]

We computed the Cut-LP and Route-LP for \(2\le p\le11\) to compare the static lower bound with achievable finite-round communication costs. Table~\ref{tab:01-results} reports these bounds alongside the general-sorting coefficient \(2-2/p\), together with the round count and normalized load of each route. The Cut-LP optima are certified by matching rational primal and dual solutions, and the Route-LP upper bounds by rational feasible solutions. All certificates are verified using exact arithmetic.

\begin{table}[!ht]
\centering
\small
\setlength{\tabcolsep}{10pt}
\renewcommand{\arraystretch}{1.15}
\begin{tabular}{@{}cccccc@{}}
\toprule
\multicolumn{3}{c}{} & \multicolumn{3}{c}{Route-LP solution \(\mathcal S\)} \\
\cmidrule(l){4-6}
\(p\) & \(2-2/p\) & \(C_{\mathrm{cut}}(p)\) & \(C(\mathcal S)\) & \(R\) & \(\overline{\operatorname{load}}(\mathcal S)\) \\
\midrule
2 & 1 & 1 & \(\boldsymbol{1}\) & 1 & \(1\) \\
3 & \(4/3\) & 1 & \(\boldsymbol{1}\) & 3 & \(1\) \\
4 & \(3/2\) & 1 & 1.0011 & 400 & \(3/665\) \\
5 & \(8/5\) & \(6/5\) & \(\boldsymbol{6/5}\) & 5 & \(4/3\) \\
6 & \(5/3\) & \(11/9\) & \(\boldsymbol{11/9}\) & 5 & \(4/3\) \\
7 & \(12/7\) & \(9/7\) & \(\boldsymbol{9/7}\) & 6 & \(12/11\) \\
8 & \(7/4\) & \(37/28\) & \(\boldsymbol{37/28}\) & 5 & \(1.2659\) \\
9 & \(16/9\) & \(451/342\) & \(\boldsymbol{451/342}\) & 6 & \(1.0385\) \\
10 & \(1.8000\) & \(1.3375\) & \(1.3441\) & 4 & \(1.1688\) \\
11 & \(1.8182\) & \(1.3438\) & \(1.4062\) & 3 & \(1.5221\) \\
\bottomrule
\end{tabular}
\caption{Cut-LP lower bounds and finite-round Route-LP upper bounds for small \(p\). Bold route costs match the Cut-LP lower bound exactly. Cut-LP optima are displayed as fractions or rounded decimals; decimal route costs and loads are rounded upward.}
\label{tab:01-results}
\end{table}

For every \(3\le p\le11\), the displayed route cost is strictly below \(2-2/p\), showing that the communication advantage of 0-1 sorting can already be realized by finite-round algorithms. Moreover, for \(p=2,3,5,\ldots,9\), the route cost matches \(C_{\mathrm{cut}}(p)\) exactly, attaining the Cut-LP lower bound in only one to six routing rounds. Our computations identify the reported \(R\) as the minimum number of rounds needed to attain this optimum.

For \(p=4\), the 400-round route has cost at most \(1.0011\), within \(0.11\%\) of \(C_{\mathrm{cut}}(4)=1\), but none of the computed routes attains the lower bound. We conjecture that no finite-round Route-LP solution attains \(C_{\mathrm{cut}}(4)\). The unmatched bounds for \(p=10,11\) use only four and three rounds, respectively; larger horizons were not computed.

This raises the general question of whether increasing the number of rounds always closes the gap to the Cut-LP bound: for every \(p\ge2\), do we have
\[
\inf_{R\ge1}C_{\mathrm{route}}(p,R)
\stackrel{?}{=}
C_{\mathrm{cut}}(p)?
\]

\subsubsection{From cuts to finite-round routes}
\label{sec:01-cut-to-route}

We answer this question affirmatively by converting any feasible Cut-LP solution into a finite-round Route-LP solution with vanishing communication-cost overhead.

\begin{lemma}[Cut-to-route conversion]
\label{lem:01-cut-to-route}
Let $w$ be a feasible solution of $\mathrm{Cut}(p)$. For every integer $R\ge p+2$, there is a feasible solution of $\mathrm{Route}(p,R)$ with objective value at most $C(w)+2(p-1)^2/(p(R-2))$ and normalized load at most $pC(w)/(R-2)$.
\end{lemma}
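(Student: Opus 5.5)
The plan is to build the Route-LP solution as a time expansion of the static capacity graph $w$. Every round $r\in[R]$ gets the same public length $\mathcal L_{ij}^r:=w_{ij}/(R-2)+\eta$, where $\eta$ is a small additive padding. The value $\eta=2/(R-2)$ would produce the stated overhead $2(p-1)^2/(p(R-2))$ if it were applied to all $p(p-1)$ arcs over $R-2$ effective rounds, but I will fix $\eta$ only after seeing exactly which rounds need slack. The per-input flow variables are then built separately for each pure $z$; only the lengths are shared. Two steps are needed: (i) for each pure $z$, a static divisible routing within the capacities $w$ that moves both key classes correctly, and (ii) a scheduling argument that turns a static routing into an $R$-round routing respecting (Avail).

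For step (i), I would look for the total mass flow as a circulation $g^z$ with $0\le g^z\le w$, together with a zero-flow $f^z$ satisfying $0\le f^z\le g^z$ whose net outflow at each node $i$ is $b_i(z)$. The one-flow is then $g^z-f^z$, which automatically has net outflow $-b_i(z)$, so the two classes share each arc within capacity. To get $f^z$, I would apply Gale's supply--demand theorem: the cut constraint $w(U,\overline U)\ge b_z(U)$ gives a zero-flow $f^z\le w$. To lift $f^z$ to a circulation $g^z$ with $f^z\le g^z\le w$, I would use Hoffman's circulation theorem, whose condition is $f^z(\overline U,U)\le w(U,\overline U)$ for every $U$. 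This is where the symmetric half of the cut constraint, $w(U,\overline U)\ge -b_z(U)=\operatorname{dem}(\overline U)$-type bounds, must be used. A naive $f^z$ need not satisfy Hoffman's condition, so I would instead choose $f^z$ and $g^z$ jointly. Concretely, I would write a single feasibility LP for the pair (zero flow, one flow) with joint capacity $w$ and verify its Farkas dual via the cut constraints. I would also try exploiting that on pure inputs the zeros form a set $S$ and the targets are a prefix, so the demand structure is very constrained. I expect this joint feasibility to be the main obstacle.

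For step (ii), decompose $f^z$ and $g^z-f^z$ into paths; each path is simple, so it has length at most $p-1$. Each path carrying mass $\mu$ is split into $R-2$ equal slices of $\mu/(R-2)$, and slice $t$ is pipelined along the path, departing in round $t$ and advancing one hop per round. Arc $i\to j$ in round $r$ then carries at most the full static flow divided by $R-2$, plus boundary effects from slices in transit. The assumption $R\ge p+2$ ensures that every slice reaches its destination by round $R$. The condition (Avail) requires some care, since a node may be asked to forward mass before it has received it. The fix is to let intermediate nodes pre-send from their own stock, which is available because each node starts with mass $1$ and the flows are conservative. This uses extra capacity only in the start-up and drain phases, and the padding $\eta$ on each arc absorbs it. Bounding this padding by $2/(R-2)$ per arc overall yields the additive term $2(p-1)^2/(p(R-2))$ after summing over the $p(p-1)$ arcs and dividing by $p$.

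Finally, the normalized load bound follows because in each round node $j$ receives at most $\sum_i w_{ij}/(R-2)\le pC(w)/(R-2)$. This requires the start-up and drain slack to be charged against the rounds in which the pipelined flow is not yet at full rate, rather than being added on top; the bookkeeping of this charging is the second, more routine, difficulty.
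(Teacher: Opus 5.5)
There are genuine gaps in both steps.

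\textbf{Step (i).} The joint static routing you seek need not exist for the given $w$, so no Farkas argument on $w$ itself can succeed. Take $p=4$ and $w_{13}=w_{23}=w_{34}=w_{41}=w_{42}=1$, with all other entries $0$. By Lemma~\ref{lem:01-exact-cut-demand}, every cut constraint holds for both $U$ and $\overline U$. Yet for the pure input $z=e_2$, the zero mass moving from processor $2$ to processor $1$ and the one mass moving from $1$ to $2$ must both cross the arc $3\to4$, whose capacity is only $1$.

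The missing idea is to symmetrize first. Replacing $w_{ij}$ and $w_{ji}$ by their average keeps $C(w)$, and it keeps feasibility because $\operatorname{dem}(U)=\operatorname{dem}(\overline U)$. Then take any zero-flow $f^0\le w$ with divergence $b_i(z)$ at each $i$, obtained from max-flow/min-cut. Cancel antiparallel flow and cycles, and let the ones use the reversed arcs, $f^1_{ij}:=f^0_{ji}$. Now $f^0_{ij}+f^1_{ij}\le w_{ij}$, because at most one of $f^0_{ij},f^0_{ji}$ is nonzero and $w_{ij}=w_{ji}$. Reversing the paths of $f^0$ then decomposes $f^1$.

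\textbf{Step (ii): timing.} A slice launched in round $t$ on a path with $p-1$ arcs is in transit through round $t+p-2$. Launching slices in every round $1,\dots,R-2$ therefore overruns round $R$ whenever $p\ge5$, and the hypothesis $R\ge p+2$ does not prevent this. Pre-sending from local stock cannot repair it, because (Avail) is imposed per key class. In a pure input, an intermediate node of a zero path may hold no zeros at all. In the example above, even after symmetrization, every path from $2$ to $1$ passes through processor $3$ or $4$, and both hold only ones. Extra length does not create zeros where there are none.

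\textbf{Step (ii): accounting.} Lengths $w_{ij}/(R-2)$ in all $R$ rounds already cost $C(w)R/(R-2)$. That overhead, $2C(w)/(R-2)$, grows with $C(w)$ and exceeds the allowed term once $C(w)>(p-1)^2/p$. Padding of $2/(R-2)$ per arc adds $2(p-1)/(R-2)$, not $2(p-1)^2/(p(R-2))$. Finally, receive load is determined by the public lengths, so padding present in every round cannot be charged against under-used rounds.

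\textbf{How the paper handles it.} It uses length $w_{ij}/T$, with $T:=R-2$, only in rounds $1,\dots,T$. It launches just the first $T-p+1$ of the $T$ cohorts, which all arrive by round $T$. The remaining $(p-1)/T$ fraction is at most $(p-1)/T$ per processor, since $\lvert b_i(z)\rvert\le1$. It is moved by a two-round spray-and-forward with uniform, $w$-independent length $(p-1)/(pT)$ on every arc. These two rounds produce exactly the $2(p-1)^2/(p(R-2))$ term. Their load $(p-1)^2/(pT)$ is at most $pC(w)/T$ because $C(w)\ge(p-1)/p$.
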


\begin{proof}
Put \(T:=R-2\ge p\), and replace each pair \(w_{ij},w_{ji}\) by its average. This preserves the objective. For every nontrivial \(U\), the new outgoing capacity is \((w(U,\overline U)+w(\overline U,U))/2\ge(\operatorname{dem}(U)+\operatorname{dem}(\overline U))/2=\operatorname{dem}(U)\), so feasibility is preserved. We may therefore assume that \(w\) is symmetric.

Fix \(z\in\mathcal F_{\mathrm{pure}}\). Feasibility of \(w\) gives \(\lvert b_z(U)\rvert\le w(U,\overline U)\) for every \(U\subseteq[p]\).

\paragraph{Paired bidirectional key flows.}
Since \(\sum_i b_i(z)=0\), max-flow/min-cut applied to the cut inequalities yields a nonnegative zero-flow \(f^0\) satisfying \(f^0_{ij}\le w_{ij}\) and \(\sum_{j\ne i}(f^0_{ij}-f^0_{ji})=b_i(z)\) for every \(i\in[p]\). Cancel antiparallel flow and directed cycles, and define the one-flow by \(f^1_{ij}:=f^0_{ji}\). It has divergence \(-b_i(z)\) at each processor \(i\). Moreover, antiparallel cancellation and symmetry give \(f^0_{ij}+f^1_{ij}=f^0_{ij}+f^0_{ji}\le w_{ij}\) on every directed arc \(i\to j\), so the capacities suffice to carry both key flows simultaneously. Decompose \(f^0\) into simple surplus-to-deficit paths \(P\) with amounts \(f_P\); their reversals decompose \(f^1\). Routing zero mass \(f_P\) along each \(P\) and one mass \(f_P\) along its reverse therefore produces the prescribed terminal states.

\begin{figure}[!ht]
\centering
\begin{minipage}[t]{0.36\textwidth}
\centering
\vspace{0pt}
\textbf{(a) Paired bidirectional flows}\par\smallskip
\begin{tikzpicture}[
  x=1cm,y=1cm,font=\footnotesize,>=stealth,
  vertex/.style={circle,draw=black!70,fill=white,minimum size=0.62cm,inner sep=0pt},
  zero/.style={draw=blue!65!black,semithick,->},
  one/.style={draw=orange!80!black,semithick,densely dashed,->}
]
  \node[vertex] (v0) at (0.35,1.55) {$v_0$};
  \node[vertex] (v1) at (1.75,1.55) {$v_1$};
  \node[vertex] (v2) at (3.15,1.55) {$v_2$};
  \node[vertex] (v3) at (4.55,1.55) {$v_3$};

  \draw[zero,bend left=13] (v0) to (v1);
  \draw[zero,bend left=13] (v1) to (v2);
  \draw[zero,bend left=13] (v2) to (v3);
  \draw[one,bend left=13] (v3) to (v2);
  \draw[one,bend left=13] (v2) to (v1);
  \draw[one,bend left=13] (v1) to (v0);

  \node[blue!65!black] at (2.45,2.02) {zero $f_P$};
  \node[orange!80!black] at (2.45,1.08) {one $f_P$};
  \node[align=center] at (0.35,0.08) {$b_{v_0}(z)>0$};
  \node[align=center] at (4.55,0.08) {$b_{v_3}(z)<0$};
\end{tikzpicture}
\end{minipage}\hfill
\begin{minipage}[t]{0.61\textwidth}
\centering
\vspace{0pt}
\textbf{(b) Pipeline and cleanup}\par\smallskip
\begin{tikzpicture}[x=0.80cm,y=0.68cm,font=\scriptsize,>=stealth]
  \node at (4.35,4.15) {$p=4$, $T=7$; each cohort has mass $f_P/T$};

  \node[anchor=east] at (0.68,3.83) {$r$};
  \foreach \r in {1,...,7}
    \node at ({0.35+\r},3.83) {$\r$};
  \node[anchor=east] at (0.68,3.1) {$e_1$};
  \node[anchor=east] at (0.68,2.3) {$e_2$};
  \node[anchor=east] at (0.68,1.5) {$e_3$};

  \foreach \r/\c in {1/1,2/2,3/3,4/4} {
    \fill[blue!12] ({0.85+\r-1},2.7) rectangle ({0.85+\r},3.5);
    \node at ({0.35+\r},3.1) {$\c$};
  }
  \foreach \r/\c in {2/1,3/2,4/3,5/4} {
    \fill[blue!12] ({0.85+\r-1},1.9) rectangle ({0.85+\r},2.7);
    \node at ({0.35+\r},2.3) {$\c$};
  }
  \foreach \r/\c in {3/1,4/2,5/3,6/4} {
    \fill[blue!12] ({0.85+\r-1},1.1) rectangle ({0.85+\r},1.9);
    \node at ({0.35+\r},1.5) {$\c$};
  }
  \foreach \x in {0,...,7}
    \draw[black!45] ({0.85+\x},1.1) -- ({0.85+\x},3.5);
  \foreach \y in {0,...,3}
    \draw[black!45] (0.85,{1.1+0.8*\y}) -- (7.85,{1.1+0.8*\y});
  \draw[blue!65!black,semithick]
    (1.35,3.1) -- (2.35,2.3) -- (3.35,1.5);
  \draw[blue!65!black,semithick]
    (4.35,3.1) -- (5.35,2.3) -- (6.35,1.5);

  \draw[rounded corners=2pt,fill=orange!12,draw=black!60]
    (8.35,2.48) rectangle (11.25,3.82);
  \node[align=center,text width=2.05cm] at (9.8,3.15)
    {residual\\cohorts $5$--$7$\\$(p-1)f_P/T$};

  \node[draw=black!65,rounded corners=2pt,minimum width=1.25cm,
        minimum height=0.7cm,align=center] (spray) at (8.95,1.55)
    {spray\\$T+1$};
  \node[draw=black!65,rounded corners=2pt,minimum width=1.25cm,
        minimum height=0.7cm,align=center] (forward) at (10.65,1.55)
    {forward\\$T+2$};
  \draw[->,semithick] (spray) -- (forward);
  \draw[->,orange!80!black,semithick]
    (9.8,2.48) -- (9.8,2.25) -- (8.95,2.25) -- (spray.north);
\end{tikzpicture}
\end{minipage}
\caption{Cut-to-route schedule. (a) Pairing every zero-flow path with a reversed one-flow path gives combined key flow at most \(w_{ij}\) on arc \(i\to j\). (b) The first \(T-p+1\) cohorts are pipelined using public length \(w_{ij}/T\) per round; the remaining \(p-1\) cohorts use two-round spray-and-forward.}
\label{fig:01-cut-to-route-pipeline}
\end{figure}

\paragraph{Pipelining across rounds.}
For \(r\in[T]\), set \(\mathcal L_{ij}^r:=w_{ij}/T\). Split each paired path amount \(f_P\) into \(T\) equal cohorts. Launch the first \(T-p+1\) zero cohorts along \(P\), and their paired one cohorts along the reverse of \(P\), in rounds \(1,\ldots,T-p+1\); every launched cohort advances by one arc per round. Since every simple path has at most \(p-1\) arcs, all launched cohorts arrive by round \(T\) without violating temporal availability. In any round, their combined flow on \(i\to j\) is at most \((f^0_{ij}+f^1_{ij})/T\le w_{ij}/T=\mathcal L_{ij}^r\).
The first \(T\) rounds therefore form a feasible pipeline of cost at most \(C(w)\). In each pipeline round, the normalized receive load is $T^{-1}\max_j\sum_{i\ne j}w_{ij}\le pC(w)/T$.

\paragraph{Residual cleanup.}
The \(p-1\) unlaunched cohorts form an \(\eta:=(p-1)/T\) fraction of the paired transfer demand. Since \(\lvert b_i(z)\rvert\le1\), every processor has residual supply and demand at most \(\eta\). Split each residual source--destination transfer equally among all $p$ intermediate processors and then forward each share to its destination. Uniform splitting places at most \(\eta/p\) on every directed arc in either round, so set \(\mathcal L_{ij}^{T+1}=\mathcal L_{ij}^{T+2}:=\eta/p=(p-1)/(pT)\) for \(i\ne j\). These two rounds complete all residual transfers at additional cost \(2(p-1)^2/(pT)\).

Each cleanup round has normalized receive load $(p-1)^2/(pT)$. The cut separating the last $\lfloor p/2\rfloor$ all-zero processors from their disjoint sorted destinations and its complement give $C(w)\ge2\lfloor p/2\rfloor/p\ge(p-1)/p$. Hence the cleanup load is also at most $pC(w)/T$.
\end{proof}

Appending an empty round preserves feasibility, so \(C_{\mathrm{route}}(p,R)\) is nonincreasing in \(R\). Applying Lemma~\ref{lem:01-cut-to-route} to an optimal Cut-LP solution gives the matching asymptotic upper bound. Together with Theorem~\ref{lem:01-cut-necessity}, letting \(R\to\infty\) yields the following corollary.

\begin{corollary}[Cut/Route convergence]
\label{cor:01-lp-characterization}
For \(p\ge2\) and integer \(R\ge1\), \(C_{\mathrm{cut}}(p)\le C_{\mathrm{route}}(p,R)\). Moreover, if \(R\ge p+2\), then \(C_{\mathrm{route}}(p,R)\le C_{\mathrm{cut}}(p)+2(p-1)^2/(p(R-2))\). Consequently, \(C_{\mathrm{cut}}(p)=\inf_{R\ge1}C_{\mathrm{route}}(p,R)=\lim_{R\to\infty}C_{\mathrm{route}}(p,R)\).
\end{corollary}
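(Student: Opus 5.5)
The corollary has three parts, and I will prove them in order.

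\emph{Part 1: the lower bound $C_{\mathrm{cut}}(p)\le C_{\mathrm{route}}(p,R)$ for every $R\ge1$.} Take any feasible solution $\mathcal S$ of $\mathrm{Route}(p,R)$ and aggregate its public lengths over rounds: set $w_{ij}:=\sum_{r=1}^R\mathcal L_{ij}^r$ for $i\ne j$ and $w_{ii}:=0$. Then $C(w)=C(\mathcal S)$, so it suffices to show that $w$ is feasible for $\mathrm{Cut}(p)$.

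Fix a nontrivial cut $U$ and a pure input $z$. Sum the (Flow) constraints for the key class $\sigma=0$ over $i\in U$ and $r\in[R]$. Arcs internal to $U$ cancel. Using (Init) and (Term), this gives
\[
b_z(U)=\sum_{r}\bigl(f^{0,z,r}(U,\overline U)-f^{0,z,r}(\overline U,U)\bigr)\le\sum_r f^{0,z,r}(U,\overline U).
\]
The same computation for $\sigma=1$ gives $-b_z(U)\le\sum_r f^{1,z,r}(U,\overline U)$, since the one-mass surplus is $-b_i(z)$. Both flows are nonnegative, and (Len) bounds each by $\sum_r\mathcal L^r$ across the cut. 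Hence $|b_z(U)|\le w(U,\overline U)$. Maximizing over $z\in\mathcal F_{\mathrm{pure}}$ gives $w(U,\overline U)\ge\operatorname{dem}(U)$, so $w$ is Cut-LP feasible. Taking the infimum over $\mathcal S$ yields the inequality.

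This argument simply restates the reasoning of Theorem~\ref{lem:01-cut-necessity}. That theorem already covers it, because Corollary~\ref{cor:01-route-to-divisible-algorithm} turns every feasible $\mathcal S$ into a capacity matrix supporting all pure inputs. So I can either cite the theorem or give the telescoping argument above; I will give it for self-containment.

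\emph{Part 2: the upper bound for $R\ge p+2$.} Let $w$ be an optimal solution of $\mathrm{Cut}(p)$. The optimum exists because the LP is feasible (for example, all $w_{ij}=1$) and its objective is bounded below by $0$. Lemma~\ref{lem:01-cut-to-route} then produces a feasible solution of $\mathrm{Route}(p,R)$ with value at most $C_{\mathrm{cut}}(p)+2(p-1)^2/(p(R-2))$. This bounds $C_{\mathrm{route}}(p,R)$ by the same quantity.

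\emph{Part 3: convergence.} Appending an empty round preserves feasibility, so $C_{\mathrm{route}}(p,R)$ is nonincreasing in $R$. Therefore the limit exists and equals $\inf_{R\ge1}C_{\mathrm{route}}(p,R)$. Part 1 bounds this limit below by $C_{\mathrm{cut}}(p)$, and Part 2 shows it is at most $C_{\mathrm{cut}}(p)+O(p/R)\to C_{\mathrm{cut}}(p)$. Hence all three quantities coincide.

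The only step needing care is the sign bookkeeping in Part 1. One must check that the one-class surplus is exactly $-b_i(z)$, which follows from (Init) and (Term) because $1-\rho_i-(1-z_i)=-b_i$, and that the self-loop terms are absent. Everything else follows directly from earlier results.
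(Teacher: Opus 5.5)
Your proposal is correct and follows the paper's argument. The lower bound is Theorem~\ref{lem:01-cut-necessity} applied to the round-aggregated lengths, and your telescoping of (Flow), including the check that the one-class surplus is $-b_i(z)$, is a self-contained rederivation of it; the upper bound comes from Lemma~\ref{lem:01-cut-to-route} applied to an optimal Cut-LP solution, and the limit from monotonicity under appending an empty round.
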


Thus, even when a finite-round upper bound in Table~\ref{tab:01-results} does not match \(C_{\mathrm{cut}}(p)\), the optimal Route-LP cost approaches \(C_{\mathrm{cut}}(p)\) as \(R\) grows.

\subsubsection{Realization on indivisible items}
\label{sec:01-original-items}

It remains to realize the divisible upper bound on \(n\) indivisible items while preserving communication obliviousness. The following lemma compiles a finite-round Route-LP solution into a realistic algorithm. The time-expanded rounding argument is given in Appendix~\ref{app:01-integral-realization}.

\begin{lemma}[Indivisible realization]
\label{lem:01-route-lp-realization}
Let $\mathcal S$ be a feasible solution of $\mathrm{Route}(p,R)$. Then there is a deterministic perfectly communication-oblivious 0-1 sorting algorithm in the realistic model with $R+2$ rounds, communication $nC(\mathcal S)+O(Rp^2)$, and load at most $(n/p)\overline{\operatorname{load}}(\mathcal S)+O(Rp)$.
\end{lemma}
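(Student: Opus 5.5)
Given a feasible $\mathcal S$ of $\mathrm{Route}(p,R)$, we compile it into an indivisible, realistic-model algorithm with one extra round at each end. In the first round, every processor broadcasts its local zero count $|\{x\in X_i:t_x=0\}|$ to all others. This costs $p(p-1)$ words, and every message has a fixed length of one word. Afterward every processor knows the full vector $z$, hence $\|z\|_1$, $\ell$, $\beta$, and the sorted profile $\rho(z)$. Using a fixed deterministic rule that all processors share, each processor computes the same divisible routing for $z$. It does so via Lemma~\ref{lem:lp_superposition}, taking the lexicographically first basic coefficient vector and superposing the pure-input flows. Every processor can therefore simulate the whole plan locally with no further coordination.

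The core step is rounding this divisible plan, scaled by $n/p$, to integers on a time-expanded network. The nodes are pairs $(i,r)$ for $r\in[R+1]$, one node per key class per processor. The arcs are holdover arcs $(i,r)\to(i,r+1)$ and transfer arcs $(i,r)\to(j,r+1)$. We set the public integer lengths to $\lceil (n/p)\mathcal L_{ij}^r\rceil$ plus a slack of $O(1)$. The target is an integral routing of all zero and one items in which the total transfer on each arc in each round stays within the public length. The initial state must be the true integer counts, and the terminal state must be the integer rounding of $(n/p)\rho(z)$.

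Two features of this network make the rounding feasible. First, for each key class the time-expanded network is a single-commodity network, so flow integrality applies once the capacities and supplies are integers. Second, the coupling between the two classes appears only through the shared (Len) constraint. To decouple them, we split each public length into a zero part $\lceil (n/p)f^{0}_{ij}\rceil$ and a one part $\lceil (n/p)f^{1}_{ij}\rceil$. These parts sum to at most $\lceil (n/p)\mathcal L_{ij}^r\rceil+1$. With separate integer capacities for each class, we solve each class as an integral max-flow. A fractional feasible flow exists, namely the scaled divisible plan. Its supplies are integers, and its demands are integers up to the fractional leftover at the boundary processor $\ell+1$. Integrality then gives an integral flow. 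We will make the computation deterministic by choosing, say, the lexicographically first integral optimum. The exact terminal counts are met because the integer terminal profile is determined by $\|z\|_1 n/p$, which is an integer. Since $\rho$ can be fractional only at processor $\ell+1$, the scaled target there is automatically integral once the total is an integer. The ceilings add at most $O(1)$ words per arc per round, which totals $O(Rp^2)$ in communication and $O(Rp)$ in load on top of $nC(\mathcal S)$ and $(n/p)\overline{\operatorname{load}}(\mathcal S)$.

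The main obstacle is making sure the rounded capacities are public, meaning they depend only on $n$, $p$ and $\mathcal S$. The split into zero and one parts depends on $z$, so it cannot be used as the public length. We therefore publish only $\lceil (n/p)\mathcal L^r_{ij}\rceil+1$, and the per-class split is used privately, inside each processor's local simulation. Each actual message is padded to the public length, so the pattern depends only on $\mathcal S$ and $n$, and perfect obliviousness follows. The last extra round absorbs any remaining discrepancy, for example an unresolved rounding residual at the boundary processor $\ell+1$. It uses a fixed $O(1)$-word message on every arc, and all such corrections are also computed identically by every processor. This gives $R+2$ rounds in total. Correctness follows because every item follows its integral flow path and each processor ends with exactly its sorted allotment of zeros.
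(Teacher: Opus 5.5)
Your proof is correct. Its core is exactly the paper's rounding argument: per-class integral flows on the time-expanded network with capacities $\lceil (n/p)f^{\sigma}_{ij}\rceil$, combined via $\lceil a\rceil+\lceil b\rceil\le\lceil a+b\rceil+1$ so that both classes fit in the public length $\lceil (n/p)\mathcal L_{ij}^r\rceil+1$. The paper also imposes floor lower bounds on each arc, but only the ceiling bounds are used.

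Where you differ is coordination. In the paper, processor~1 collects the zero counts and computes the route. It then spends a second round sending every other processor a table of $2R(p-1)$ instruction counts. You instead broadcast the counts all-to-all and let every processor replicate the same deterministic computation, using the lexicographic superposition and a fixed tie-break for the integral flow. Your variant is cheaper: coordination costs $p(p-1)$ words and load $p-1$, rather than $\Theta(Rp^2)$ words and load $\Theta(Rp)$. It also needs only $R+1$ rounds. The paper's variant merely concentrates the local computation on a single processor, which does not matter for the cost measures here.

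Your final ``correction'' round is vacuous, and you should not justify it as you do. You yourself observe that $(n/p)\rho_i(z)$ is an integer for every $i$, because $p\mid n$ and the number of zeros is an integer. Hence the integral flows meet the terminal counts exactly, and there is no residual at processor $\ell+1$. The claim that a fixed $O(1)$-word round ``absorbs any remaining discrepancy'' would need an argument if such a discrepancy could actually arise. Drop that sentence. If you want exactly $R+2$ rounds, append an empty round; otherwise just note that the bound holds with a round to spare.
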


The algorithm and analysis are given in Appendix~\ref{app:01-generic-realization}.

Let \(C_{01}^{\mathrm{opt}}(p,n)\) be the minimum communication cost of deterministic perfectly communication-oblivious 0-1 sorting in the realistic model.

\begin{theorem}[Communication bound on indivisible items]
\label{thm:01-original-coefficient}
For every \(p\ge2\) and integer \(R\ge p+2\),
\[
nC_{\mathrm{cut}}(p)
\le C_{01}^{\mathrm{opt}}(p,n)
\le
\left(C_{\mathrm{cut}}(p)+\frac{2(p-1)^2}{p(R-2)}\right)n
+O(Rp^2).
\]
The upper bound is achieved by a deterministic perfectly communication-oblivious algorithm in the realistic model using $R+2$ rounds and load at most $\frac{nC_{\mathrm{cut}}(p)}{R-2}+O(Rp)$.
\end{theorem}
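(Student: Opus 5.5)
The plan is to prove the two inequalities separately, each from results already established.

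\emph{Lower bound.} By Theorem~\ref{lem:01-cut-necessity}, every deterministic perfectly communication-oblivious 0-1 sorting algorithm, even in the ideal model, has communication at least $nC_{\mathrm{cut}}(p)$. A realistic-model algorithm can be simulated in the ideal model without increasing its communication pattern, since the oracle only adds information. The lower bound on $C_{01}^{\mathrm{opt}}(p,n)$ is therefore immediate.

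\emph{Upper bound.} Fix $R\ge p+2$ and take an optimal solution $w^\star$ of \eqref{lp:01-cut}, so $C(w^\star)=C_{\mathrm{cut}}(p)$.

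First, apply Lemma~\ref{lem:01-cut-to-route} to $w^\star$. This gives a feasible solution $\mathcal S$ of $\mathrm{Route}(p,R)$ with
\[
C(\mathcal S)\le C_{\mathrm{cut}}(p)+\frac{2(p-1)^2}{p(R-2)},
\qquad
\overline{\operatorname{load}}(\mathcal S)\le \frac{pC_{\mathrm{cut}}(p)}{R-2}.
\]

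Second, feed $\mathcal S$ into Lemma~\ref{lem:01-route-lp-realization}. This yields a deterministic perfectly communication-oblivious realistic-model algorithm with $R+2$ rounds. Its communication is $nC(\mathcal S)+O(Rp^2)$, which is at most the claimed bound. Its load is at most
\[
\frac{n}{p}\,\overline{\operatorname{load}}(\mathcal S)+O(Rp)\le \frac{nC_{\mathrm{cut}}(p)}{R-2}+O(Rp).
\]
Since $C_{01}^{\mathrm{opt}}(p,n)$ is a minimum over such algorithms, the upper bound follows.

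The substantive work lies entirely in the two lemmas already proved, so the only obstacle here is bookkeeping. One point to check is that the $O(\cdot)$ constants in Lemma~\ref{lem:01-route-lp-realization} do not depend on $\mathcal S$; they depend only on $R$ and $p$ through rounding. Another is that the round count $R+2$ matches the $R$ used in the cut-to-route conversion, rather than an extra shift.

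Existence of an optimal $w^\star$ is guaranteed because \eqref{lp:01-cut} is a feasible LP with a nonnegative objective. Alternatively, one can take a near-optimal $w$ and absorb the slack. Either way, the theorem is the composition of Theorem~\ref{lem:01-cut-necessity}, Lemma~\ref{lem:01-cut-to-route}, and Lemma~\ref{lem:01-route-lp-realization}.
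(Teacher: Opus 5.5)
Your proposal is correct and matches the paper's proof: the lower bound comes from Theorem~\ref{lem:01-cut-necessity}, since a realistic-model algorithm is also an ideal-model algorithm, and the upper bound applies Lemma~\ref{lem:01-cut-to-route} to an optimal Cut-LP solution and then Lemma~\ref{lem:01-route-lp-realization}. Your load calculation $(n/p)\cdot pC_{\mathrm{cut}}(p)/(R-2)=nC_{\mathrm{cut}}(p)/(R-2)$ and your check that the $O(Rp^2)$ and $O(Rp)$ overheads do not depend on $\mathcal S$ are both right.
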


\begin{proof}
Theorem~\ref{lem:01-cut-necessity} gives the lower bound. For the upper bound, choose an optimal Cut-LP solution. Lemma~\ref{lem:01-cut-to-route} gives a feasible solution of $\mathrm{Route}(p,R)$ with objective value at most $C_{\mathrm{cut}}(p)+2(p-1)^2/(p(R-2))$ and normalized load at most $pC_{\mathrm{cut}}(p)/(R-2)$. Lemma~\ref{lem:01-route-lp-realization} then gives the stated bounds.
\end{proof}

Suppose $p=o(n^{1/3})$, and choose an integer $R\ge p+2$ such that $p=o(R)$ and $Rp^2=o(n)$; such a choice exists because $p^3=o(n)$. Together with the bound $C_{\mathrm{cut}}(p)\le3/2$ in Theorem~\ref{thm:01-cut-bounds}, Theorem~\ref{thm:01-original-coefficient} gives an algorithm using $R+2$ rounds, communication $C_{\mathrm{cut}}(p)n+O(pn/R+Rp^2)=C_{\mathrm{cut}}(p)n+o(n)$, and load $O(n/R+Rp)=o(n/p)$. Increasing $R$ reduces the temporal-routing surcharge $O(pn/R)$, whereas the indivisible-realization overhead $O(Rp^2)$ grows with $R$. For fixed $p$, any $R\to\infty$ with $R=o(n)$ is admissible, and consequently $C_{01}^{\mathrm{opt}}(p,n)/n\to C_{\mathrm{cut}}(p)$.

\subsection{Explicit bounds and separation from general sorting}
\label{sec:01-cut-coefficient}

Having established that $C_{\mathrm{cut}}(p)$ is the exact asymptotic communication coefficient for fixed $p$, and more generally when $p=o(n^{1/3})$, we now bound its value and compare it with the rigid-destination coefficient for general sorting.

\begin{theorem}[Explicit bounds on the Cut-LP coefficient]
\label{thm:01-cut-bounds}
For every \(p\ge2\), \(p^{-1}\lfloor(4p-2)/3\rfloor\le C_{\mathrm{cut}}(p)\le3/2\). For every \(p\ge3\), \(C_{\mathrm{cut}}(p)\le2-2/p-1/6\).
\end{theorem}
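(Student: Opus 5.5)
The plan is to treat the three inequalities separately. The lower bound comes from an explicit dual solution of \eqref{lp:01-cut}, and the two upper bounds come from explicit primal capacity matrices $w$ checked against every cut constraint. All three arguments need a usable formula for $\operatorname{dem}(U)$, so I would derive that first.

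\textbf{Computing $\operatorname{dem}(U)$.} For a pure input with zero set $Z$ and $k=|Z|$, we have $\rho(z)=\mathbf 1_{[k]}$. Hence $b_z(U)=|Z\cap U|-|U\cap[k]|$. Maximizing over $Z$ with $|Z|=k$ gives $\max b_z(U)=\min(k,|U|)-|U\cap[k]|$, and the most negative value is $\max(0,k-|\overline U|)-|U\cap[k]|$. Therefore
\[
\operatorname{dem}(U)=\max_{0\le k\le p}\max\bigl\{\min(k,|U|)-|U\cap[k]|,\ |U\cap[k]|-\max(0,k-|\overline U|)\bigr\}.
\]
In words, $\operatorname{dem}(U)$ is governed by how far $U$ is from being a prefix or a suffix of $[p]$. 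For example, $\operatorname{dem}(\{i\})=1$ for every $i$, and $\operatorname{dem}(U)\le\min(|U|,|\overline U|)$ always.

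\textbf{Lower bound via LP duality.} The dual of \eqref{lp:01-cut} asks for weights $y_U\ge0$ on nontrivial cuts such that every arc $i\to j$ satisfies $\sum_{U\ni i,\,U\not\ni j}y_U\le 1/p$. Its value is $\sum_U y_U\operatorname{dem}(U)$.

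Singletons alone give the value $1$, which is not enough. I would first read off the dual certificates computed for $p\le 11$ and extract a periodic pattern of period $3$, since $\lfloor(4p-2)/3\rfloor$ grows by $4$ every $3$ processors. The candidate pattern uses the singleton cuts together with cuts that are unions of non-prefix, non-suffix blocks, such as interleaved pairs or triples of the form $\{i,i+2\}$, whose demand is $2$ while they cross few arcs. The weights would be scaled so that each arc is covered with total weight exactly $1/p$ after a careful count.

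Verifying arc coverage reduces to a finite case analysis on the residues of $i$ and $j$ modulo $3$, with separate handling of the boundary processors. The boundary accounts for the $-2$ and the floor. I expect this step to be the main obstacle: guessing the right cut family so that the coverage is uniform on every arc while the demands sum to $\lfloor(4p-2)/3\rfloor$.

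\textbf{Upper bounds via explicit $w$.} For $3/2$, I would take $w$ to be a sum of a uniform part and a structured part. The uniform part has $w_{ij}=c/p$ on all arcs; it handles cuts with $\operatorname{dem}(U)$ close to $\min(|U|,|\overline U|)$, using $w(U,\overline U)=c|U||\overline U|/p\ge c\min(|U|,|\overline U|)/2$. The structured part is a symmetric path or ``mirror'' matching $i\leftrightarrow p+1-i$. It handles the cuts that are nearly prefixes or suffixes, where $|U||\overline U|$ is small relative to the demand. The two parameters would be balanced so that every cut constraint holds, using the formula for $\operatorname{dem}$, with total cost $\le 3/2$.

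For $2-2/p-1/6$ with $p\ge3$, I would start from the uniform matrix $w_{ij}=2/p$, which costs $2-2/p$ and is feasible because $2|U||\overline U|/p\ge\min(|U|,|\overline U|)$. I would then remove capacity totaling $p/6$ from arcs that are slack in every cut. A natural choice is to lower the capacities on arcs among the extreme processors $\{1,p\}$ and their neighbours, since cuts containing both ends of $[p]$ have demand strictly below $\min(|U|,|\overline U|)$. The slack check is then a short case split on $|U|$ and on whether $U$ contains $1$ and/or $p$. Since this bound only matters when it beats $3/2$, small $p$ can be confirmed directly against Table~\ref{tab:01-results}.
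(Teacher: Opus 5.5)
Your demand formula is correct, and your closing observation for the third bound matches the paper. However, the two bounds that carry the theorem, the lower bound and the $3/2$ upper bound, are not established, and the constructions you sketch for them cannot work.

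\textbf{Lower bound.} You never exhibit a dual solution, and the family you suggest (singletons plus small interleaved cuts such as $\{i,i+2\}$) is provably too weak. Summing your dual constraints over all $p(p-1)$ arcs gives $\sum_U|U|\,|\overline U|\,y_U\le p-1$. Suppose every cut in the support has $\min\{|U|,|\overline U|\}\le t$. Then $\operatorname{dem}(U)\le\min\{|U|,|\overline U|\}\le|U|\,|\overline U|/(p-t)$, so the dual value is at most $(p-1)/(p-t)$. For $t=3$ this is below $\lfloor(4p-2)/3\rfloor/p$ for every $p\ge13$. To approach $4/3$ you need cuts of linear size whose demand is comparable to their size.

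The paper's certificate, written in primal averaging form, fixes $k\approx(p+1)/3$ and uses every $k$-subset $V\subseteq\{k+1,\dots,p\}$. The pure input with zero support $V$ must move all its zeros into $[k]$, so $w(V,\overline V)\ge k$ and $w(\overline V,V)\ge k$. Let $w_{\mathrm{cross}}$ be the smaller of $w([k],\overline{[k]})$ and $w(\overline{[k]},[k])$. Averaging over $V$ gives $\binom{p-k-1}{k-1}w_{\mathrm{cross}}+\binom{p-k-2}{k-1}w(\overline{[k]},\overline{[k]})\ge k\binom{p-k}{k}$. The constraints for $[k]$ and $\overline{[k]}$ give $w_{\mathrm{cross}}\ge k$, and $\sum_{i,j}w_{ij}\ge w(\overline{[k]},\overline{[k]})+2w_{\mathrm{cross}}$. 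Together these yield $\sum_{i,j}w_{ij}\ge\min\{p+k-1,2(p-k)\}$, whose maximum over $k$ is $\lfloor(4p-2)/3\rfloor$. This certificate puts no weight on arcs inside $[k]$, so exact uniform coverage of every arc is unnecessary. It is also not the optimal dual: for example, it gives $8/7<9/7=C_{\mathrm{cut}}(7)$. Extrapolating from the optimal certificates behind Table~\ref{tab:01-results} would therefore not lead you to it.

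\textbf{Upper bound $3/2$.} Uniform plus mirror matching fails. Take $6\mid p$ with $p\ge12$, and let $w$ be $c/p$ on every arc plus $m$ on each mirror arc $i\to p+1-i$, for cost $c(1-1/p)+m$. The middle third $U=\{p/3+1,\dots,2p/3\}$ is mirror-invariant and has $\operatorname{dem}(U)=p/3$. Only the uniform part crosses it, so $2cp/9\ge p/3$ forces $c\ge3/2$. The half cut $[p/2]$ has demand $p/2$ but receives only $cp/4+mp/2$, which forces $m\ge1-c/2$. The cost is therefore at least $1+c(1/2-1/p)\ge7/4-3/(2p)>3/2$. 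A uniformly weighted path is worse: it crosses each of these interval cuts in at most two arcs, so it needs weights of order $p$ to cover a deficit of order $p$.

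The missing idea is a genuinely non-uniform block structure. The paper splits the processors into three consecutive thirds (fractionally, via $\mu_{i,a}$) and puts no capacity inside a third. Arcs between thirds $a$ and $b$ get capacity proportional to $|a-b|$, namely $w_{ij}=\frac{3p}{16}\sum_{a,b}|a-b|\mu_{i,a}\mu_{j,b}$, which costs exactly $3/2$. Feasibility is then a quadratic case analysis in the normalized cut size $3|U|/p$, using the exact demand formula.

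\textbf{Third bound.} Your shaving plan fails. Under the uniform $2/p$ matrix, arcs among a bounded set of extreme processors carry only $O(1/p)$ capacity in total, far less than $p/6$. Moreover, the arc $1\to p$ crosses the half cut, which is tight for that matrix when $p$ is even. Your closing remark is, however, exactly the paper's argument: $3/2\le2-2/p-1/6$ for $p\ge6$, and the certified values $1,1,6/5$ handle $p=3,4,5$. So this part goes through once the $3/2$ bound is actually proved.
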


The lower and upper bounds follow from Lemmas~\ref{lem:01-cut-lower-bound} and~\ref{lem:01-cut-upper-bound}, respectively.

In particular, \(4/3-O(1/p)\le C_{\mathrm{cut}}(p)\le3/2\), while \((2-2/p)-C_{\mathrm{cut}}(p)\ge1/6\) for \(p\ge3\). Thus \(C_{\mathrm{cut}}(p)\) is uniformly smaller than the rigid-destination coefficient.

\section{Conclusion}
We establish tight leading-order communication bounds for communication-oblivious permutation, sorting, and 0-1 sorting when \(p=o(n^{1/3})\). The optimal costs are \((2-2/p)n+o(n)\) for permutation and general sorting, and \(C_{\mathrm{cut}}(p)n+o(n)\) for 0-1 sorting. Our constructive conversion from cut capacities to finite-round routes shows that the Cut-LP exactly characterizes the latter coefficient.
For every \(p\ge3\), we prove \(C_{\mathrm{cut}}(p)\le 2-2/p-1/6\), showing that the freedom to interchange equal-key items makes 0-1 sorting strictly cheaper than general sorting. Determining \(C_{\mathrm{cut}}(p)\) explicitly remains an open problem.

\bibliographystyle{plain}
\bibliography{main}

\appendix
\section{Proof of the Permutation Communication Lower Bound}
\label{app:perm-normalized-lower-bound}

The following is an adaptation of the direct-versus-detour counting argument in Keslassy et al.~\cite[Lemmas~13--15]{keslassy2005optimal}, rewritten for our normalized communication objective.

For \(s\in[p-1]\), let \(\pi_s(i):=((i-1+s)\bmod p)+1\), and let \(X^{(s)}\) be the balanced permutation input in which every item \(x\in X_i\) has key \(t_x=\pi_s(i)\). Define \(\cycf:=\{X^{(s)}:s\in[p-1]\}\). Thus \(|\cycf|=p-1\), all its inputs have the same leakage, and after normalization \(X^{(s)}\) induces one unit of demand from \(i\) to \(\pi_s(i)\).

\begin{proof}[Proof of Theorem~\ref{lb_cyclic_shift}]
Fix a normalized capacity matrix \(w\) as in the theorem. For each \(s\in[p-1]\), choose a fractional routing for \(X^{(s)}\), and let \(f_{ij}^{(s)}\) be the amount routed on arc \(i\to j\). Since \(f_{ij}^{(s)}\le w_{ij}\),
\begin{equation}\label{eq:perm-flow-upper}
\sum_{s=1}^{p-1}\sum_{i\ne j}f_{ij}^{(s)}
\le (p-1)\sum_{i,j}w_{ij}.
\end{equation}

Let \(f_s^{\mathrm{dir}}\) be the total mass routed directly on the prescribed arcs \(i\to\pi_s(i)\), summed over \(i\in[p]\). Since \(\pi_s\) has no fixed point, all \(p\) units must move. Each unit routed directly contributes one to the total arc flow, while every remaining unit contributes at least two. Hence
\begin{equation}\label{eq:perm-flow-lower}
\sum_{i\ne j}f_{ij}^{(s)}
\ge f_s^{\mathrm{dir}}+2\bigl(p-f_s^{\mathrm{dir}}\bigr)
=2p-f_s^{\mathrm{dir}}.
\end{equation}

For every off-diagonal arc \(i\to j\), there is a unique \(s\in[p-1]\) such that \(j=\pi_s(i)\). Thus this arc carries directly routed mass for at most one cyclic-shift demand, and
\begin{equation}\label{eq:perm-direct-upper}
\sum_{s=1}^{p-1}f_s^{\mathrm{dir}}
\le\sum_{i,j}w_{ij}.
\end{equation}

Combining the three inequalities gives \(p\sum_{i,j}w_{ij}\ge2p(p-1)\). Therefore \(C(w)=p^{-1}\sum_{i,j}w_{ij}\ge2-2/p\), as claimed.

The inputs in \(\cycf\) are valid indivisible balanced-permutation inputs, so the preliminary relaxation transfers the bound to indivisible algorithms.
\end{proof}

\section{Sparse cut and input families}
\label{app:01-sparse-support}

For every nonempty family \(\mathcal F\subseteq\mathcal F_{\mathrm{pure}}\), define \(\operatorname{dem}_{\mathcal F}(U):=\max_{z\in\mathcal F}\lvert b_z(U)\rvert\), and let \(C_{\mathrm{cut}}(p;\mathcal F)\) be the optimum of \(\mathrm{Cut}(p)\) after restricting the input family to \(\mathcal F\), equivalently after replacing \(\operatorname{dem}(U)\) by \(\operatorname{dem}_{\mathcal F}(U)\) in~\eqref{lp:01-cut}. Thus \(\operatorname{dem}_{\mathcal F_{\mathrm{pure}}}(U)=\operatorname{dem}(U)\).

\begin{lemma}[Sparse cut and input families]
\label{lem:01-sparse-hard-family}
For every \(p\ge2\), there exist a family \(\mathcal C_p\) of nontrivial cuts
and a family \(\mathcal F_p^\star\subseteq\mathcal F_{\mathrm{pure}}\) such that
\[
|\mathcal C_p|\le p^2,
\qquad
|\mathcal F_p^\star|\le p^2.
\]
Retaining only the rows indexed by \(\mathcal C_p\)
in~\eqref{lp:01-cut} preserves its optimal value, and
\(C_{\mathrm{cut}}(p;\mathcal F_p^\star)=C_{\mathrm{cut}}(p)\).
\end{lemma}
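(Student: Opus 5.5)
Our plan is to use LP duality (a basic-solution or Carathéodory-type argument), reading off sparse supports from an optimal dual solution. We fix an optimal solution \(w^\star\) of \eqref{lp:01-cut} and write the dual. The dual has one variable \(y_U\ge0\) for each nontrivial cut \(U\). Its constraints are \(\sum_{U:\,i\in U,\,j\notin U}y_U\le 1/p\), one for each ordered pair \(i\ne j\). Its objective is to maximize \(\sum_U y_U\operatorname{dem}(U)\). The primal has only \(p(p-1)\) variables, so the dual has \(p(p-1)\) nontrivial constraints plus nonnegativity. Any basic optimal dual solution \(y^\star\) therefore has at most \(p(p-1)\le p^2\) nonzero coordinates. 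We set \(\mathcal C_p:=\operatorname{supp}(y^\star)\).

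Restricting \eqref{lp:01-cut} to the rows in \(\mathcal C_p\) can only decrease the optimum, since it is a relaxation. On the other hand, \(y^\star\) remains feasible for the dual of the restricted LP, because dropping primal rows just fixes the corresponding dual variables to zero. By weak duality, the restricted optimum is therefore at least \(\sum_{U\in\mathcal C_p}y^\star_U\operatorname{dem}(U)=C_{\mathrm{cut}}(p)\), which gives equality. We need to check that the dual is feasible and bounded: \(y=0\) is feasible, and the primal is feasible by taking \(w\) large. Strong duality then applies.

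For the input family, for each \(U\in\mathcal C_p\) we pick a maximizer \(z^U\in\mathcal F_{\mathrm{pure}}\) of \(\lvert b_z(U)\rvert\), so that \(\operatorname{dem}(U)=\lvert b_{z^U}(U)\rvert\). We set \(\mathcal F_p^\star:=\{z^U:U\in\mathcal C_p\}\), which has size at most \(\lvert\mathcal C_p\rvert\le p^2\). Then \(\operatorname{dem}_{\mathcal F_p^\star}(U)=\operatorname{dem}(U)\) for every \(U\in\mathcal C_p\), and \(\operatorname{dem}_{\mathcal F_p^\star}\le\operatorname{dem}\) everywhere. This yields a sandwich. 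First, \(C_{\mathrm{cut}}(p;\mathcal F_p^\star)\le C_{\mathrm{cut}}(p)\) by monotonicity of the right-hand sides. Second, \(C_{\mathrm{cut}}(p;\mathcal F_p^\star)\) is at least the optimum of the LP restricted to rows \(\mathcal C_p\) with demands \(\operatorname{dem}_{\mathcal F_p^\star}\). Those demands coincide with \(\operatorname{dem}\) on \(\mathcal C_p\), so this last optimum equals \(C_{\mathrm{cut}}(p)\) by the previous paragraph.

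The only step needing care is justifying the support bound for the basic dual solution. An optimum exists because the dual feasible region is a bounded polytope: each \(y_U\) appears in the constraint for some arc crossing \(U\), so \(y_U\le 1/p\). A vertex optimum then exists, and at a vertex the number of positive coordinates is at most the number of non-nonnegativity constraints, namely \(p(p-1)\). Everything else is routine monotonicity and weak duality.
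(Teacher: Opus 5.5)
Your proof is correct and takes essentially the same route as the paper. You take a basic optimal solution of the bounded dual, whose support has at most \(p(p-1)\le p^2\) cuts since the diagonal capacities only yield vacuous dual constraints; you set \(\mathcal C_p\) to that support, pick one maximizing pure input per supported cut, and conclude by weak duality and monotonicity. Your final sandwich through the row-restricted LP is just a primal rephrasing of the paper's observation that \(y^\star\) keeps its objective in the \(\mathcal F_p^\star\)-restricted dual.
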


\begin{proof}
The Cut-LP has \(p^2\) capacity entries. After scaling the dual variables, its dual is
\[
\begin{array}{ll}
\text{maximize}
&\displaystyle
\frac1p\sum_{\emptyset\ne U\subsetneq[p]}\operatorname{dem}(U)y_U\\[1ex]
\text{subject to}
&\displaystyle
\sum_{\substack{\emptyset\ne U\subsetneq[p]\\i\in U,\ j\notin U}}y_U
\le1
\qquad(i,j\in[p]),\\[1ex]
&y_U\ge0
\qquad(\emptyset\ne U\subsetneq[p]).
\end{array}
\]

\paragraph{Reducing the dual support.}
Every nontrivial \(U\) contains some \(i\) and excludes some \(j\), so the corresponding constraint implies \(y_U\le1\). Thus the feasible region is bounded and has an optimal basic feasible solution \(y^*\). After adding slacks, the \(p^2\) inequalities above give \(p^2\) equations, so the standard basic-solution theorem implies that \(|\operatorname{supp}(y^*)|\le p^2\)~\cite{schrijver1986theory}.

\paragraph{Sparse cut and input families.}
Set \(\mathcal C_p:=\operatorname{supp}(y^*)\). The restriction of \(y^*\) is feasible for the
dual of the Cut-LP containing only these cut rows and retains objective
\(C_{\mathrm{cut}}(p)\). Weak duality and the fact that rows were only
deleted show that the reduced primal has the same optimal value.

For each \(U\in\mathcal C_p\), choose one vector in \(\mathcal F_{\mathrm{pure}}\) attaining \(\operatorname{dem}(U)\), and let \(\mathcal F_p^\star\) be the set of chosen vectors, discarding repetitions. Then
\(|\mathcal F_p^\star|\le p^2\) and \(\operatorname{dem}_{\mathcal F_p^\star}(U)=\operatorname{dem}(U)\) on \(\mathcal C_p\).
Consequently, \(y^*\) has the same objective in the
\(\mathcal F_p^\star\)-restricted dual. Weak duality gives
\(C_{\mathrm{cut}}(p;\mathcal F_p^\star)\ge C_{\mathrm{cut}}(p)\), while deleting
input constraints gives the reverse inequality.
\end{proof}

\section{Indivisible Realization of the 0-1 Sorting Algorithm}

\subsection{Integral realization of a divisible route}
\label{app:01-integral-realization}

We now transfer a fixed feasible divisible routing to the BSP model. Its public message lengths are normalized divisible amounts; we enlarge each round-arc length by fewer than two words and use flow integrality to route the items.

\begin{lemma}[Flow integrality with prescribed rounding]
\label{lem:flow_integrality}
Let a directed network have integral node demands and integral lower and upper arc bounds. If \(f\) is a feasible fractional flow, then there is an integral feasible flow \(\widehat f\) satisfying \(\lfloor f_e\rfloor\le\widehat f_e\le\lceil f_e\rceil\) for every arc \(e\).
\end{lemma}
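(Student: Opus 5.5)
The plan is to reduce the statement to the classical integrality theorem for circulations, by encoding the prescribed rounding window into the arc bounds. Concretely, we will replace every arc $e$ by the tightened bounds $\ell'_e:=\max\{\ell_e,\lfloor f_e\rfloor\}$ and $u'_e:=\min\{u_e,\lceil f_e\rceil\}$. These are integers because $\ell_e,u_e$ are integers and so are $\lfloor f_e\rfloor,\lceil f_e\rceil$. Since $\ell_e\le f_e\le u_e$, we have $\ell'_e\le f_e\le u'_e$, so $f$ remains a feasible fractional flow for the modified network with the same integral node demands. Any integral feasible flow $\widehat f$ in the modified network then satisfies both the original bounds and $\lfloor f_e\rfloor\le\widehat f_e\le\lceil f_e\rceil$, which is exactly the claim.

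It therefore suffices to show that a network with integral demands and integral lower and upper bounds that admits a feasible fractional flow also admits an integral one. We will prove this via total unimodularity. The feasible set is the polytope $\{x: Ax=d,\ \ell'\le x\le u'\}$, where $A$ is the node--arc incidence matrix of the directed graph. $A$ is totally unimodular, since each column has one $+1$ and one $-1$, and appending identity rows for the box constraints preserves total unimodularity. Because all right-hand sides are integral, every vertex of this polytope is integral. The polytope is nonempty (it contains $f$) and bounded (box constraints), so it has a vertex, and that vertex is the desired $\widehat f$.

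Alternatively, one can argue combinatorially. Take the arcs where $f$ is fractional; at every node the fractional parts of incident flows sum to an integer, since demands and the integral parts are integral. Hence a node incident to one fractional arc is incident to another, so the fractional arcs contain an undirected cycle. Pushing flow around this cycle, forward arcs $+\theta$ and backward arcs $-\theta$, until some arc hits its floor or ceiling keeps every value within its window and strictly reduces the number of fractional arcs. Iterating terminates with an integral flow.

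The only point needing care is the cycle-existence step, equivalently checking that the tightened bounds leave $f$ feasible. Both are immediate, so no real obstacle is expected; I would present the total-unimodularity argument as the main proof.
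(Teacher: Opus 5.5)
Your proposal is correct and matches the paper's proof: tighten each arc's bounds to the rounding window of $f_e$, observe that $f$ stays feasible, and take an integral vertex of the resulting nonempty bounded polytope via total unimodularity of the node--arc incidence matrix. Your $\max/\min$ with $\ell_e,u_e$ is harmless but redundant, since integrality and $\ell_e\le f_e\le u_e$ already give $\ell_e\le\lfloor f_e\rfloor$ and $\lceil f_e\rceil\le u_e$; the cycle-pushing alternative you sketch is also valid and gives a self-contained combinatorial proof of the same fact.
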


\begin{proof}
Tighten each arc's lower and upper bounds to \(\lfloor f_e\rfloor\) and \(\lceil f_e\rceil\), respectively. The point \(f\) remains feasible for this bounded transshipment instance. Since the node-arc incidence matrix is totally unimodular, the corresponding flow polyhedron has an integral vertex whenever the node demands and arc bounds are integral~\cite{ahuja1993networkflows}. That vertex is the desired \(\widehat f\).
\end{proof}

\begin{lemma}[Indivisible routing for a fixed input]
\label{lem:01-indivisible-realization}
Fix an input \(X\) and normalized public lengths \((\mathcal L_{ij}^r)\). Suppose \(X\) admits a correct divisible routing with flows \(f_{ij}^{\sigma,z,r}\) and states \(m_i^{\sigma,z,r}\) under these lengths, where \(z:=z_X\). Then the integer lengths \(\widehat L_{ij}^r:=\lceil(n/p)\mathcal L_{ij}^r\rceil+1\) admit a correct routing of the items with communication at most
\[
\frac np\sum_{r=1}^R\sum_{i\ne j}\mathcal L_{ij}^r+2p(p-1)R.
\]
In each round $r$, processor $j$ receives at most $(n/p)\sum_{i\ne j}\mathcal L_{ij}^r+2(p-1)$ words.
\end{lemma}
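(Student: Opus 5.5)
We scale the divisible routing to item counts and then round it with Lemma~\ref{lem:flow_integrality} on a time-expanded network.

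\textbf{Time-expanded network.} For each processor \(i\), round \(r\in[R+1]\) and key class \(\sigma\in\{0,1\}\), create a node \((i,r,\sigma)\). Add two kinds of arcs:
\begin{itemize}
\item a \emph{holding} arc \((i,r,\sigma)\to(i,r+1,\sigma)\), carrying the retained mass \(m_i^{\sigma,z,r}-\sum_{j\ne i}f_{ij}^{\sigma,z,r}\), which is nonnegative by (Avail);
\item a \emph{transfer} arc \((i,r,\sigma)\to(j,r+1,\sigma)\) for each \(j\ne i\), carrying \(f_{ij}^{\sigma,z,r}\).
\end{itemize}
Multiply all quantities by \(n/p\). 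The supply at \((i,1,\sigma)\) is then the integer count of class-\(\sigma\) items on processor \(i\): \((n/p)z_i\) or \((n/p)(1-z_i)\), integral because \(z=z_X\) comes from an actual input. The demand at \((i,R+1,\sigma)\) is \((n/p)\rho_i(z)\) or its complement, integral because the sorted output places an integral number of zeros on each processor. By (Flow), the scaled divisible routing is a feasible fractional flow with these integral supplies and demands. Integrality of the demands is the one point needing care.

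\textbf{Rounding.} The length constraint (Len) couples the two key classes on each arc and round, so it is not a single-arc bound. To capture it, route each transfer through a gadget: both arcs \((i,r,\sigma)\to(j,r+1,\sigma)\) pass through a shared intermediate arc of capacity \(\lceil (n/p)\mathcal L_{ij}^r\rceil\). Keeping class identity across a shared arc breaks the network structure, so instead I would round the two classes separately. Apply Lemma~\ref{lem:flow_integrality} with trivial integral bounds, \(0\) and a large integer, to the uncoupled network. This gives an integral flow \(\widehat f\) with \(\widehat f_e\le\lceil f_e\rceil\) on every arc. Hence on arc \(i\to j\) in round \(r\),
\[
\widehat f^{0}+\widehat f^{1}\le \lceil (n/p)f^0\rceil+\lceil (n/p)f^1\rceil< (n/p)\mathcal L_{ij}^r+2,
\]
and since the left side is an integer, it is at most \(\lceil (n/p)\mathcal L_{ij}^r\rceil+1=\widehat L_{ij}^r\). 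This slack of fewer than two words is exactly why the \(+1\) appears. Holding arcs are nonnegative integral flows, so availability holds.

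\textbf{Reading off the routing.} Decompose the integral flow into unit paths, one per item, following class-\(\sigma\) items from their initial processor to their terminal processor. Every arc is then used consistently with the items actually present. Pad each message to \(\widehat L_{ij}^r\).

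\textbf{Cost and load.} The total communication is
\[
\sum_{r,i\ne j}\widehat L_{ij}^r\le \frac np\sum_{r,i\ne j}\mathcal L_{ij}^r+2p(p-1)R.
\]
Processor \(j\) receives at most \((n/p)\sum_{i\ne j}\mathcal L_{ij}^r+2(p-1)\) words in round \(r\), since each of its \(p-1\) incoming arcs adds at most \(2\) words of rounding slack.
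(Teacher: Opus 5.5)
Your proposal is correct and follows essentially the same argument as the paper: a time-expanded network with one copy per key class, scaling by $n/p$, rounding each class separately to floor or ceiling via Lemma~\ref{lem:flow_integrality}, and fitting both classes into $\widehat L_{ij}^r$ by your strict-inequality-plus-integrality step (the paper writes this as $\lceil a\rceil+\lceil b\rceil\le\lceil a+b\rceil+1$). The cost and load bounds then follow, exactly as in the paper, from $\widehat L_{ij}^r\le (n/p)\mathcal L_{ij}^r+2$ summed over the $p(p-1)R$ communication arcs, or over the $p-1$ arcs entering a processor.
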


\begin{proof}
Form a time-expanded network with a node \((i,r)\) for every \(i\in[p]\) and \(r\in[R+1]\). For each \(r\in[R]\), give communication arc \((i,r)\to(j,r+1)\) capacity \(\widehat L_{ij}^r\) for \(i\ne j\), and storage arc \((i,r)\to(i,r+1)\) capacity \(n\). Only communication arcs contribute to communication cost. The hat distinguishes these route-local lengths from the message lengths \(L_{ij}^r\) of Section~\ref{sec:preliminaries}.

\paragraph{Integral routing for each key class.}
Fix \(\sigma\in\{0,1\}\). In an independent copy of this network, assign supply \((n/p)m_i^{\sigma,z,1}\) to \((i,1)\) and demand \((n/p)m_i^{\sigma,z,R+1}\) to \((i,R+1)\). These quantities are integral because they are the initial and prescribed terminal item counts. Scaling the divisible routing by \(n/p\) gives a feasible fractional transshipment. Communication arc \((i,r)\to(j,r+1)\) carries \((n/p)f_{ij}^{\sigma,z,r}\), while storage arc \((i,r)\to(i,r+1)\) carries \((n/p)(m_i^{\sigma,z,r}-\sum_{j\ne i}f_{ij}^{\sigma,z,r})\).
The storage-arc value is nonnegative by (Avail), and (Flow) gives flow conservation. Applying Lemma~\ref{lem:flow_integrality} separately for \(\sigma=0,1\) yields integral flows with the prescribed endpoints, each arc rounded to its floor or ceiling.

\paragraph{Shared messages.}
The two integral flows fit into the common public messages. On every communication arc, their combined integral flow is at most
\[
\left\lceil \frac np f_{ij}^{0,z,r}\right\rceil
+\left\lceil \frac np f_{ij}^{1,z,r}\right\rceil
\le
\left\lceil \frac np\bigl(f_{ij}^{0,z,r}+f_{ij}^{1,z,r}\bigr)\right\rceil+1
\le
\left\lceil \frac np \mathcal L_{ij}^r\right\rceil+1
=\widehat L_{ij}^r.
\]
Thus, in every round, the public messages can carry the prescribed 0-items and 1-items. Equal-key items may be assigned arbitrarily to the integral flow, and flow conservation ensures that no item is copied or discarded. The terminal counts are exactly the sorted output.

Since \(\widehat L_{ij}^r\le(n/p)\mathcal L_{ij}^r+2\), summing over all \(p(p-1)R\) communication arcs proves the communication bound. Summing the same inequality over $i\ne j$ proves the receive-load bound.
\end{proof}

\subsection{Compiling a Route-LP solution into an oblivious algorithm}
\label{app:01-generic-realization}

Fix \(\mathcal S\) as in Lemma~\ref{lem:01-route-lp-realization} and use the rounded public lengths \(\widehat L_{ij}^r\) from Lemma~\ref{lem:01-indivisible-realization}. The algorithm has three phases.

\begin{enumerate}
\item \emph{Count collection and route construction.} In one fixed-length round, every other processor sends its local zero count to processor~1. From the resulting \(z_X\), the coordinator constructs a divisible route by Lemma~\ref{lem:lp_superposition} and integral flows by Lemma~\ref{lem:01-indivisible-realization}.

\item \emph{Instruction distribution.} In one fixed-length round, the coordinator sends each other processor a complete table of $2R(p-1)$ item counts, one for each round, outgoing destination, and key class. The entries follow a fixed public order and include zero counts.

\item \emph{Routing rounds.} In each route round \(r\in[R]\), which is global BSP round \(r+2\), processor \(i\) sends exactly \(\widehat L_{ij}^r\) words to every \(j\ne i\), following its instructions. Thus the notation of Section~\ref{sec:preliminaries} satisfies \(L_{ij}^{r+2}=\widehat L_{ij}^r\) during this phase.
\end{enumerate}

\paragraph{Analysis.}
Lemmas~\ref{lem:lp_superposition} and~\ref{lem:01-indivisible-realization} give an integral routing with the initial placement, exact sorted terminal counts, local availability in every round, and load at most $(n/p)\overline{\operatorname{load}}(\mathcal S)+2(p-1)$. Thus no item is copied or discarded. The two coordination rounds have input-independent lengths, the route rounds use the public lengths \(\widehat L_{ij}^r\), and fixed tie-breaking makes the construction deterministic and perfectly communication-oblivious.

The execution uses $R+2$ rounds and communicates $nC(\mathcal S)+O(Rp^2)$ words, by Lemma~\ref{lem:01-indivisible-realization} and the $O(Rp^2)$ coordination cost. Count collection delivers $p-1$ words to the coordinator, and instruction distribution delivers $2R(p-1)$ words to each other processor. Taking the maximum with the routing load gives load at most $(n/p)\overline{\operatorname{load}}(\mathcal S)+O(Rp)$, proving Lemma~\ref{lem:01-route-lp-realization}.

\section{Bounds for the Cut-LP}
\label{app:01-supplement}

\subsection{Cut-LP lower bound}
\label{app:01-forward}

\begin{lemma}[Cut-LP lower bound]
\label{lem:01-cut-lower-bound}
For every \(p\ge2\), \(C_{\mathrm{cut}}(p)\ge p^{-1}\lfloor(4p-2)/3\rfloor\).
\end{lemma}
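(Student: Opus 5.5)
The plan is to use LP duality. By weak duality for the dual written in the proof of Lemma~\ref{lem:01-sparse-hard-family}, it suffices to exhibit nonnegative weights $y_U$ on nontrivial cuts such that every ordered pair $(i,j)$ is crossed (with $i\in U$, $j\notin U$) by total weight at most $1$, and such that $\sum_U \operatorname{dem}(U)\,y_U\ge\lfloor(4p-2)/3\rfloor$. Dividing by $p$ then gives the claim. So the proof reduces to two tasks: a combinatorial packing of cuts against the ordered-pair constraints, and a lower bound on $\operatorname{dem}(U)$ for each packed cut, obtained by writing down one pure input $z$ with $|b_z(U)|$ large.

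For the demand estimates I would use the following observations. For a set $U$ and a pure input whose zeros are exactly on $U$, $\rho(z)$ puts the zero mass on the prefix $[|U|]$. Hence $b_z(U)=|U|-|U\cap[|U|]|=|U\setminus[|U|]|$. Symmetrically, putting the zeros on $\overline U$ gives $|b_z(U)|=|\overline U\cap[p-|U|]^{c}|$, and similar expressions hold for other choices of the zero set. In particular, every singleton has demand $1$, and a set of size $k$ lying entirely outside the prefix $[k]$ has demand $k$. I would choose the cuts so that each one is ``far from its own sorted destination'' in this sense, making its demand equal to its size, or to the size of its complement.

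For the packing, singletons with weight $1$ already saturate all arcs and give only $p$. The extra $\approx p/3$ must therefore come from mixing small cuts and their complements with fractional weights, so that each arc is covered by weight $1$ but the average demand per unit weight exceeds $1$. My first attempt is to group the processors into consecutive triples $\{a,b,c\}$. Inside each triple I would put weight $1/2$ on the $2$-subsets or on the complements of $2$-subsets, arranged so that arcs inside a triple receive total weight $1$ from two different cuts. The target is a gain of $4$ units of $\operatorname{dem}\cdot y$ per triple instead of $3$. The $-2$ in $4p-2$ and the floor suggest boundary triples, where cuts near the first processors lose demand because they overlap their own sorted prefix, together with leftover processors when $3\nmid p$, which I would handle with singletons. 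I would sanity-check the construction against the table: $p=5$ should give $6$, $p=6$ should give $7$, and $p=8$ should give $10$.

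The main obstacle is interference between triples. A complement-type cut $\overline{\{d,e\}}$ from one group crosses arcs $(a,d)$ whose tail lies in every other group, so naively combining groups violates the arc constraints. I would therefore first try using only ``inward-type'' cuts, namely small sets $U$ whose crossing arcs point from $U$ outward, with arcs between different groups charged only to the singleton or pair cuts containing their tail. Within each group, the complement-type structure would be replaced by a fractional weighting that covers only the intra-group arcs. If this fails, I would instead look for a nested or chain-like family of prefix and interval cuts, guided by the rational dual certificates for $p\le 9$ that are reported in Table~\ref{tab:01-results}, and extract a periodic pattern with period $3$ from them.
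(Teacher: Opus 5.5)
You correctly reduce to weak duality for the dual in Lemma~\ref{lem:01-sparse-hard-family}, and your demand observations are right. But the proposal never produces a dual solution, and the one you sketch cannot exist. Summing the dual constraints over the $p(p-1)$ ordered pairs gives $\sum_U |U|(p-|U|)\,y_U\le p(p-1)$. Also $\operatorname{dem}(U)\le\min\{|U|,p-|U|\}$ by Lemma~\ref{lem:01-exact-cut-demand}. Hence if every supported cut has $\min\{|U|,p-|U|\}\le c$, the dual objective is at most $p(p-1)/(p-c)=p+O(c)$. Your triple scheme uses singletons, pairs and complements of pairs, so $c=2$ and the objective is at most $p(p-1)/(p-2)$. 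For $p=8$ this is $28/3<10$, so your own sanity check fails. For large $p$ the gain over $p$ is $1+o(1)$, not $p/3$; any scheme built from bounded-size groups does no better. So ``4 per triple'' fails for a global counting reason, not just because of interference. The fallback of cuts covering only intra-group arcs is also impossible: $U$ crosses every arc from $U$ to $\overline U$, and these all lie in one group $T$ only if $T=[p]$. Since $\operatorname{dem}(U)/(|U|(p-|U|))\le 1/\max\{|U|,p-|U|\}$, the extra $\approx p/3$ must come from cuts whose two sides both have linear size. Finally, the floor in $\lfloor(4p-2)/3\rfloor$ is not a boundary effect; it comes from optimizing an integer parameter near $(p+1)/3$.

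The missing idea is the symmetric family behind the paper's proof. The paper works in the primal by averaging cut inequalities, which is equivalent to the following dual certificate. Fix $1\le k\le\lfloor p/2\rfloor$ and $A\in[0,1]$. Put weight $A/\binom{p-k-1}{k-1}$ on every $k$-subset $V\subseteq\{k+1,\dots,p\}$ and on its complement, and weight $1-A$ on $[k]$ and on $\overline{[k]}$.
\begin{itemize}
\item Demands: each such $V$ is disjoint from its sorted target $[k]$, so $\operatorname{dem}(V)=\operatorname{dem}(\overline V)=k$, and likewise $\operatorname{dem}([k])=k$.
\item Arc loads: every arc between $[k]$ and $\overline{[k]}$, in either direction, receives total weight exactly $A+(1-A)=1$. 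An arc inside $\overline{[k]}$ receives $2A(p-2k)/(p-k-1)$, and an arc inside $[k]$ receives $0$.
\item Objective: $2(p-k)A+2k(1-A)$.
\end{itemize}
Choose $A=\min\{1,(p-k-1)/(2(p-2k))\}$, with $A=1$ when $p=2k$. This gives a feasible dual with value $\min\{p+k-1,\,2(p-k)\}$, and maximizing over $k$ yields $\lfloor(4p-2)/3\rfloor$.
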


\begin{proof}
For \(p=2\), the constraints for the two singleton cuts give \(w_{12},w_{21}\ge1\), and hence \(C(w)\ge1=p^{-1}\lfloor(4p-2)/3\rfloor\). Assume henceforth that \(p>2\).

Fix a feasible Cut-LP solution \(w\). Zeroing its diagonal preserves feasibility and can only decrease its cost. Choose an integer \(1\le k\le\lfloor p/2\rfloor\), and put \(U=[k]\). For each $k$-subset \(V\subseteq\overline U\), the pure input whose zero support is \(V\) has sorted zero target \(U\), and hence both \(w(V,\overline V)\ge k\) and \(w(\overline V,V)\ge k\). The cuts \(U\) and \(\overline U\) similarly give \(w(U,\overline U),w(\overline U,U)\ge k\).

Set \(w_{\mathrm{int}}:=w(\overline U,\overline U)\) and \(w_{\mathrm{cross}}:=\min\{w(U,\overline U),w(\overline U,U)\}\), so \(w_{\mathrm{cross}}\ge k\). Average both orientations of the preceding inequalities over all $k$-subsets \(V\subseteq\overline U\). In either orientation, a cross arc is counted in \(\binom{p-k-1}{k-1}\) cuts, and an arc internal to \(\overline U\) is counted in \(\binom{p-k-2}{k-1}\) cuts. Therefore
\[
\binom{p-k-1}{k-1}w_{\mathrm{cross}}+\binom{p-k-2}{k-1}w_{\mathrm{int}}
\ge
\binom{p-k}{k}k.
\]
Dividing by \(\binom{p-k-1}{k-1}\) gives \(w_{\mathrm{cross}}\ge p-k-(p-2k)w_{\mathrm{int}}/(p-k-1)\). Moreover, \(\sum_{i,j}w_{ij}\ge w_{\mathrm{int}}+w(U,\overline U)+w(\overline U,U)\ge w_{\mathrm{int}}+2w_{\mathrm{cross}}\). Combining the two lower bounds on \(w_{\mathrm{cross}}\) and minimizing over \(w_{\mathrm{int}}\ge0\) gives \(\sum_{i,j}w_{ij}\ge \min\{p+k-1,2(p-k)\}\).
The two values inside the minimum arise at the only relevant breakpoints: \(w_{\mathrm{int}}=0\), which gives \(2(p-k)\), and the point where the averaged lower bound on \(w_{\mathrm{cross}}\) meets \(w_{\mathrm{cross}}=k\), which gives \(p+k-1\). The same formula also covers \(p=2k\), when the averaged inequality is simply \(w_{\mathrm{cross}}\ge k\).

Finally, \(\max_{1\le k\le\lfloor p/2\rfloor}\min\{p+k-1,2(p-k)\}=\lfloor(4p-2)/3\rfloor\), so every feasible Cut-LP solution satisfies \(C(w)\ge\lfloor(4p-2)/3\rfloor/p\). Taking the minimum proves the bound on \(C_{\mathrm{cut}}(p)\).
\end{proof}

\subsection{Exact demand of a cut}

The ordered output has a particularly simple effect on cut demand. Recall that \(b_z(U)=\sum_{i\in U}b_i(z)\) and \(\operatorname{dem}(U)=\max_{z\in\mathcal F_{\mathrm{pure}}}\lvert b_z(U)\rvert\). For \(s\le p/2\), define the canonical left and right vertex subsets by \(V_s^{\mathrm L}:=[s]\) and \(V_s^{\mathrm R}:=\{p-s+1,\ldots,p\}\).

\begin{lemma}[Exact cut demand]
\label{lem:01-exact-cut-demand}
For every nontrivial \(U\subseteq[p]\) with \(s:=\lvert U\rvert\le p/2\),
\begin{equation}
\label{eq:01-exact-cut-demand}
\operatorname{dem}(U)
=s-\min\{\lvert U\cap V_s^{\mathrm L}\rvert,
           \lvert U\cap V_s^{\mathrm R}\rvert\}.
\end{equation}
In particular, \(s/2\le\operatorname{dem}(U)\le s\). Together with
\(\operatorname{dem}(U)=\operatorname{dem}(\overline U)\), this determines the demand of every nontrivial cut.
\end{lemma}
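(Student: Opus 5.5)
The plan is to compute \(\operatorname{dem}(U)\) directly from its definition, by parametrizing pure inputs by their zero support. A pure input \(z\in\mathcal F_{\mathrm{pure}}\) is determined by the set \(Z\subseteq[p]\) of all-zero processors. With \(k:=\lvert Z\rvert\), its sorted profile is \(\rho(z)=\mathbf 1_{[k]}\), since \(\beta=0\). Hence
\[
b_z(U)=\lvert U\cap Z\rvert-\lvert U\cap[k]\rvert .
\]
I will maximize \(b_z(U)\) and \(-b_z(U)\) separately, in two stages: first over \(Z\) with \(k\) fixed, then over \(k\in\{0,\ldots,p\}\). The demand is the larger of the two maxima.

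\emph{Positive side.} For fixed \(k\), the term \(\lvert U\cap Z\rvert\) is at most \(\min\{k,s\}\), and some \(Z\) of size \(k\) attains this. So the best value is \(f(k):=\min\{k,s\}-\lvert U\cap[k]\rvert\).
\begin{itemize}
\item For \(k\le s\), we have \(f(k)=\lvert[k]\setminus U\rvert\), which is nondecreasing in \(k\).
\item For \(k\ge s\), we have \(f(k)=s-\lvert U\cap[k]\rvert\), which is nonincreasing in \(k\).
\end{itemize}
Thus the maximum is \(f(s)=s-\lvert U\cap V_s^{\mathrm L}\rvert\).

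\emph{Negative side.} For fixed \(k\), the term \(\lvert U\cap Z\rvert\) is at least \(\max\{0,k-(p-s)\}\), and this is attainable by placing as much of \(Z\) as possible in \(\overline U\). So the best value is \(g(k):=\lvert U\cap[k]\rvert-\max\{0,k-p+s\}\).
\begin{itemize}
\item For \(k\le p-s\), \(g\) is nondecreasing in \(k\).
\item For \(k\ge p-s\), each step raises \(\lvert U\cap[k]\rvert\) by at most one while the subtracted term rises by exactly one, so \(g\) is nonincreasing.
\end{itemize}
Thus the maximum is \(g(p-s)=\lvert U\cap[p-s]\rvert=s-\lvert U\cap V_s^{\mathrm R}\rvert\).

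Taking the larger of the two maxima gives exactly \eqref{eq:01-exact-cut-demand}.

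For the bounds, \(\operatorname{dem}(U)\le s\) is immediate from the formula. Since \(s\le p/2\), the sets \(V_s^{\mathrm L}\) and \(V_s^{\mathrm R}\) are disjoint, so
\[
\lvert U\cap V_s^{\mathrm L}\rvert+\lvert U\cap V_s^{\mathrm R}\rvert\le s .
\]
The smaller of the two terms is therefore at most \(s/2\), which gives \(\operatorname{dem}(U)\ge s/2\). Any cut with \(\lvert U\rvert>p/2\) has \(\lvert\overline U\rvert<p/2\), so the identity \(\operatorname{dem}(U)=\operatorname{dem}(\overline U)\) from Section~\ref{sec:01-lp-characterization} handles all remaining cuts.

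The only delicate point is the monotonicity bookkeeping on the negative side, in particular checking that \(g\) cannot increase past \(k=p-s\). The step-by-step increment comparison above settles this.
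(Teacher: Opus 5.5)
Your proposal is correct and follows the paper's proof essentially step for step. It uses the same expression \(b_z(U)=\lvert U\cap Z\rvert-\lvert U\cap[k]\rvert\), the same unimodality of \(\min\{k,s\}-\lvert U\cap[k]\rvert\) and \(\lvert U\cap[k]\rvert-\max\{0,k-(p-s)\}\) peaking at \(k=s\) and \(k=p-s\), and the same disjointness argument for \(s/2\le\operatorname{dem}(U)\). The only cosmetic difference is that you obtain the matching lower bound from attainability at fixed \(k\), whereas the paper names the witnesses (zero support \(U\) and zero support \(\overline U\)); these are exactly your optimizers.
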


\begin{proof}
The pure input with zero support \(U\) has \(b_z(U)=s-\lvert U\cap V_s^{\mathrm L}\rvert\), whereas the one with zero support \(\overline U\) has \(b_z(U)=-(s-\lvert U\cap V_s^{\mathrm R}\rvert)\). Hence
\[
\operatorname{dem}(U)
\ge s-\min\{\lvert U\cap V_s^{\mathrm L}\rvert,
             \lvert U\cap V_s^{\mathrm R}\rvert\}.
\]
Conversely, for a pure input \(z\), put \(k:=\lVert z\rVert_1=\lvert\operatorname{supp}(z)\rvert\). Then \(b_z(U)=\lvert\operatorname{supp}(z)\cap U\rvert-\lvert U\cap[k]\rvert\). Its positive part is at most \(\min\{k,s\}-\lvert U\cap[k]\rvert\). This expression is nondecreasing for \(k\le s\) and nonincreasing for \(k\ge s\), so it is at most \(s-\lvert U\cap V_s^{\mathrm L}\rvert\). Similarly,
\[
-b_z(U)
\le \lvert U\cap[k]\rvert-\max\{0,k-(p-s)\},
\]
which is nondecreasing for \(k\le p-s\) and nonincreasing for \(k\ge p-s\), and is therefore at most \(\lvert U\cap[p-s]\rvert=s-\lvert U\cap V_s^{\mathrm R}\rvert\). This proves~\eqref{eq:01-exact-cut-demand}.

Finally, \(V_s^{\mathrm L}\) and \(V_s^{\mathrm R}\) are disjoint, so \(\lvert U\cap V_s^{\mathrm L}\rvert+\lvert U\cap V_s^{\mathrm R}\rvert\le s\), giving \(s/2\le\operatorname{dem}(U)\le s\).
\end{proof}

\subsection{Cut-LP upper bound}
\label{app:01-cut-bounds}

\begin{lemma}[Cut-LP upper bound]
\label{lem:01-cut-upper-bound}
For every $p\ge2$, $C_{\mathrm{cut}}(p)\le3/2$.
Moreover, for every $p\ge3$,
$C_{\mathrm{cut}}(p)\le2-2/p-1/6$.
\end{lemma}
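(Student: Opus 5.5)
The plan is to exhibit explicit feasible solutions of \eqref{lp:01-cut} and check feasibility with Lemma~\ref{lem:01-exact-cut-demand}. That lemma reduces every constraint to a condition on a single set \(U\). Specifically, take \(U\) with \(s=\lvert U\rvert\le p/2\); the complementary orientation is handled by \(\operatorname{dem}(U)=\operatorname{dem}(\overline U)\), together with making \(w\) symmetric so that \(w(U,\overline U)=w(\overline U,U)\). Write \(a:=\lvert U\cap V_s^{\mathrm L}\rvert\), \(b:=\lvert U\cap V_s^{\mathrm R}\rvert\), and \(c:=s-a-b\) for the part of \(U\) lying in the middle block \(\{s+1,\ldots,p-s\}\). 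The target inequality is then \(w(U,\overline U)\ge s-\min\{a,b\}\).

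The candidate family combines two ingredients. The first is a uniform part \(w_{ij}=\alpha\) for all \(i\ne j\), giving \(w(U,\overline U)=\alpha s(p-s)\ge \alpha s p/2\). The second is a mirror part placing weight \(\beta\) on both arcs \(i\leftrightarrow p+1-i\). The mirror is chosen because \(V_s^{\mathrm L}\) and \(V_s^{\mathrm R}\) are mirror images. An element of \(U\cap V_s^{\mathrm L}\) whose mirror is also in \(U\) must pair with an element of \(U\cap V_s^{\mathrm R}\), so at most \(\min\{a,b\}\) such pairs are unsplit. Consequently the number of split mirror pairs is at least \(s-2\min\{a,b\}-2m\), where \(m\le c/2\) counts mirror pairs lying entirely inside the middle block. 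The cost of this family is \(\alpha(p-1)+\beta\).

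For \(C_{\mathrm{cut}}\le 3/2\), first try \(\alpha=1/p\) and \(\beta=1/2\), of cost \(3/2-1/p\). The uniform part then contributes \(s/2+s(p-2s)/(2p)\) and the mirror part at least \(\tfrac12(s-2\min\{a,b\}-2m)\). The required inequality reduces to \(m\le s(p-2s)/(2p)\). The main obstacle is exactly these sets \(U\) that concentrate in mirror-symmetric pairs in the middle, since for them the mirror arcs are uncut. There are two intended fixes. One is to use the leftover slack \(1/p\) in the budget to raise \(\alpha\) slightly. The other is to replace the single mirror matching by a blend of the "shifted mirrors" \(i\mapsto p+1-i\) and \(i\mapsto p+2-i\) (indices taken cyclically), so that no middle-symmetric \(U\) avoids all of them. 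In the worst case \(a=b=0\), \(U\) lies in the middle, \(s\le p/3\) and \(\operatorname{dem}(U)=s\), and the uniform part alone already gives \(s(p-s)/p\ge 2s/3\). So only a \(1/3\) fraction of \(s\) must come from the mirror arcs. The plan is to verify this by a short case split on whether \(s\le p/3\) or \(s>p/3\), using \(c\le p-2s\).

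For \(C_{\mathrm{cut}}(p)\le 2-2/p-1/6\) with \(p\ge3\): when \(p\ge 6\), \(3/2\le 2-2/p-1/6\) already holds and nothing more is needed. This leaves \(p\in\{3,4,5\}\). For these three cases I would write down explicit small symmetric solutions, for instance uniform weight \(\alpha\) slightly below \(2/p\) with the deficit compensated by mirror arcs. Their feasibility can be checked by enumerating all cuts \(U\) with \(\lvert U\rvert\le p/2\) via \eqref{eq:01-exact-cut-demand}. The values in Table~\ref{tab:01-results}, namely \(1,1,6/5\) for \(p=3,4,5\), show there is ample room below \(2-2/p-1/6\).
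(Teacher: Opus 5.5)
\paragraph{Verdict: there is a genuine gap in the main claim.}
Several parts of your plan are fine:
\begin{itemize}
\item reducing each constraint to a single set $U$ with $|U|\le p/2$, using symmetric $w$ and Lemma~\ref{lem:01-exact-cut-demand};
\item noting that $3/2\le 2-2/p-1/6$ once $p\ge 6$;
\item certifying $p\in\{3,4,5\}$ by finite enumeration.
\end{itemize}
The last item even works inside your family: uniform weight $1/2$ for $p=3$, and uniform weight $1/4$ plus mirror weight $1/2$ for $p=4,5$, are feasible with costs $1$, $5/4$, $7/5$. The paper instead cites the certified values in Table~\ref{tab:01-results}.

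The gap is $C_{\mathrm{cut}}(p)\le 3/2$. You never exhibit a feasible solution, and the uniform-plus-mirror family cannot supply one for large $p$, whichever fix you use.
\begin{itemize}
\item \emph{Middle-third cut.} Take $p=3t$ and $U=\{t+1,\dots,2t\}$, so $a=b=0$ and $\operatorname{dem}(U)=t$. The reflection $i\mapsto p+1-i$ maps $U$ onto itself, and $i\mapsto p+2-i$ splits a single pair. So any blend of them contributes at most $\beta$ across this cut. The uniform part must then give $2\alpha t^2\ge t-\beta$, i.e.\ $\alpha\ge 3/(2p)-O(1/p^2)$, since $\beta=O(1)$ within budget. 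This alone costs $\alpha(p-1)=3/2-O(1/p)$.
\item \emph{Half cut.} For even $p$, $U=[p/2]$ has demand $p/2$. The uniform part then supplies only about $3p/8$, and the mirror part at most $\beta p/2$. Hence $\beta$ must be at least about $1/4$, and every member of the family costs at least $7/4-O(1/p)$.
\end{itemize}

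Your two fixes fail on the middle third for concrete reasons. With $\alpha=1/p$ that cut lacks $s^2/p=p/9$ units of capacity. Spending the leftover budget $1/p$ on $\alpha$ adds less than one unit there. Blending nearby reflections does not help either, since they map the centred middle interval essentially onto itself and give $O(1)$ rather than $\Omega(s)$ split pairs. Your claim that only $s/3$ must come from the mirror arcs is exactly the step that cannot be met.

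\paragraph{The missing idea.}
You need position-graded weights that link the middle of the order to both ends, and link the two ends to each other even more strongly. The paper does this as follows.
\begin{itemize}
\item Place processor $i$ on $[3(i-1)/p,3i/p)\subseteq[0,3)$ and cut $[0,3)$ into three unit blocks, with overlaps $\mu_{i,a}$.
\item Set $w_{ij}=(3p/16)\sum_{a,b}|a-b|\mu_{i,a}\mu_{j,b}$.
\item Since $\sum_{a,b}|a-b|=8$, the cost is at most $3/2$. For $3\mid p$, the middle third now has capacity $3p/8\ge p/3$ across its cut.
\end{itemize}
Feasibility for a general $U$ is proved in three steps:
\begin{enumerate}
\item Write the cut capacity as an explicit quadratic in the masses $\mu_1,\mu_3$ that $U$ places in the outer blocks.
\item Bound the normalized demand $3\operatorname{dem}(U)/p$ via Lemma~\ref{lem:01-exact-cut-demand}, by $\min\{\bar s,1-\mu_1\}$ if $\bar s\le1$ and by $\min\{\bar s-\mu_1,2-\bar s+\mu_3\}$ if $\bar s\ge1$.
\item Compare the two by a case split on $\bar s=3s/p$.
\end{enumerate}
Without a construction of this kind, the first statement is unproved. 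So is the case $p\ge 6$ of the second, which depends on it.
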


\begin{proof}
\textbf{Construction.}
Partition $[0,3)$ into the unit intervals
$I_a^{\mathrm{unit}}=[a-1,a)$ for $a\in[3]$, and represent
processor $i$ by $I_i^{\mathrm{proc}}=[3(i-1)/p,3i/p)$.
Put $\mu_{i,a}:=|I_i^{\mathrm{proc}}\cap I_a^{\mathrm{unit}}|$.
Set $w_{ii}=0$, and for $i\ne j$ define
\[
w_{ij}:=\frac{3p}{16}
\sum_{a,b\in[3]}|a-b|\mu_{i,a}\mu_{j,b}.
\]
These capacities are nonnegative and symmetric.

\par\smallskip\noindent\textbf{Objective value.}
The processor intervals partition $[0,3)$, so
$\sum_i\mu_{i,a}=1$ for every $a\in[3]$.
Also, $\sum_{a,b\in[3]}|a-b|=8$.
Including the nonnegative diagonal terms in the objective
would therefore give $3/2$.
Since those terms are omitted, $C(w)\le3/2$.

\par\smallskip\noindent\textbf{Cut constraints.}
Next we verify the constraint for every nontrivial cut $U$.
By symmetry of $w$ and
$\operatorname{dem}(U)=\operatorname{dem}(\overline U)$,
we may take its cut size $s:=|U|$ to satisfy $s\le p/2$.
Let $I_U:=\bigcup_{i\in U}I_i^{\mathrm{proc}}$.
For $a\in[3]$, define the mass of this cut in the
$a$-th unit interval by
\[
\mu_{U,a}:=\sum_{i\in U}\mu_{i,a}
          =|I_U\cap I_a^{\mathrm{unit}}|.
\]
The cut $U$ remains fixed below, so we abbreviate
$\mu_{U,a}$ to $\mu_a$.

Write $\bar s:=3s/p$ for the normalized cut size;
it is also the length of $I_U$.
Set $\bar w(U):=3w(U,\overline U)/p$ and
$\overline{\operatorname{dem}}(U)
:=3\operatorname{dem}(U)/p$.
Thus $\mu_1+\mu_2+\mu_3=\bar s\le3/2$.

Because the complement of $U$ occupies length
$1-\mu_b$ in $I_b^{\mathrm{unit}}$, summing the
capacities across the cut gives
\[
\begin{aligned}
\bar w(U)
&=\frac9{16}\sum_{a,b\in[3]}
  |a-b|\mu_a(1-\mu_b)\\
&=\frac9{16}\bigl(
  2\bar s+2\mu_1^2+2\mu_3^2
  +(1-2\bar s)(\mu_1+\mu_3)
  \bigr).
\end{aligned}
\]

The first and last $s$ processor intervals form
$[0,\bar s)$ and $[3-\bar s,3)$.
By Lemma~\ref{lem:01-exact-cut-demand},
\[
\overline{\operatorname{dem}}(U)
=\bar s-\min\bigl\{
  |I_U\cap[0,\bar s)|,\,
  |I_U\cap[3-\bar s,3)|
  \bigr\}.
\]
Reflection of the processor order preserves both
the construction and the demand.
We may therefore assume that the left intersection
is the smaller one.

If $\bar s\le1$, at most $1-\bar s$ of the first
unit interval lies outside $[0,\bar s)$.
If $\bar s\ge1$, this prefix contains the entire
first unit interval, while at most $2-\bar s$
of the second lies outside it.
Consequently, using $\mu_2=\bar s-\mu_1-\mu_3$,
we obtain
\[
\overline{\operatorname{dem}}(U)\le
\begin{cases}
\min\{\bar s,1-\mu_1\},&\bar s\le1,\\
\min\{\bar s-\mu_1,\,2-\bar s+\mu_3\},&\bar s\ge1.
\end{cases}
\]

\par\smallskip\noindent\textbf{Feasibility.}
It remains to show
$\bar w(U)\ge\overline{\operatorname{dem}}(U)$.
Put $\xi_s:=(2\bar s-1)/4$ and consider
the following ranges of $\bar s$.

\begin{enumerate}
\item \emph{$0<\bar s\le1$.}
  \begin{enumerate}
  \item Suppose $\mu_1\le1-\bar s$.
    The applicable demand bound is $\bar s$, and
    completing squares gives
    \[
    \bar w(U)-\bar s
    =\frac{-36\bar s^2+44\bar s-9}{64}
     +\frac98\bigl(
       (\mu_1-\xi_s)^2+(\mu_3-\xi_s)^2
       \bigr).
    \]
    If $\bar s\le1/2$, then $\xi_s\le0$,
    and this expression is at least $\bar s/8$.
    For $1/2\le\bar s\le5/6$, its first term
    is concave and positive at both endpoints.
    For $5/6\le\bar s\le1$, the constraint
    $\mu_1\le1-\bar s\le\xi_s$ gives the lower bound
    \[
    \frac{252\bar s^2-452\bar s+207}{128}>0.
    \]
    The numerator has positive leading coefficient
    and negative discriminant.

  \item Suppose $\mu_1\ge1-\bar s$.
    The applicable demand bound is $1-\mu_1$.
    The quadratic $\bar w(U)-(1-\mu_1)$ has center
    $(\xi_s-4/9,\xi_s)$, whose first coordinate
    is smaller than $1-\bar s$.
    Its minimum over the half-plane
    $\mu_1\ge1-\bar s$ occurs at
    $(1-\bar s,\xi_s)$ and gives the same
    positive lower bound as above.
  \end{enumerate}

\item \emph{$1\le\bar s\le7/6$.}
  The demand bound is at most $\bar s-\mu_1$.
  The quadratic
  $\bar w(U)-(\bar s-\mu_1)$ has center
  $(\xi_s-4/9,\xi_s)$.
  Since $\xi_s<4/9$ and $\mu_1\ge0$,
  completing squares gives
  \[
  \bar w(U)-(\bar s-\mu_1)
  \ge\frac{-36\bar s^2+52\bar s-9}{128}>0.
  \]
  The numerator is concave and positive
  at both endpoints of this range.

\item \emph{$7/6\le\bar s\le3/2$.}
  Put $v:=\mu_1+\mu_3$, the selected mass
  in the two outer unit intervals.
  Direct expansion gives
  \[
  \begin{aligned}
  &\bar w(U)
   -\min\{\bar s-\mu_1,\,2-\bar s+\mu_3\}
   -\frac1{72}\\
  &\qquad=
   \frac9{16}
     \left(\mu_3-\mu_1-\frac49\right)^2
   +\frac9{16}(1-v)(2\bar s-2-v)
   +\frac12|2\bar s-2-v|.
  \end{aligned}
  \]
  \begin{enumerate}
  \item If $v\le2\bar s-2$, every term on
    the right is nonnegative, since
    $v\le2\bar s-2\le1$.

  \item If $v\ge2\bar s-2$, the last two
    terms combine into
    $(v-2\bar s+2)(9v-1)/16$.
    This is nonnegative because
    $v\ge2\bar s-2\ge1/3$.
  \end{enumerate}
\end{enumerate}

Thus $\bar w(U)\ge
\overline{\operatorname{dem}}(U)$ for every
nontrivial cut $U$. Hence $w$ is Cut-LP feasible,
and $C_{\mathrm{cut}}(p)\le C(w)\le3/2$.

\par\smallskip\noindent\textbf{Uniform gap.}
For $p\ge6$, the construction gives
$(2-2/p)-C_{\mathrm{cut}}(p)
\ge1/2-2/p\ge1/6$.
For $p=3,4,5$, the certified values
$1,1,6/5$ in Table~\ref{tab:01-results}
give the same claimed gap.
\end{proof}

\end{document}